\def\ps@pprintTitle{%
 \let\@oddhead\@empty
 \let\@evenhead\@empty
 \def\@oddfoot{}%
 \let\@evenfoot\@oddfoot}
\newcolumntype{d}[1]{D{.}{.}{#1} }
\newcolumntype{d}[1]{D{.}{.}{#1} }
\newcommand{\ie}{i.e., }
\newcommand{\eg}{e.g., }
\newcommand{\wrt}{w.r.t.\ }
\let\oldnl\nl
\newcommand{\nonl}{\renewcommand{\nl}{\let\nl\oldnl}}
\newcommand*\sizedcircled[2]{\tikz[baseline=(char.base)]{ \node[shape=circle,draw,inner sep=2pt, scale=#1] (char) {\textbf{#2}}; }}
\newcommand{\RR}{\mathbf{R}\xspace}
\newcommand{\N}{\mathbf{N}\xspace}
\newcommand{\Z}{\mathbf{Z}\xspace}
\newcommand{\STN}{STN\xspace}
\newcommand{\MPG}{MPG\xspace}
\newcommand{\HTN}{HyTN\xspace}
\newcommand{\HTNC}{HyTN-Consistency\xspace}
\newcommand{\CSTP}{CSTP\xspace}
\newcommand{\CSTN}{CSTN\xspace}
\newcommand{\DCC}{DC\xspace}
\newcommand{\eDCC}{$\epsilon$-DC\xspace}
\newcommand{\oDCC}{$\pi$-DC\xspace}
\newcommand{\Hst}{\emph{Hst}\xspace}
\newcommand{\oHst}{\emph{$\pi$-Hst}\xspace}
\newcommand{\Con}{\emph{Con}\xspace}
\newcommand{\Sub}{\emph{Sub}\xspace}
\def\Ord{{\cal O}}
\def\H{{\cal H}}
\def\A{{\cal A}}
\def\S{{\cal S}}
\newcommand{\figref}[1]{Fig.~\ref{#1}}
\newtheorem{Thm}{Theorem}%[section]
\newtheorem{Lem}[]{Lemma}
\newtheorem{Prop}[]{Proposition}
\theoremstyle{plain}
\newtheorem{Def}{Definition}
\newtheorem{Rem}{Remark}
\newtheorem{Ex}{Example}
\newcommand{\removelatexerror}{\let\@latex@error\@gobble}
\tikzstyle{node}=[circle,draw,inner sep=2pt,transform shape,minimum size=1.75em]
\tikzstyle{task}=[draw,rectangle,inner sep=1.5pt,transform shape
\tikzstyle{connector}=[draw,diamond,shape aspect=1,inner sep=1pt,transform shape
\tikzstyle{StartCase}=[circle,draw,minimum size=.75cm,transform shape]
\tikzstyle{EndCase}=[circle,draw,ultra thick,minimum size=.75cm,transform shape]
\tikzstyle{smallLabel}=[font=\sffamily\normalsize,inner sep=1pt,transform shape]
\tikzstyle{timeLabel}=[smallLabel,midway,transform shape]
\tikzstyle{minWidth}=[text width=.9cm]
\tikzstyle{info}=[rounded corners,fill=yellow,text width=1cm,text centered,inner sep=1pt]
\tikzstyle{infoLine}=[thin,decorate,decoration={snake,amplitude=.4mm, segment length=2mm, post length=1mm}]
\tikzstyle{edgeLabel}=[font=\tiny,sloped]
\tikzstyle{infoRow}=[rounded corners,fill=yellow,inner sep=1pt]
\tikzstyle{punto}=[circle,draw,fill=black,minimum size=2bp,inner sep=0pt,outer sep=0pt]
\tikzstyle{blackNode}=[fill=black!30]
\tikzstyle{crosses}=[decorate,decoration={name=crosses,segment length=2mm,post length=2mm}]
\tikzstyle{every picture}=[>=latex]
\tikzstyle{every label}=[inner sep=2pt]
\begin{document}
\begin{frontmatter}

\title{Instantaneous Reaction-Time in Dynamic Consistency Checking \\
  of Conditional Simple Temporal Networks \\
  \normalsize -- Extended Version with an Improved Upper Bound -- }

\author[tn]{Massimo Cairo}
\ead{massimo.cairo@unitn.it}

\author[vr]{Carlo Comin}
\ead{carlo.comin@univr.it}

\author[vr]{Romeo Rizzi}
\ead{romeo.rizzi@univr.it}

\address[tn]{Department of Mathematics, University of Trento, Trento, Italy}
\address[vr]{Department of Computer Science, University of Verona, Verona, Italy}

\begin{abstract}
Conditional Simple Temporal Networks (\CSTN{s}) is a constraint based graph-formalism for conditional temporal planning.
Three notions of consistency arise for \CSTN{s}: weak, strong, and \emph{dynamic}.
Dynamic-Consistency (DC) is the most interesting notion, but it is also the most challenging.
In order to address the DC-Checking problem, \citet{CR15} introduced $\varepsilon$-DC
(a refined, more realistic, notion of DC), and provided an algorithmic solution to~it.
Next, given that DC implies $\varepsilon$-DC for some sufficiently small $\varepsilon>0$,
and that for every $\varepsilon>0$ it holds that $\varepsilon$-DC implies DC,
it was offered a sharp lower bounding analysis on the critical value of the \emph{reaction-time} $\hat\varepsilon$ under which the two notions coincide.
This delivered the first (pseudo) singly-exponential time algorithm for the DC-Checking of \CSTN{s}.
However, the $\varepsilon$-DC notion is interesting per se, and the $\varepsilon$-DC-Checking algorithm of~\citet{CR15}
rests on the assumption that the reaction-time satisfies $\varepsilon > 0$; leaving unsolved the question of what happens when $\varepsilon = 0$.
In this work, we introduce and study \oDCC, a sound notion of DC with an \emph{instantaneous} reaction-time
(\ie one in which the planner can react to any observation at the same instant of time in which the observation is made).
Firstly, we demonstrate by a counter-example that \oDCC is not equivalent to $0$-DC,
and that $0$-DC is actually inadequate for modelling DC with an instantaneous reaction-time.
This shows that the main results obtained in our previous work do not apply directly, as they were formulated,
to the case of $\varepsilon=0$. Motivated by this observation, as a second contribution,
our previous tools are extended in order to handle \oDCC, and the notion of \emph{ps-tree} is introduced,
also pointing out a relationship between \oDCC and consistency of Hyper Temporal Networks.
Thirdly, a simple \emph{reduction} from \oDCC to (classical) \DCC is identified.
This allows us to design and to analyze the first sound-and-complete \oDCC-Checking algorithm.
Remarkably, the time complexity of the proposed algorithm remains (pseudo) singly-exponential	in the number of propositional letters.
Finally, it is observed that the technique can be leveraged to actually reduce from \oDCC to $1$-\DCC,
this allows us to further improve the exponents in the time complexity of \oDCC-Checking.
\end{abstract}
\begin{keyword}
Conditional Simple Temporal Networks, Dynamic-Consistency, $\epsilon$-Dynamic-Consistency, Instantaneous Reaction-Time,
	Hyper Temporal Networks, Singly-Exponential Time Algorithm.
\end{keyword}

\end{frontmatter}

\section{Introduction}\label{sect:introduction}
In \emph{temporal planning} and \emph{scheduling}, temporal networks have long been employed for the representation, validation, and execution
of plans affected by temporal constraints \citep{Bettini2002,Chinn2000,CombiGMP14,CombiP09,CombiP10,Eder2000}.
The specification of a temporal network consists of two main components:
\emph{time-points} and \emph{temporal constraints}. Time-points are variables whose domain is the set of real numbers; temporal
constraints are linear inequalities that specify lower and upper bounds on the temporal distance between pairs of time-points
\citep{DechterMP91}. The \emph{execution} of a time-point (\ie the assignment of a real value to it) models the (instantaneous)
occurrence of an event. An agent executing a temporal network aims to execute its time-points so that all relevant temporal constraints are satisfied.

\emph{Simple Temporal Networks} (STNs) are the most studied and used kind of temporal networks due to simplicity, efficiency, and general applicability~\citep{DechterMP91}.
An \STN can be seen as a directed weighted graph, where nodes represent events to be scheduled in time (time-points) and
arcs represent temporal distance (difference) constraints between pairs of events.
An STN is typically used in planning applications where all time-points must be executed (\ie
must play their role in the plan) and where the agent controls their execution.
An STN is {\em consistent} if the network can be executed in such a way as to satisfy all of its temporal constraints.

Since STNs were proposed, several authors have introduced extensions to STNs to augment their expressiveness.
This work is focused on the \emph{Conditional Simple Temporal Problem (\CSTP)}~\citep{TVP2003}
and its graph-based counterpart \emph{Conditional Simple Temporal Network (\CSTN)}~\citep{HPC12},
a constraint-based model for conditional temporal planning. The \CSTN formalism extends {\STN}s in that:
(1) some of the nodes are called \emph{observation time-points} and to each of them is associated a propositional letter, to be disclosed only at execution time;
(2) \emph{labels} (\ie conjunctions over the literals) are attached to all nodes \emph{and} constraints,
to indicate the scenarios in which each of them is required. The planning agent (planner) must schedule all the required nodes,
while respecting all the required temporal constraints among them.
This extended framework allows for the off-line construction of conditional plans that are guaranteed to satisfy complex
temporal constraints. Importantly, this can be achieved even while allowing the decisions about the precise timing of actions
to be postponed until execution time, in a least-commitment manner, thereby adding flexibility and making it possible to
adapt the plan dynamically, during execution, in response to the observations that are made~\citep{TVP2003}.

Three notions of consistency arise for \CSTN{s}: weak, strong, and \emph{dynamic}.
Dynamic-Consistency (\DCC) is the most interesting one, as it requires the existence of conditional plans where
decisions about the precise timing of actions are postponed until execution time, but anyhow guaranteeing that
  all of the relevant constraints will be ultimately satisfied.
Still, \DCC is the most challenging one and it was conjectured to be hard to assess~\citep{TVP2003}.

In~\citet{CR15}, it was unveiled that \emph{Hyper Temporal Networks (\HTN{s})} and \emph{Mean-Payoff Games (\MPG{s})}
  are a natural underlying combinatorial model for the DC-Checking of \CSTN{s}.
Indeed, \STN{s} have been recently generalized into \emph{Hyper Temporal Networks (\HTN{s})}~\citep{CPR2014, CPR2016}, by considering
weighted directed hypergraphs, where each hyperarc models a disjunctive temporal constraint called a \emph{hyperconstraint}.
In~\citet{CPR2014, CPR2016}, the computational equivalence between checking the consistency of \HTN{s} and determining
the winning regions in \MPG{s} was also pointed out; the approach was shown to be viable and robust thanks to some extensive experimental evaluations.
Recall that \MPG{s}~\citep{EhrenfeuchtMycielski:1979, ZwickPaterson:1996, brim2011faster, CominR17}
are a family of two-player infinite games played on finite graphs,
with direct and important applications in model-checking and formal verification of finite-state reactive systems~\citep{Gradel2002};
also, they are known for having theoretical interest in computational complexity
because of their special place among the few (natural) problems lying in $\NP\cap\coNP$.

In~\citet{CR15}, the first (pseudo) singly-exponential time algorithm for solving the DC-Checking problem appeared,
also producing a dynamic execution strategy whenever the input \CSTN is DC.
For this, they introduced $\varepsilon$-DC (a refined, more realistic, notion of DC), and provided the first algorithmic solution to it.
Next, given that DC implies $\varepsilon$-DC for some sufficiently small $\varepsilon>0$,
and that for every $\varepsilon>0$ it holds that $\varepsilon$-DC implies DC, it was offered a sharp lower bounding analysis on
the critical value of the \emph{reaction-time} $\hat\varepsilon$ under which the two notions coincide.
This delivered the first (pseudo) singly-exponential time algorithm for the DC-Checking of \CSTN.
However, the $\varepsilon$-DC notion is interesting per se, and the $\varepsilon$-DC-Checking algorithm in~\citet{CR15}
rests on the assumption that the reaction-time satisfies $\varepsilon > 0$; leaving unsolved the question of what happens when $\varepsilon = 0$.

\subsection{Contribution}
In this work we introduce and study \oDCC, a sound notion of DC with an \emph{instantaneous} reaction-time
(one in which the planner can react to any observation at the same instant of time in which the observation is made).

Our contributions concerning \oDCC are summarized below.
\begin{enumerate}
\item We provide a neat counter-example showing that \oDCC is not just the $\varepsilon=0$ special case  of $\varepsilon$-DC.
Moreover, that the semantics of the $\varepsilon=0$ special case  of $\varepsilon$-DC is conceptually flawed;
  thus, to the best of our knowledge,
    \oDCC provides the first sound formalization of the semantics of instantaneous reaction-time DC.

This also implies that the algorithmic results obtained in~\citet{CR15} do not apply directly
to the study of those situations where the planner is allowed to react instantaneously,
  so the algorithmics needs to be extended and adapted to the proposed $\pi$-DC semantics in a non-trivial way.
\item Motivated by this observation, as a second contribution, we extend the previous formulation \citep{CR15}
to capture a sound notion of DC with an instantaneous reaction-time.
It turns out that \oDCC needs to consider an additional inner ordering among all the observation nodes that occur at the same instant of time.
\item The notion of \emph{permutation-scenario tree (ps-tree)} is introduced to reflect the ordered structure of a \oDCC execution strategy,
also allowing us to point out a relationship between \oDCC and consistency of \HTN{s}.
\item Thirdly, a simple \emph{reduction} from \oDCC to (classical) \DCC is identified.
This allows us to design and to analyze the first sound-and-complete \oDCC-Checking procedure.
Remarkably the time complexity of the proposed algorithm remains (pseudo) singly-exponential in the number $|P|$ of propositional letters.
\item It is further observed that the same technique can be leveraged to actually reduce from
  \oDCC to $\epsilon$-\DCC with $\epsilon=1$, \ie to reduce \oDCC to $1$-\DCC;
this allows us to further improve the exponents in
the time complexity of the proposed \oDCC-Checking algorithm (see Theorem~\ref{thm:mainresult_pi-DC} for the actual time bounds).
\end{enumerate}

\paragraph{Delta-Contribution}
Notice that a preliminary version of this article (\citet{CCR16}) appeared in the proceedings of the
\emph{23rd International Symposium on Temporal Representation and Reasoning (TIME 2016)}.

Here the previous results are improved and the presentation is extended in many aspects, including:
\begin{enumerate}
\item the time complexity of \oDCC checking (Algorithm~\ref{algo:check_pi-DC}) is further improved by a factor
of $|\Sigma_P|\cdot |V|$ \wrt \citet{CCR16} (see Theorem~\ref{thm:mainresult_pi-DC}),
where $|\Sigma_P|$ is the number of possible execution scenarios (\ie $|\Sigma_P|\leq 2^{|P|}$,
  notice that this is an exponential factor)
and $|V|$ is the total number of time-points in the input \CSTN; the improvement is achieved by reducing \oDCC to $1$-\DCC,
and then by solving $1$-DC with the $\epsilon$-DC checking algorithm devised in~\citet{CR15}.
\item it is offered an updated account of the current literature that is related to the contributions offered in this work,
  this is done in the next subsection which is devoted to the \emph{Motivations and Related Works};
\item new figures and examples are added to better clarify the technical constructions and key arguments;
\item new remarks are added to point out technical key facts,
  \eg that we choose to focus on integer weighted networks for ease of notation,
  meanwhile supporting rational weights as a straightforward generalization.
\item new paragraphs are devoted to provide the underlying intuitions so that
to better guide the reader among the examples, technical definitions, lemmata and theorems.
\end{enumerate}
\subsection{Motivations and Related Works}\label{sect:relatedworks}
This section discusses of some related approaches offered in the current literature.
CSTNs have been implicitly proposed for the first time in~\citet{TVP2003},
where authors formally introduce the Conditional Simple Temporal Problem (called CTP instead of CSTP). CSTP consists of determining whether a given CSTN admits a viable dynamic execution strategy. In the paper there is no a formal definition of CSTN and propositional labels are associated only to nodes, while they are implicit for constraints\slash edges.
Moreover, authors showed how to solve a CTP, by encoding it as a meta-level Disjunctive Temporal Network (DTN) and feeding it to an off-the-shelf DTN solver. Although of theoretical interest, this approach is not practical because the CSTP-to-DTN encoding has exponential size and, on top of that, the DTN solver runs in exponential time. To our knowledge, this approach has never been implemented or empirically evaluated.
Finally, authors discussed some supplementary reasonable assumptions that any well-defined CSTP must satisfy without formalizing them.

Later on, those conditions have been analyzed and formalized in \citet{HPC12} and in \citet{HPC15}, leading to the sound notion of Conditional Simple Temporal Network (CSTN).
In more details, in \citet{HPC15} both time-point (nodes) and constraints (edges) of a CSTN can have propositional labels---for specifying in which scenarios they have to be considered---and such labels have to satisfy some well-definedness properties, in order to guarantee that a dynamic execution strategy can exists.
Finally, in \citet{HPC15} authors proposed a sound-and-complete algorithm for solving CSTP showing---by an experimental evaluation---its good performance.

\citet{Ci14} provided the first sound-and-complete procedure for checking
the Dynamic-Controllability of \CSTN{s} with Uncertainty (CSTNUs) and this algorithm
can be employed for checking \DCC on \CSTN{s} as a special case. Their approach is based on reducing
the problem to solving Timed Game Automata (TGA). However, solving TGAs is a problem of much higher complexity than solving \MPG{s}.
Indeed, no upper bound is given in~\citet{Ci14} on the time complexity of their solution.
Moreover, neither $\varepsilon$-DC nor any other notion of DC with an instantaneous reaction-time are dealt with in that work.

\citet{Cairo17} introduced a streamlined version of a CSTN in which propositional labels may appear on constraints, but not on time-points. This change simplifies the definition of the DC property, as well as the propagation rules for the DC-checking algorithm.
It also simplifies the proofs of the soundness and completeness of those rules.
This paper provided two translations from traditional CSTNs to streamlined CSTNs.
Each translation preserves the DC property and, for any DC network,
ensures that any dynamic execution strategy for that network can be extended to a strategy for its streamlined counterpart. Finally, the paper presented an empirical comparison of two versions of the DC-checking algorithm:
the original version and a simplified version for streamlined CSTNs.
The comparison was based on CSTN benchmarks from earlier work.
So \citet{Cairo17} observed that the simplified algorithm is a practical alternative
for checking the dynamic consistency of CSTNs.
Unfortunately, no notion of DC with an instantaneous reaction-time is studied in that work.

To the best of our knowledge, the first work to approach a notion of DC
with an instantaneous reaction-time is~\citet{HPC15}, where the corresponding notion was named IR-DC (DC with instantaneous reaction).
The aim was to offer a sound-and-complete propagation-based DC-checking algorithm for CSTNs.
The subsequent work~\citet{HP16} extended and amended~\citet{HPC15} so that to check $\varepsilon$-DC,
both for $\varepsilon>0$ and for $\varepsilon=0$. However, to the best of our knowledge,
  the worst-case complexity of those algorithms is currently unsettled.
  Most importantly, a preliminary conference version (\citet{CCR16}) of the present work inspired \citet{HunsIJCAI18} to
    show that the semantics of IR-DC is also flawed
    due to the same fact pointed out in this work, \ie
      that an \emph{instantaneous circularity} emerges for carefully constructed \CSTN{s}.

Indeed, soon after the appearance of a preliminary conference version (\citet{CCR16}) of this article,
our proposed $\pi$-DC notion inspired the realization of a sequence of relevant quality papers.

Most notably,~\citet{HunsIJCAI18} showed that:
(1) the IR-DC semantics is also flawed, thanks to a similar argument;
(2) that one of the constraint-propagation rules from the IR-DC-checking algorithm is not sound with respect to the IR-DC semantics;
(3) as an aftermath, it was presented a simpler constraint-propagation algorithm, called the $\pi$-DC-checking algorithm;
(4) it was proved that it is sound and complete with respect to our $\pi$-DC semantics;
and (5) the algorithm was empirically evaluated, remarkably the authors found that their $3$-rule
 $\pi$-DC Checking algorithm is much faster than their $6$-rule IR-DC Checking algorithm,
  moreover, they observed that the performance improvement increases as the instance size increases.
Unfortunately no upper bound is given in~\citet{HunsIJCAI18} on the time complexity of their $\pi$-DC checking solution.

Last but not least,~\citet{HunsTIME18} proves that the $\epsilon$-DC checking problem for CSTNs can
be reduced to the standard DC-checking problem for CSTNs -- without incurring any computational cost.
Two versions of these results are presented that differ only in whether a dynamic strategy
can react instantaneously to observations, or only after some arbitrarily small, positive delay.
Remarkably, it was shown that the $\epsilon$-DC checking problem is reducible to the $\pi$-DC checking problem,
 and it was offered experimental evidence that (as soon as the size of the \CSTN{s} instances is sufficiently large)
 adopting $\pi$-DC is a competitive practical approach for solving DC-checking problems.
Again, no upper bound is given in~\citet{HunsTIME18} on the time complexity of their $\pi$-DC checking solution.

All in, apart from their solid theoretical relevance, these recent results discussed above provide
a clear evidence that the $\pi$-DC notion is not a purely theoretical speculation but it can
  actually lead to competitive (and, as the size of the network increases, even more efficient)
  algorithms \wrt those already known in the realm of \CSTN{s} DC-checking.

To the best of our knowledge,
  the improved upper bound on the time complexity of $\pi$-DC offered in the present article
(established in Theorem~\ref{thm:mainresult_pi-DC})
  is currently the state of the art on that matter,
    it seems worthwhile to remark it.

In summary, we believe that the present work can possibly help in clarifying DC
with an instantaneous reaction-time, also providing the first formally proved time complexity (improved)
  upper bounds that can also act as a reference point for other recent works in the literature;
and we believe that this might turn out to be helpful when the perspective has to be that of providing
sound-and-complete algorithms based on the propagation of LTCs.

\section{Background}\label{sect:backgroundandnotation}
This section provides some background and preliminary notations.
To begin, if $G=(V,A)$ is a directed weighted graph,
every arc $a\in A$ is a triplet $(u,v,w_a)$ where $u=t(a) \in V$ is the \textit{tail} of $a$,
$v=h(a) \in V$ is the \textit{head} of $a$, and $w_a=w(u,v)\in\Z$ is the (integer) \textit{weight} of $a$.
Let us now recall the definition of Simple Temporal Networks (\STN{s}).
\begin{Def}[\STN{s}~\citep{DechterMP91}]
An \STN is a weighted directed graph whose nodes are time-points that must be placed on the real time line and whose arcs
express binary distance constraints on the allocations of their end-points in time.
An \STN $G=(V, A)$ is \textit{consistent} if it admits a \emph{feasible schedule},
\ie a schedule $\phi: V\mapsto \RR$ such that $\phi(v) \leq \phi(u) + w(u,v)$ for all arcs $(u,v, w(u,v))\in A$.
\end{Def}

\subsection{Conditional Simple Temporal Networks}\label{subsect:CSTN}
Let us briefly recall the \CSTN model from~\citet{TVP2003, HPC12, CR15},
where we refer the reader who would like to find more extensive exposures along with illustrative examples.

We'll need to introduce a little bit of notation at first.
Let $P$ be a set of propositional letters (\ie boolean variables), a \emph{label} is any (possibly empty) conjunction of letters,
or negations of letters, drawn from $P$. The \emph{empty label} is denoted by $\lambda$.
The set of all these labels is denoted by $P^*$. Two labels, $\ell_1$ and $\ell_2$,
are called \emph{consistent}, denoted by $\Con(\ell_1, \ell_2)$,
when $\ell_1\wedge\ell_2$ is satisfiable. A label $\ell_1$ \emph{subsumes} a label $\ell_2$,
denoted by $\Sub(\ell_1, \ell_2)$, whenever $\ell_1\Rightarrow\ell_2$ holds.

We are now in the position to recall the formal definition of {\CSTN}.
\begin{Def}[{\CSTN}s~\citet{TVP2003, HPC12}]
A \emph{Conditional Simple Temporal Network (\CSTN)} is a tuple $( V, A, L, \Ord, \Ord{V}, P )$ where:
\begin{itemize}
\item $V$ is a finite set of \emph{time-points}; $P=\{p_1, \ldots, p_{|P|}\}$ is a finite set
	of \emph{propositional letters};
\item $A$ is a set of \emph{labelled temporal constraints (LTCs)} each having the form $( v-u\leq w(u,v), \ell)$,
	where $u,v\in V$, $w(u,v) \in \Z$, and the label $\ell\in P^*$ is satisfiable;
\item $L:V\rightarrow P^*$ assigns a label to each time-point in $V$;
\item $\Ord{V}\subseteq V$ is a finite set of \emph{observation time-points};
	$\Ord:P\rightarrow \Ord{V}$ is a bijection that associates a unique
	observation time-point $\Ord(p)=\Ord_p$ to each proposition $p\in P$;
\item The following \emph{reasonability assumptions} must hold:

(\emph{WD1}) for any LTC $( v-u\leq w, \ell)\in A$ the label $\ell$ is satisfiable
and subsumes both $L(u)$ and $L(v)$; intuitively, whenever a constraint $( v-u\leq w)$ is required,
then its endpoints $u$ and $v$ must be scheduled (sooner or later);

(\emph{WD2}) for each $p\in P$ and each $u\in V$ such that either $p$ or $\neg p$ appears in $L(u)$, we require:
$\Sub(L(u), L(\Ord_p))$, and $( \Ord_p-u\leq -\epsilon, L(u)) \in A$ for some small $\epsilon>0$;
intuitively, whenever a label $L(u)$, for some $u\in V$,
contains some $p\in P$, and $u$ is eventually scheduled,
then $\Ord_p=\Ord(p)$ must be scheduled before the time of $u$.

(\emph{WD3}) for any LTC $( v-u\leq w, \ell)\in A$ and $p\in P$,
for which either $p$ or $\neg p$ appears in $\ell$, it holds that $\Sub(\ell, L(\Ord_p))$;
intuitively, assuming that a required constraint contains some $p\in P$,
then $\Ord_p=\Ord(p)$ must be scheduled (sooner or later).
\end{itemize}
\end{Def}

We shall adopt the notation $x\overset{[a,b], \ell}{\longrightarrow} y$,
where $x,y\in V$, $a,b\in \N, a<b$ and $\ell\in P^*$,
to compactly represent the LTCs $( y-x\leq b, \ell ), ( x-y\leq -a, \ell )\in A$;
also, whenever $\ell=\lambda$, we shall omit $\ell$ from the graphics,
 see~\eg~\figref{fig:example1}.

\begin{Rem} Once the $\pi$-DC notion will be formally defined in the forthcoming sections,
 we shall relax the \emph{WD2} assumption
 by allowing $\Ord_p$ to	be scheduled at the same instant of time of $u\in V$ (but, in case $u\in\Ord V$,
 still at a subsequent \emph{position} in the additional inner ordering
 among the observation time-points that the $\pi$-DC introduces).
\end{Rem}

\begin{Rem}
Since the algorithmic framework that we will take as a reference exhibits
	a pseudo-polynomial running time complexity,
it will make sense for us to focus on temporal networks having integer weights only
(\ie each arc of the network will be weighted with an integer number).
Please notice that this already handles the case in which the temporal-constraints
	are weighted	with rational numbers instead
(provided they are appropriately scaled back so that to be treated as if they were integers).
Indeed, in the rational case, each arc weight can be expressed
as a fraction involving the least common denominator among all of the arc weights. As a result,
without loss of generality, we may henceforth assume that all arc weights are integers.
Also notice that from the perspective of the currently known algorithmic framework,
much less concrete sense has for us to consider real numbers with a possibly infinite decimal expansion;
given the numerical-arithmetic time complexity of basically all of the currently known procedures,
in any case we would find ourselves having to truncate the real decimal expansion at some point
(in order to land on a decision procedure that finally halts for real),
		ultimately considering rational numbers to all effects.

In summary, all of our definitions and results will be formulated in the realms on the integers, for ease of notation,
but they all extend in a very natural and straightforward way to the rational numbers too (a scaling-factor appears).
\end{Rem}

\begin{Ex}\label{example1}
Fig.~\ref{fig:example1} depicts an example \CSTN $\Gamma_0$
	having three (non observation) time-points $A$, $B$ and $C$
		as well as two observation time-points $\Ord_p$ and $\Ord_q$.

The underlying intuition being that: time-point $A$ will serve as a \emph{zero node}
 (\ie one which is required to be scheduled before all the others), and $C$ will serve as a \emph{tail-light}
 (\ie one which is required to be scheduled lastly,
	in this case exactly $10$ units of time after the zero $A$);
	then, time-point $B$ will take the role of a \emph{weathervane} (\ie it will be required
	to be scheduled either soon after the zero $A$ or shortly before the tail-light $C$,
	depending on the outcome of the two observation time-points $\Ord_p$ and $\Ord_q$).
	If $p$ turns out to be \emph{true}, then the propositional value of $q$ will be decisive for scheduling $B$;
	otherwise, it will be convenient just to schedule $B$ very close to $C$.

	Therefore,
	it will make sense to schedule $\Ord_p$ to happen \emph{strictly} before $\Ord_q$ (see Example~\ref{ex:strategyexample}).

Formally, the \CSTN $\Gamma_0=( V, A, L, \Ord, \Ord{V}, P)$ is defined as follows: $V=\{A,B,C,\Ord_p, \Ord_q\}$, $P=\{p,q\}$, $\Ord{V}=\{\Ord_p, \Ord_q\}$,
$L(v)=\lambda$ for every $v\in V\setminus\{\Ord_q\}$ and $L(\Ord_q)=p$, $\Ord(p)=\Ord_p, \Ord(q)=\Ord_q$.
Next, the set of LTCs is: $A=\{ ( C-A\leq 10, \lambda ), ( A-C\leq -10,
\lambda ), ( B-A\leq 3, p\wedge\neg q ),
( A-B\leq 0, \lambda ),
( \Ord_p-A\leq 5, \lambda ),
( A-\Ord_p\leq 0, \lambda ),
( \Ord_q-A\leq 9, p ),
( A-\Ord_q\leq 0, p ),
( C-B\leq 2, q ),
( C-\Ord_p\leq 10, \lambda)$.
\end{Ex}

Sometimes we will show the scheduling time of a time-point with a label in boldface
 on the sidelines of the node itself,
 	see \eg $A$ in \figref{fig:example1} (\ie the one which is scheduled at time $t=0$).

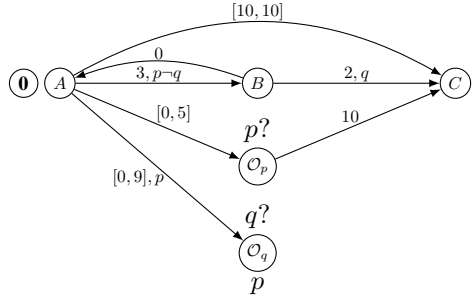
\begin{figure}[!h]
\centering
\begin{tikzpicture}[arrows=->,scale=.65,node distance=1 and 1]
 	\node[node, label={left:$\bf \sizedcircled{.75}{0}$}] (A') {$A$};
	\node[node, xshift=15ex,right=of A'] (B') {$B$};
	\node[node, xshift=15ex,right=of B'] (C') {$C$};
	\node[node,below=of B', label={above,yshift=0ex:$p?$}] (P') {$\Ord_p$};
	\node[node,below=of P', label={above,yshift=0ex:$q?$}, label={below:$p$}] (Q') {$\Ord_q$};
	%arcs
	\draw[] (A') to [bend left=30] node[timeLabel,above] {$[10,10]$} (C');
	\draw[] (A') to [] node[timeLabel,above] {$3, p \neg q$} (B');
  \draw[] (B') to [bend right=20] node[timeLabel,above] {$0$} (A');
	\draw[] (B') to [] node[timeLabel,above] {$2, q$} (C');
	\draw[] (A') to [] node[xshift=2.1ex, yshift=0ex, timeLabel,above] {$[0,5]$} (P');
	\draw[] (A') to [] node[xshift=-2.5ex,yshift=0ex, timeLabel,below] {$[0,9],p$} (Q');
	\draw[] (P') to [] node[xshift=-1ex, timeLabel,above] {$10$} (C');
	\end{tikzpicture}
\caption{An example CSTN.}\label{fig:example1}
\end{figure}

In all of the following definitions, which basically come from~\citep{TVP2003, HPC15},
we shall implicitly refer to some \CSTN $\Gamma=( V, A, L, \Ord, \Ord{V}, P )$.

Intuitively, a \emph{scenario} denotes a state of affairs, a configuration of the observable environment,
where the set of events that have been observed can be either partial or complete (\ie total).

\begin{Def}[Scenario]
A \emph{scenario} over a subset $U\subseteq P$ of boolean	variables is a truth assignment $s:U\rightarrow \{0, 1\}$,
\ie $s$ is a function that assigns a truth value to each proposition $p\in U$.
When $U\subsetneq P$ and $s:U\rightarrow \{0, 1\}$, then $s$ is said to be a \emph{partial} scenario;
otherwise, when $U=P$, then $s$ is said to be a \emph{(complete)} scenario.
The set comprising all of the complete scenarios over $P$ is denoted by $\Sigma_P$.
If $s\in\Sigma_P$ is a scenario and $\ell\in P^*$ is a label, then $s(\ell)\in\{0,1\}$
denotes the truth value of $\ell$ induced by $s$ in the natural way.
\end{Def}
Notice that any scenario $s\in\Sigma_P$ can be described by means of the label
$\ell_s\triangleq l_1\wedge\cdots\wedge l_{|P|}$ such that, for every $1\leq i\leq |P|$,
the literal $l_i\in\{p_i, \neg p_i\}$ satisfies $s(l_i)=1$.
\begin{Ex}
Consider the set of boolean variables $P=\{p,q\}$.
The scenario $s:P\rightarrow\{0, 1\}$ defined as $s(p)=1$
and $s(q)=0$ can be compactly described by the label $\ell_s=p\wedge \neg q$.
\end{Ex}

\begin{Def}[Schedule] A \emph{schedule} for a subset of time-points $U\subseteq V$ is a function $\phi:U\rightarrow\RR$
that assigns a real number to each time-point in $U$. The set of all schedules over $U$ is denoted by~$\Phi_U$.
\end{Def}
\begin{Def}[Scenario-Restriction]
Let $s\in\Sigma_{P}$ be a scenario.
The \emph{restriction} of $V$, $\Ord{V}$, and $A$ \wrt $s$ is defined as:
\begin{itemize}
\item $V^+_s\triangleq \big\{v\in V\mid s(L(v))=1\big\}$; $\Ord{V}^+_s\triangleq \Ord{V}\cap V^+_s$;
\item $A^+_s\triangleq \big\{( u,v,w) \mid \exists {\ell}\, ( v-u\leq w, \ell) \in A, s(\ell)=1\big\}$.
\end{itemize}
The restriction of $\Gamma$ \wrt $s\in \Sigma_P$ is the \STN $\Gamma^+_s\triangleq ( V^+_s, A^+_s)$.
Finally, it is worth to denote $V^+_{s_1, s_2} \triangleq V^+_{s_1}\cap V^+_{s_2}$.
\end{Def}

\begin{Ex}
\figref{FIG:restriction_cstn2} depicts the restriction \STN ${\Gamma_0}^+_s$ of the \CSTN $\Gamma_0$
(Example~\ref{example1}), \wrt the scenario $s(p)=s(q)=0$.
\end{Ex}
\begin{figure}[!htb]
\centering
\begin{tikzpicture}[arrows=->,scale=.65,node distance=1 and 1]
 	\node[node, label={left:$\bf \sizedcircled{.65}{0}$}] (A') {$A$};
	\node[node, xshift=15ex,right=of A', label={right:$\bf \sizedcircled{.65}{8}$}] (B') {$B$};
	\node[node, xshift=15ex,right=of B', label={right:$\bf \sizedcircled{.65}{10}$}] (C') {$C$};
	\node[node,below=of B', label={below:$\bf \sizedcircled{.65}{1}$}] (P') {$\Ord_p$};
	 %arcs
	\draw[] (A') to [bend left=30] node[timeLabel,above] {$[10,10]$} (C');
  \draw[] (B') to [] node[timeLabel,above] {$0$} (A');
	\draw[] (A') to [] node[xshift=2ex, yshift=0ex, timeLabel,above] {$[0,5]$} (P');
	\draw[] (P') to [] node[xshift=-1ex, timeLabel,above] {$10$} (C');
\end{tikzpicture}
\caption{The restriction ${\Gamma_0}^+_s$ (a) \wrt the
	scenario $s(p)=s(q)=0$.}\label{FIG:restriction_cstn2}
\end{figure}
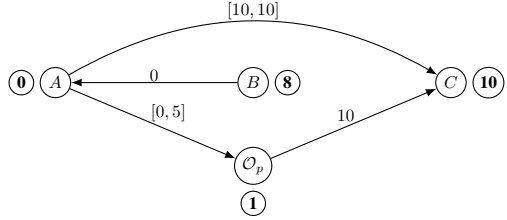

Intuitively, a scenario-restriction projects the \CSTN  by confirming only certain temporal-constraints as active meanwhile
 deleting all those that are not prescribed by the referenced scenario;
 notice that, in the projected network, the temporal constraints will be unlabeled so that the projected \CSTN will become an \STN actually.

Instead, an execution-strategy for a \CSTN is actually a family of schedule assignments
	(one for each complete scenario $s\in\Sigma_P$),
		in which only the time-points confirmed by the corresponding scenario will be scheduled over time.

\begin{Def}[Execution-Strategy]\label{def:executionstrategy}
An \emph{Execution-Strategy (ES)} for $\Gamma$ is a mapping
$\sigma:\Sigma_P\rightarrow \Phi_{V}$ such that,
for any complete scenario $s\in\Sigma_P$, the domain of the schedule $\sigma(s)$ is $V^+_{s}$.
The set of ESs of $\Gamma$ is denoted by $\S_{\Gamma}$.

The \emph{execution time} of a time-point $v\in V^+_{s}$ in the
schedule $\sigma(s)\in\Phi_{V^+_s}$ is denoted~by~$[\sigma(s)]_v$.
\end{Def}

In order to sustain a formal definition of dynamic-consistency, it is worth considering
a notion of \emph{history} depending on a particular ES $\sigma$, a scenario $s$,
and a real-value $\tau$ as in from~\citet{HPC12};
whose intended interpretation is the set comprising all	and only those
propositional letters $p$ (together with their propositional value $s(p)$) such that
	their observation time-point $\Ord_p$ is required under scenario $s$ and
	it is scheduled by $\sigma(s)$ \emph{strictly} before time~$\tau$.

\begin{Def}[History]\label{def:scenario_history}
Let $\sigma\in\S_{\Gamma}$ be any ES, let $s\in\Sigma_P$ be any scenario and let $\tau\in\RR$.
The \emph{history} $\Hst(\tau,s,\sigma)$ of $\tau$ in the scenario $s$ under strategy $\sigma$ is defined as:
$\Hst(\tau,s,\sigma)\triangleq \big\{\big(p, s(p)\big)\in P\times\{0,1\}\mid
	\Ord_p \in V^+_{s},\, [\sigma(s)]_{\Ord_p} < \tau \big\}$.
\end{Def}

Notice that the history can be compactly encoded as the conjunction of the literals
corresponding to the observations comprising it, that is, by means of a label.
\begin{Def}[Viable Execution-Strategy]
We say that $\sigma\in\S_{\Gamma}$ is a \emph{viable} ES if, for each scenario $s\in\Sigma_P$,
the schedule $\sigma(s)\in\Phi_{V^+_{s}}$ is feasible for the \STN $\Gamma^+_s$.
\end{Def}
We are now in the position to recall the formal definition of Dynamic-Consistency from~\citet{TVP2003, HPC12}.
Intuitively, it requires the existence of conditional plans where
decisions about the precise timing of actions are postponed until execution time, but anyhow guaranteeing that
  all of the relevant constraints will be ultimately satisfied;
	stated otherwise, the planning decisions are
		allowed to depend on past observations only
			(\ie the planner can observe the past and react to it), but not on future events.

\begin{Def}[Dynamic-Consistency]\label{def:consistency}
An ES $\sigma\in \S_{\Gamma}$ is called \emph{dynamic} if, for any $s_1, s_2\in \Sigma_P$ and any $v\in V^+_{s_1,s_2}$,
the following implication holds on the scheduling time of $v$ under $s_1$, say $\tau\triangleq [\sigma(s_1)]_v$:
	\[\Con(\Hst(\tau, s_1, \sigma), s_2) \Rightarrow [\sigma(s_2)]_v=\tau.\]
We say that $\Gamma$ is \emph{dynamically-consistent (DC)}
	if it admits $\sigma\in\S_{\Gamma}$ which is both viable and dynamic.

The problem of checking whether a given \CSTN is DC is named \emph{\DCC}-Checking.
\end{Def}

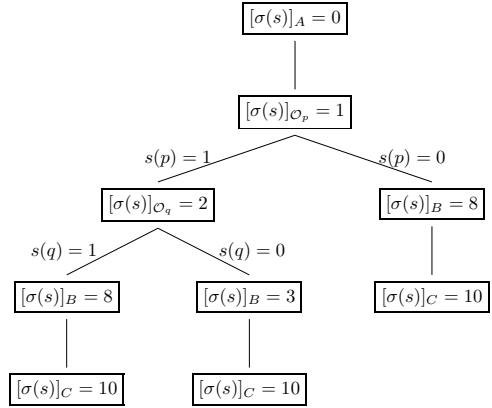
\begin{figure}[!htb]
\centering
\begin{tikzpicture}[scale=0.7, level distance=50pt,sibling distance=30pt]
\Tree [. \framebox{$[\sigma(s)]_{A}=0$}
\edge node[]{};
[. \framebox{$[\sigma(s)]_{\Ord_p}=1$}
\edge node[left,xshift=-1ex]{$s(p)=1$};
[. \framebox{$[\sigma(s)]_{\Ord_q}=2$}
\edge node[left,xshift=-1ex]{$s(q)=1$};
[. \framebox{$[\sigma(s)]_B=8$}
\edge node[]{}; \framebox{$[\sigma(s)]_C=10$}
]
\edge node[right,xshift=1ex]{$s(q)=0$};
[. \framebox{$[\sigma(s)]_B=3$}
\edge node[]{}; \framebox{$[\sigma(s)]_C=10$} ]
]
\edge node[right,xshift=1ex]{$s(p)=0$};
[. \framebox{$[\sigma(s)]_B=8$}
\edge node[]{}; [.
 \framebox{$[\sigma(s)]_C=10$} ]
 ]
]]
\end{tikzpicture}
\caption{A tree-like representation of a dynamic execution strategy $\sigma$ for the CSTN $\Gamma_0$ of \figref{fig:example1},
where $s$ is for the scenario and $[\sigma(s)]_{\cdot}$ is the corresponding scheduling value.}
\label{FIG:cstn2-strategy}
\end{figure}

\begin{Ex}\label{ex:strategyexample}
Consider the CSTN $\Gamma_0$ of \figref{fig:example1},
and let the scenarios $s_1, s_2, s_3, s_4$ be defined as:
$s_1=p\wedge q$; $s_2=p\wedge \neg q$;
$s_3=\neg p\wedge q$; $s_4= \neg p \wedge \neg q$.
The following defines an execution strategy $\sigma\in\S_\Gamma$:
$[\sigma(s_i)]_A=0$ for every $i\in\{1,2,3,4\}$;
$[\sigma(s_i)]_B=8$ for every $i\in\{1,3,4\}$ and $[\sigma(s_2)]_B=3$;
$[\sigma(s_i)]_C=10$ for every $i\in\{1,2,3,4\}$;
$[\sigma(s_i)]_{\Ord_p}=1$ for every $i\in\{1,2,3,4\}$.
The reader can check that $\sigma$ is viable and dynamic. Indeed,
  $\sigma$ admits the tree-like
		representation depicted in Fig.~\ref{FIG:cstn2-strategy} where $\Ord_p$ is scheduled before $\Ord_q$.
\end{Ex}

We provide next the definition of \emph{difference set} $\Delta(s_1; s_2)$.
Intuitively, given two scenarios $s_1$ and $s_2$,
 this is the set comprising all and only those observation time-points $\Ord_p$ that
 are required by the first scenario $s_1$ (not necessarily by the second $s_2$)
 and such that the two scenarios disagree on the corresponding propositional letter $p$,
 	\ie $s_1(p)\neq s_2(p)$.

\begin{Def}[Difference-Set]
Let $s_1, s_2\in\Sigma_P$ be any two scenarios.
The set of observation time-points in $\Ord{V}^+_{s_1}$ at which $s_1$ and $s_2$ differ is denoted by $\Delta(s_1;s_2)$.
Formally, \[ \Delta(s_1; s_2) \triangleq \big\{\Ord_p \in \Ord{V}^+_{s_1} \mid s_1(p)\neq s_2(p)\big\}. \]
\end{Def}
Notice that the difference-set is not commutative,
	\ie generally $\Delta(s_1; s_2)\neq \Delta(s_2;s_1)$.
Also notice that the various definitions of history and dynamic consistency that are used by different
authors~\citep{TVP2003, HPC15, CR15} turn out to be equivalent.

\subsection{Hyper Temporal Networks}\label{subsect:HTN}

This subsection surveys the \textit{Hyper Temporal Network} (\HTN) model,
which is a strict generalization of \STN{s} introduced to overcome the limitation of
considering only conjunctions of constraints but maintaining a practical efficiency in the consistency check of the instances.
In a \HTN a single temporal hyperarc constraint may be defined as a set of two or
more maximum delay constraints which is satisfied when at least one of these delay constraints is satisfied.
\HTN{s} are meant as a light generalization of \STN{s} offering an interesting compromise.
In fact, there exist practical pseudo-polynomial time algorithms for checking consistency
and computing feasible schedules for \HTN{s}, see~\citet{CPR2014, CPR2016}.
To the best of our knowledge this is also the only algorithmic
framework that has been experimentally tested and validated on \HTN{s} (with promising results).

The reader is referred to~\citet{CPR2014, CPR2016} for an in-depth treatise on this subject, summarized next.

\begin{Def}[Hypergraph]
A \emph{hypergraph} $\H$ is a pair $(V,\A)$, where $V$ is the set of nodes, and $\A$ is the set of \emph{hyperarcs}.
Each hyperarc $A=(t_A, H_A, w_A)\in \A$ has a distinguished node $t_A$ called the \emph{tail} of $A$,
and a nonempty set $H_A\subseteq V\setminus\{t_A\}$ containing the \emph{heads} of $A$;
to each head $v\in H_A$ is associated a \emph{weight} $w_A(v)\in\Z$.
\end{Def}

Provided that $|A| \triangleq |H_A\cup \{t_A\}|$, the \emph{size} of a hypergraph $\H = (V,\A)$
is defined as $m_{\A}\triangleq \sum_{A\in\A}|A|$; it is used as a measure for the encoding length of $\H$.
If $|A|=2$, then $A=(u, v, w)$ can be regarded as a \emph{standard arc}. In this way, hypergraphs generalize graphs.

A \HTN is a weighted hypergraph $\H=(V,\A)$ where a node represents a \emph{time-point} to be scheduled,
and a hyperarc represents a disjunction of temporal distance \emph{constraints} between the \emph{tail} and the \emph{heads}.

In the \HTN framework the consistency problem is the following decision problem.
\begin{Def}[\HTNC]
Given some \HTN \mbox{$\H=(V,\A)$}, decide whether there is a schedule \mbox{$\phi:V \rightarrow \RR$} such that:
\[\phi(t_A) \geq \min_{v\in H_A} \{ \phi(v) - w_A(v) \}, \;\forall\; A\in\A\]
any such a schedule \mbox{$\phi:V \rightarrow \RR$} is called \textit{feasible}.

A \HTN is called \textit{consistent} whenever it admits at least one feasible schedule.
The problem of checking whether a given \HTN is consistent is named \HTNC, and the following algorithmic result holds on it.
\end{Def}

\begin{Thm}{\citep{CPR2014}}\label{Teo:MainAlgorithms}
There exists an $O\big((|V|+|\A|) m_{\A} W\big)$ pseudo-polynomial time algorithm for checking \HTNC;
moreover, when the input \HTN $\H=(V, \A)$ is consistent,
the algorithm returns as output a feasible schedule $\phi:V\rightarrow \RR$ of $\H$.
Here, $W\triangleq \max_{A\in\A, v\in H_A} |w_A(v)|$ is the maximum absolute weight value.
\end{Thm}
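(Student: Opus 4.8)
The plan is to reduce \HTNC to the computation of the optimal values (equivalently, of the winning regions together with an optimal positional strategy) of a Mean Payoff Game --- taken in its energy-game incarnation --- and then to invoke an off-the-shelf pseudo-polynomial \MPG algorithm whose running time, once instantiated on the reduced game, matches the claimed bound.

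\emph{The reduction.} From $\H=(V,\A)$ we build a weighted game arena $G_\H$ on the node set $V\cup\A$: each \emph{event} node $u\in V$ is owned by the environment (the adversary), while each \emph{hyperarc} node $A\in\A$ is owned by the consistency player (the one that aims to certify consistency). For every $A\in\A$ we add an arc of weight $0$ from $t_A$ to $A$ and, for each head $v\in H_A$, an arc of weight $w_A(v)$ from $A$ to $v$; a $0$-weight self-loop is attached to every event node that is the tail of no hyperarc, so that $G_\H$ is total. Thus $G_\H$ has $|V|+|\A|$ nodes, at most $m_\A+|V|$ arcs (and $O(m_\A)$ of them once constraint-free events are discarded), and maximum absolute weight $W$. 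The structural claim to establish is that $\H$ is consistent if and only if the consistency player wins from every node of $G_\H$; and, on a consistent instance, the integral value function $\nu:V\cup\A\to\Z$ returned by value iteration satisfies $\nu(t_A)\ge\min_{v\in H_A}\{\nu(v)-w_A(v)\}$ for every $A\in\A$ --- this because the fixpoint performs $\nu(u)=\max_{A:\,t_A=u}\max\{0,\nu(A)\}$ at an event node and $\nu(A)=\min_{v\in H_A}\max\{0,\nu(v)-w_A(v)\}$ at a hyperarc node --- so that $\phi\triangleq\nu|_V$ is a feasible schedule of $\H$. For the converse implication, a feasible schedule $\phi$ of $\H$ is itself a positional certificate: letting the consistency player select, at each node $A$, some head $v\in H_A$ with $\phi(v)\le\phi(t_A)+w_A(v)$ (one exists by feasibility) keeps the accumulated weight bounded from below along every play, hence no play has negative mean weight.

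\emph{Complexity.} We then run the value-iteration (energy-game) algorithm for \MPG{s} of~\cite{brim2011faster}, which on an arena with $n$ nodes, $e$ arcs and weights bounded in absolute value by $\Omega$ computes the whole value function, the winning regions and an optimal positional strategy in $O(n\,e\,\Omega)$ time. Substituting $n=|V|+|\A|$, $e=O(m_\A)$ and $\Omega=W$ yields $O\big((|V|+|\A|)\,m_\A\,W\big)$. If the run stabilizes with finite values everywhere we output $\phi=\nu|_V$, feasible by the structural claim; otherwise the divergence of some value (detected within the same time budget) witnesses that no feasible schedule exists, and we report inconsistency.

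\emph{Expected main obstacle.} The crux is the structural claim: proving both directions of ``$\H$ is consistent $\Longleftrightarrow$ the consistency player wins from every node'' while, at the same time, certifying that the values produced by value iteration constitute a genuine \HTN schedule once restricted to $V$. Concretely, this amounts to matching the shortest-path-style potential system of the hypergraph with the energy-game least fixpoint, which is sensitive to getting node ownership and arc-weight signs/normalization exactly right; everything else, and in particular the complexity accounting, is then routine given~\cite{brim2011faster}.
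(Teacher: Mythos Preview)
The paper does not actually prove this theorem: it is stated as a citation from~\cite{CPR2014} (and~\cite{CPR2016}) and is used as a black box throughout. So there is no ``paper's own proof'' to compare against; your sketch is a reconstruction of the argument in the cited references. That said, your outline---bipartite game arena on $V\cup\A$, hyperarc nodes controlled by the consistency player who selects a head, event nodes controlled by the adversary who selects an outgoing hyperarc, followed by the energy-game value iteration of~\cite{brim2011faster}---is indeed the approach taken in~\cite{CPR2014,CPR2016}, and the complexity accounting ($n=|V|+|\A|$, $e=O(m_\A)$, $\Omega=W$) matches. One caution: the fixpoint equations you wrote for $\nu$ are not quite the standard energy-progress-measure updates (the $\max\{0,\cdot\}$ clipping does not appear in that form; rather, values are lifted monotonically and capped at a finite bound to detect divergence), so if you were to flesh this out you would need to align the update rules precisely with the energy-game semantics to make the ``$\nu|_V$ is feasible'' claim go through. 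But as a sketch of the intended proof from the cited source, this is on target.
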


\subsection{$\varepsilon$-Dynamic-Consistency}
In \CSTN{s}, decisions about the precise timing of actions are postponed until execution time,
when information gathered from the observation nodes can be taken into account.
However, the planner is allowed to factor in an outcome, and differentiate its strategy according to it,
only \emph{strictly} after the outcome has been
observed (whence the strict inequality in Definition~\ref{def:scenario_history}).
Notice that this definition does not take into account the reaction-time,
which, in most applications, is non-negligible.
In order to deliver algorithms that can also deal
with the \emph{reaction-time} $\varepsilon>0$ of the planner,
we introduced in~\citet{CR15} a refined notion of DC.

\begin{Def}[$\varepsilon$-Dynamic-Consistency]\label{def:epsilonconsistency}
Given any \CSTN $( V, A, L, \Ord, \Ord{V}, P )$ and any real number $\varepsilon\in (0, +\infty)$,
an ES $\sigma\in\S_{\Gamma}$ is \emph{$\varepsilon$-dynamic} if it satisfies all of the $H_\varepsilon\text{-constraints}$,
namely, for any two scenarios $s_1, s_2\in \Sigma_P$ and any time-point $u\in V^+_{s_1,s_2}$,
the ES $\sigma$ satisfies the following constraint, which is denoted by $H_{\varepsilon}(s_1;s_2;u)$:
\[ [\sigma(s_1)]_u \geq \min\Big(\{[\sigma(s_2)]_u\}\cup\{[\sigma(s_1)]_v + \varepsilon\mid v\in\Delta(s_1; s_2)\}\Big). \tag{$H_{\varepsilon}(s_1;s_2;u)$} \]
We say that a \CSTN $\Gamma$ is \emph{$\varepsilon$-dynamically-consistent ($\varepsilon$-DC)} if it admits $\sigma\in\S_{\Gamma}$
which is both viable and $\varepsilon$-dynamic.
\end{Def}
\begin{figure}[!htb]
\centering
	\begin{tikzpicture}[arrows=->,scale=0.8,node distance=0.4 and 1.5]
    		\node[node] (v1) {$v_1$};
 	 	\node[node,below=of v1] (v2) {$v_2$};
		\node[node,left=of v2] (u) {$u_{s_1}$};
		\node[node,below=of u, yshift=-3ex] (v3) {$u_{s_2}$};

\node[below left=of u, yshift=.85ex, xshift=-10ex] (fakeL) {};

\node[below right=of v2, yshift=.85ex, xshift=10ex] (fakeR) {};
\node[above=of fakeR, yshift=-1ex, xshift=-5ex] (fakeUp) {$s_1$};
\node[below=of fakeR, yshift=1ex, xshift=-5ex] (fakeRdown) {$s_2$};

\draw[dashed] (u) to node[timeLabel, above,sloped] {$A, -\varepsilon$} (v1);%
\draw[dashed] (u) to node[timeLabel, above,sloped] {$A, -\varepsilon$} (v2);%
\draw[dashed] (u) to node[timeLabel, above,sloped] {$A, 0$} (v3);%

\draw[>=, dotted] (fakeL) to node[timeLabel,above,sloped] {} (fakeR);%

	\end{tikzpicture}
\caption{An $H_{\varepsilon}(s_1;s_2;u)$ constraint, modeled as a hyperarc.}\label{fig:H_epsilon_constraint}
\end{figure}
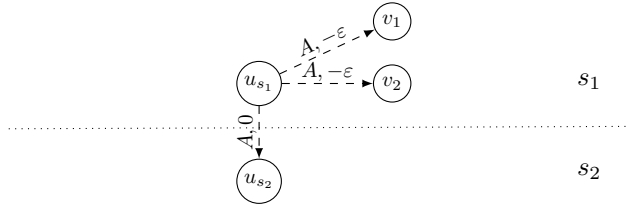
As shown in~\citet{CR15}, $\varepsilon$-DC can be modeled in terms of \HTNC.
\figref{fig:H_epsilon_constraint} depicts an illustration of an $H_{\varepsilon}(s_1;s_2;u)$ constraint,
modeled as an hyperarc.

Also,~\citet{CR15} proved that DC coincides with $\hat\varepsilon$-DC,
provided that $\hat\varepsilon\triangleq |\Sigma_P|^{-1}|V|^{-1}$.
\begin{Thm}\label{thm:epsilonconsistency} Let $\hat\varepsilon\triangleq |\Sigma_P|^{-1}|V|^{-1}$.
Then, $\Gamma$ is DC if and only if $\Gamma$ is $\hat\varepsilon$-DC.
Moreover, if $\Gamma$ is $\varepsilon$-DC for some $\varepsilon>0$,
then $\Gamma$ is $\varepsilon'$-DC for every $\varepsilon'\in (0, \varepsilon]$.
\end{Thm}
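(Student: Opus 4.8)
The plan is to prove Theorem~\ref{thm:epsilonconsistency} in two parts, corresponding to the two sentences in its statement. I begin with the second (monotonicity) claim, since it is the simpler one and turns out to be useful intuition: if $\sigma$ is viable and $\varepsilon$-dynamic, then for $\varepsilon' \in (0,\varepsilon]$ the very same $\sigma$ is still viable (viability does not depend on $\varepsilon$), and it satisfies every $H_{\varepsilon'}(s_1;s_2;u)$ constraint because decreasing $\varepsilon$ to $\varepsilon'$ only decreases each term $[\sigma(s_1)]_v + \varepsilon'$ in the $\min$ on the right-hand side, hence only weakens the lower bound imposed on $[\sigma(s_1)]_u$. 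So $\sigma$ witnesses $\varepsilon'$-DC.

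For the main equivalence, the direction ``$\hat\varepsilon$-DC $\Rightarrow$ DC'' is the easy one. Given a viable $\hat\varepsilon$-dynamic $\sigma$, I would show $\sigma$ is itself dynamic in the sense of Definition~\ref{def:consistency}. Fix $s_1,s_2$ and $v \in V^+_{s_1,s_2}$, set $\tau = [\sigma(s_1)]_v$, and assume $\Con(\Hst(\tau,s_1,\sigma),s_2)$; I must deduce $[\sigma(s_2)]_v = \tau$. The key observation is that consistency of the history with $s_2$ means exactly that every observation node in $\Delta(s_1;s_2)$ is scheduled by $\sigma(s_1)$ at time $\geq \tau$ (if some $\Ord_p \in \Delta(s_1;s_2)$ had $[\sigma(s_1)]_{\Ord_p} < \tau$, then $(p,s_1(p))$ would be in the history and would contradict $s_2$). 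One then runs the $H_{\hat\varepsilon}$ constraints ``downward'' along a carefully chosen sequence of events: among all events on which $\sigma(s_1)$ and $\sigma(s_2)$ disagree, pick one, say $u$, minimizing $[\sigma(s_1)]_u$; the constraint $H_{\hat\varepsilon}(s_1;s_2;u)$ together with the analogous $H_{\hat\varepsilon}(s_2;s_1;u)$ forces a contradiction with minimality unless there are no such events at all below time $\tau + \hat\varepsilon$, and in particular forces $[\sigma(s_2)]_v = \tau$. I expect this argument to require a little care about which scenario plays the role of ``$s_1$'' versus ``$s_2$'' and about handling ties in execution times, but it is essentially bookkeeping.

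The substantive direction is ``DC $\Rightarrow$ $\hat\varepsilon$-DC'': from an arbitrary viable dynamic $\sigma$ I must manufacture a viable $\hat\varepsilon$-dynamic $\sigma'$. The natural idea is to take $\sigma$ and ``round'' or ``rescale'' it so that all execution times become integer multiples of some small granularity $\delta$ and so that whenever two events are genuinely ordered by $\sigma$ in some scenario they are separated by at least $\hat\varepsilon$. Concretely, one shows that a dynamic strategy can be assumed to be, in each scenario, a feasible schedule of the integral STN $\Gamma^+_s$, hence may be taken with integer (or rational with bounded denominator) values, and that within each scenario the relevant events take at most $|V|$ distinct values while there are $|\Sigma_P|$ scenarios; spreading these onto a grid of spacing $\hat\varepsilon = |\Sigma_P|^{-1}|V|^{-1}$ inside unit-length intervals keeps the schedule feasible for the STN (because STN constraints are integral, a perturbation of size $<1$ that preserves the integer part and the relative order is harmless) while ensuring that any two events that $\sigma$ executes at distinct times are now $\hat\varepsilon$ apart. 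The hard part will be verifying that this transformation preserves dynamicity in the strong form needed to satisfy every $H_{\hat\varepsilon}(s_1;s_2;u)$: one must argue that if $\sigma'$ were to violate some $H_{\hat\varepsilon}$ constraint — i.e. $u$ is executed strictly earlier in $s_1$ than in $s_2$ and also strictly earlier than every node of $\Delta(s_1;s_2)$ plus $\hat\varepsilon$ — then, pulling this back through the rounding, $\sigma$ itself would have to distinguish $s_1$ from $s_2$ at $u$ without having yet observed any differing proposition, contradicting that $\sigma$ is dynamic. Making the ``pull-back'' precise, and choosing the grid so that it simultaneously works for all $|\Sigma_P|$ scenarios and all $|V|$ events, is where the constant $|\Sigma_P|^{-1}|V|^{-1}$ comes from and where essentially all the work lies.
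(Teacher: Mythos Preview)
This theorem is not proved in the present paper. It appears in Section~2 (Background) as a result quoted from the authors' earlier work~\cite{CR15}; the paper states ``in~\cite{CR15} we proved that DC coincides with $\hat\varepsilon$-DC'' and then records the theorem without proof, alongside Theorem~\ref{thm:mainresult} which is likewise imported from~\cite{CR15}. Consequently there is no proof here against which to compare your proposal.

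On the substance of your sketch: the monotonicity argument is correct and complete as written. Your outline for ``$\hat\varepsilon$-DC $\Rightarrow$ DC'' is on the right track---the key reduction is indeed that $\Con(\Hst(\tau,s_1,\sigma),s_2)$ forces every $v\in\Delta(s_1;s_2)$ to satisfy $[\sigma(s_1)]_v\ge\tau$, whence the $\min$ in $H_{\hat\varepsilon}(s_1;s_2;u)$ collapses to $[\sigma(s_2)]_u$---but note that you get only $[\sigma(s_1)]_u\ge[\sigma(s_2)]_u$ directly, and the symmetric inequality requires an additional argument (e.g.\ a minimality or induction step over execution times, since you do not a priori know that $\Hst(\tau',s_2,\sigma)$ is consistent with $s_1$ at $\tau'=[\sigma(s_2)]_u$). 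For the hard direction ``DC $\Rightarrow$ $\hat\varepsilon$-DC'' your plan---integralize, then spread onto a grid of width $\hat\varepsilon$ inside each unit interval---is the natural one and is, as you correctly anticipate, where the real work lies; your description is a plan rather than a proof, but it is a reasonable plan. If you want to see the details carried out, consult~\cite{CR15}.
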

Then, the main result offered in~\citet{CR15} is a (pseudo) singly-exponential time DC-checking and $\epsilon$-DC-checking
	procedure (which is ultimately based on the consistency checking of \HTN{s}, \ie Theorem~\ref{Teo:MainAlgorithms}).
\begin{Thm}\label{thm:mainresult}
	The following two propositions hold. Here, $W\triangleq \max_{a\in A} |w_a|$ is the maximum absolute weight of $\Gamma$.
\begin{enumerate}
\item There exists an $O(|\Sigma_P|^{2}|A|^2 +|\Sigma_P|^3|A||V||P| + |\Sigma_P|^4|V|^2|P|)WD$ time algorithm
deciding \eDCC on input $(\Gamma, \epsilon)$,
for any \CSTN $\Gamma=( V, A, L, \Ord, \Ord{V}, P )$ and any rational number $\epsilon=N/D$ where $N,D\in \N_0$.
In particular, given any $\epsilon$-dynamically-consistent \CSTN $\Gamma$,
the algorithm returns as output a viable and $\epsilon$-dynamic execution strategy for $\Gamma$.
\item There exists an $O(|\Sigma_P|^{3}|A|^2|V| +
	|\Sigma_P|^4|A||V|^2|P| + |\Sigma_P|^5|V|^3|P|)W$ time algorithm for checking \DCC
on any input $\Gamma=( V, A, L, \Ord, \Ord{V}, P )$.
In particular, given any dynamically-consistent \CSTN $\Gamma$,
the algorithm returns a viable and dynamic execution strategy for $\Gamma$.
\end{enumerate}
\end{Thm}

\section{DC with Instantaneous Reaction-Time}\label{sect:Algo}
Theorem~\ref{thm:epsilonconsistency} points out the equivalence between $\varepsilon$-DC and DC,
that arises for a sufficiently small $\varepsilon>0$. However, Definition~\ref{def:epsilonconsistency}
makes sense even if $\varepsilon=0$, so a natural question is what happens
to the above mentioned relationship between DC and $\varepsilon$-DC when $\varepsilon=0$.
In this section we first show that $0$-DC doesn't imply DC and, moreover,
that $0$-DC is in itself too weak to capture an adequate
notion of DC with an instantaneous reaction-time (in the sense that a $0$-dynamic ES needs not come in the form of a tree-like structure).
In light of this we will introduce a stronger notion,
which is named \emph{ordered-Dynamic-Consistency (\oDCC)};
this will turn out to be an adequate (tree-like) notion of DC with an instantaneous reaction-time.
Let us provide next an example of a \CSTN $\Gamma_{\Box}$ which is $0$-DC but not DC.
\begin{Ex}[\CSTN $\Gamma_{\Box}$]\label{EX:CSTN_box}
Consider the following \CSTN: \[\Gamma_{\Box}=(V_\Box, A_\Box, L_\Box, \Ord_\Box, \Ord{V}_\Box, P_\Box),\]
where:
\begin{figure}[!htb]
\centering
\begin{tikzpicture}[arrows=->,scale=0.65,node distance=4 and 2.5]
	\node[node, scale=1.3, label={above:$\sizedcircled{.75}{\bf 1}$}] (one) {$\top$};
	\node[node, scale=1.3, below =of one, xshift=0ex, yshift=0ex,
		label={above, xshift=-1ex, yshift=-.25ex : \footnotesize $b?$}] (beta) {$B$};
	\node[node, scale=1.3, left=of beta, xshift=-15ex, label={above : \footnotesize $a?$}] (alpha) {$A$};
	\node[node, scale=1.3, right=of beta, xshift=15ex, label={above : \footnotesize $c?$}] (gamma) {$C$};
 	\node[node, scale=1.3, label={below:$\sizedcircled{.75}{\bf 0}$}, below=of beta, yshift=0ex] (zero) {$\bot$};
	 \node[xshift=12ex, left = of alpha] (fakeL) {};
 	 \node[xshift=-12ex, right = of gamma] (fakeR) {};
	%arcs %%%%%%% zero/one arcs %%%%%
	\draw[-] (zero) to [bend left=55] node[xshift=-12.5ex, yshift=16ex,above] {$+1$} (fakeL.north);
	\draw[] (fakeL) to [bend left=55] node[xshift=0ex, yshift=0ex,above] {} (one);
	\draw[-] (one) to [bend left=55] node[xshift=12.5ex, yshift=-18ex, above] {$-1$} (fakeR.south);
	\draw[] (fakeR) to [bend left=55] node[xshift=0ex, yshift=0ex, above] {} (zero);
	%%%%%%% alpha's arcs %%%%%%
	\draw[] (zero) to [bend left=25] node[xshift=-3.25ex, yshift=2ex,below] {\footnotesize $0,\neg b$} (alpha);
	\draw[] (alpha) to [bend left=0] node[xshift=1ex, yshift=-.5ex,below] {\footnotesize $0$} (zero);
	\draw[] (zero) to [bend right=25] node[above] {\footnotesize $0,c$} (alpha);
	\draw[] (alpha) to [] node[xshift=-6ex, below] {\footnotesize $0, b \neg c$} (one.south);
	%%%%%%% beta's arcs %%%%%
	\draw[] (zero) to [bend left=25] node[xshift=-.4ex, yshift=-.75ex, above] {\footnotesize $0,\neg a$} (beta);
	\draw[] (beta) to [bend left=0] node[below, xshift=.6ex] {\footnotesize $0$} (zero);
	\draw[] (zero) to [bend right=25] node[above, xshift=.4ex, yshift=.75ex] {\footnotesize $0,\neg c$} (beta);
	\draw[] (beta) to [] node[xshift=.2ex, yshift=-3ex, above] {\footnotesize $0, ac$} (one.south);
	%%%%%%% gamma's arcs %%%%%%
	\draw[] (zero) to [bend left=25] node[above, xshift=1ex, yshift=.75ex] {\footnotesize $0,a$} (gamma);
	\draw[] (gamma) to [bend right=0] node[below] {\footnotesize $0$} (zero);
	\draw[] (zero) to [bend right=25] node[above, xshift=3.8ex, yshift=-.5ex] {\footnotesize $0,b$} (gamma);
	\draw[] (gamma) to [] node[xshift=6.5ex, below] {\footnotesize $0, \neg a \neg b$} (one.south);
\end{tikzpicture}
\caption{A \CSTN $\Gamma_\Box$ which is $0$-DC but not DC.}
\label{FIG:cstn_box}
\end{figure}

The intuition underlying $\Gamma_{\Box}$ is that of considering three observation time-points, $A$, $B$ and $C$ (observing propositional letters $a,b,c$, respectively),
each of which must be scheduled either at time $0$ or at time $1$
	depending on the propositional outcome of the other two propositional letters (\ie $A$ depending on $b,c$; $B$ on $a,c$; $C$ on $a,b$).
The idea is that of choosing the labels so that to introduce sort of an \emph{instantaneous circularity} condition in the corresponding temporal constraints, which
can be resolved in case of $\epsilon=0$, but not in case of any $\epsilon>0$ no matter how small.
For this, the prescribed rules will be the following:
 $A$ must be scheduled at time 0 either if $b$ is false or $c$ is true, otherwise $A$ must be scheduled at time $1$;
 $B$ must be scheduled at time 0 either if $a$ is false or $c$ is false, otherwise $C$ must be scheduled at time $1$;
 finally, $C$ must be scheduled at time 0 either if $a$ is true or $b$ is true,
 	otherwise $C$ is scheduled at time $1$.

Formally, this can be encoded as follows:

-- $V_\Box = \{ \bot, \top, A, B, C \}$;

-- $A_\Box = \{ (\top-\bot\leq 1, \lambda), (\bot-\top\leq -1, \lambda),
			(\top-A\leq 0, b\wedge \neg c), (\top-B\leq 0, a\wedge c),
			(\top-C\leq 0, \neg a \wedge \neg b), (\bot-A\leq 0, \lambda), (A-\bot\leq 0, \neg b), (A-\bot\leq 0, c),
			(\bot-B\leq 0, \lambda), (B-\bot\leq 0, \neg a), (A-\bot\leq 0, \neg c),
			(\bot-C\leq 0,\lambda), (C-\bot\leq 0, a), (C-\bot\leq 0, b) \}$;

-- $L_\Box(A)=L_\Box(B)=L_\Box(C)=L_\Box(\bot)=L_\Box(\top)= \lambda$;

-- $\Ord_\Box(a)= A$, $\Ord_\Box(b)= B$, $\Ord_\Box(c)= C$;

-- $\Ord{V}_\Box= \{A,B,C\}$;

-- $P_\Box = \{a,b,c\}$.

See \figref{FIG:cstn_box} for an illustration of $\Gamma_\Box$.
\end{Ex}

\begin{Prop}
The \CSTN $\Gamma_{\Box}$ (Example~\ref{EX:CSTN_box}, \figref{FIG:cstn_box}) is $0$-DC.
\end{Prop}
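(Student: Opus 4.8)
The plan is to display an explicit execution strategy $\sigma\in\S_{\Gamma_\Box}$ and to verify directly that it is both viable and $0$-dynamic, \ie that it satisfies every $H_0$-constraint (Definition~\ref{def:epsilonconsistency} with $\varepsilon=0$). Since the two constraints $\langle\top-\bot\leq 1,\lambda\rangle$ and $\langle\bot-\top\leq -1,\lambda\rangle$ force $\top$ to occur exactly one time unit after $\bot$, I would normalize $[\sigma(s)]_\bot=0$ and $[\sigma(s)]_\top=1$ for every $s\in\Sigma_{P_\Box}$. Writing $\psi_a := b\wedge\neg c$, $\psi_b := a\wedge c$, and $\psi_c := \neg a\wedge\neg b$, I set $[\sigma(s)]_{\Ord_p}=1$ whenever $s(\psi_p)=1$ and $[\sigma(s)]_{\Ord_p}=0$ otherwise, for each $p\in\{a,b,c\}$; thus every event is scheduled at time $0$ or at time $1$, and $\Ord_p$ lands at time $1$ exactly in the scenarios satisfying $\psi_p$.

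The first thing I would record is a structural fact: $\psi_a,\psi_b,\psi_c$ are pairwise mutually exclusive, since each of $\psi_a\wedge\psi_b$, $\psi_a\wedge\psi_c$, $\psi_b\wedge\psi_c$ contains a complementary pair of literals. Hence in every scenario \emph{at most one} of $A,B,C$ is scheduled at time $1$. Viability of $\sigma$ then follows by inspecting the constraints in $A_\Box$: a constraint $\langle\top-\Ord_p\leq 0,\psi_p\rangle$ holds because $s(\psi_p)=1$ forces $[\sigma(s)]_{\Ord_p}=1=[\sigma(s)]_\top$; a constraint $\langle\Ord_p-\bot\leq 0,\ell\rangle$ whose label $\ell$ falsifies $\psi_p$ holds because then $[\sigma(s)]_{\Ord_p}=0=[\sigma(s)]_\bot$; the arcs $\langle\bot-\Ord_p\leq 0,\lambda\rangle$ hold since $[\sigma(s)]_{\Ord_p}\geq 0$; and $\langle\top-\bot\leq 1,\lambda\rangle$, $\langle\bot-\top\leq -1,\lambda\rangle$ hold with equality.

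I would then verify, for all $s_1,s_2\in\Sigma_{P_\Box}$ and all $u\in V_\Box$, the constraint
\[
[\sigma(s_1)]_u \;\geq\; \min\Big(\{[\sigma(s_2)]_u\}\cup\{[\sigma(s_1)]_v\mid v\in\Delta(s_1;s_2)\}\Big).
\]
Since all execution times are $0$ or $1$, this is immediate when $[\sigma(s_1)]_u=1$, and also when $u\in\{\bot,\top\}$ (there $[\sigma(s_1)]_u=[\sigma(s_2)]_u$). The only case left is $u=\Ord_p$ with $[\sigma(s_1)]_u=0$ and $[\sigma(s_2)]_u=1$, and for it it suffices to exhibit some $v\in\Delta(s_1;s_2)$ with $[\sigma(s_1)]_v=0$. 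Note that, since all node labels are $\lambda$, $\Delta(s_1;s_2)$ is exactly the set of observation events of the variables on which $s_1$ and $s_2$ disagree, and here $s_1\neq s_2$.

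The crux — and the step I expect to be the main obstacle — is producing that time-$0$ distinguishing observation. Because $[\sigma(s_2)]_{\Ord_p}=1$, scenario $s_2$ satisfies $\psi_p$, so by the mutual-exclusivity fact $\Ord_p$ is the \emph{unique} observation event scheduled at time $1$ in $s_2$. If $\sigma$ schedules no observation event at time $1$ in $s_1$, then any variable $q$ with $s_1(q)\neq s_2(q)$ works, since $\Ord_q\in\Delta(s_1;s_2)$ and $[\sigma(s_1)]_{\Ord_q}=0$. Otherwise some $\Ord_r$ is the unique observation event at time $1$ in $s_1$, with $r\neq p$ (as $[\sigma(s_1)]_{\Ord_p}=0$), and I claim $s_1$ and $s_2$ must disagree on some variable $q\neq r$: were they to agree on all variables other than $r$, then $\psi_r$ — which depends only on the two variables distinct from $r$ — would take the same truth value under $s_1$ and $s_2$, forcing $\Ord_r$ to be scheduled at time $1$ in $s_2$ too; but $\Ord_p\neq\Ord_r$ is already the unique such event of $s_2$, a contradiction. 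Choosing such a $q$, the event $\Ord_q$ belongs to $\Delta(s_1;s_2)$ and, being distinct from the unique time-$1$ event $\Ord_r$ of $s_1$, satisfies $[\sigma(s_1)]_{\Ord_q}=0$, which closes the verification and hence the proof.
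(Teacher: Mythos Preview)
Your proof is correct and follows essentially the same approach as the paper: you exhibit the very same execution strategy $\sigma_\Box$ (your $\psi_p$-description coincides with the paper's formulas), verify viability by inspecting the constraints, and establish $0$-dynamicity by showing that whenever $[\sigma(s_1)]_u=0<[\sigma(s_2)]_u$ there is a distinguishing observation already at time~$0$ in $s_1$. Where the paper discharges this last step by visual inspection of \figref{FIG:ES_plus_cstn_box}, you give a cleaner structural argument exploiting that each $\psi_p$ is independent of the variable $p$ itself; this is a nice touch, but the overall route is the same.
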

\begin{proof}
Consider the execution strategy $\sigma_{\Box} : \Sigma_{P_\Box}\rightarrow \Psi_{V_\Box}$:

-- $[\sigma_{\Box}(s)]_A\triangleq s(a\wedge b\wedge \neg c) + s(\neg a\wedge b\wedge \neg c)$;

-- $[\sigma_{\Box}(s)]_B\triangleq s(a\wedge b\wedge c) + s(a\wedge \neg b\wedge c)$;

-- $[\sigma_{\Box}(s)]_C\triangleq s(\neg a\wedge \neg b\wedge \neg c) + s(\neg a\wedge \neg b\wedge c)$;

-- $[\sigma_{\Box}(s)]_\bot\triangleq 0$ and $[\sigma_{\Box}(s)]_\top\triangleq 1$, for every $s\in\Sigma_P$.

An illustration of $\sigma_{\Box}$ is offered in Fig~\ref{FIG:ES_cstn_box}.
Three cubical graphs are depicted in which every node is labelled with two coordinates (\ie vertex and scenario)
	as $v_s=(v,s)$ for some $(v,s)\in V_\Box\times\Sigma_{P_\Box}$:
an edge connects ${v_1}_{s_1}$ and ${v_2}_{s_2}$ if and only if:
(i) the vertex-coordinate is the same \ie $v_1=v_2$ and (ii) the
Hamming distance between the scenario-coordinates $s_1$ and $s_2$ is unitary;
each scenario $s\in\Sigma_{P_\Box}$ is represented as $s=\alpha\beta\gamma$ for $\alpha,\beta,\gamma\in \{0,1\}$,
where $s(a)=\alpha$, $s(b)=\beta$, $s(c)=\gamma$; moreover, each node $v_s=(v,s)\in V_{\Box}\times\Sigma_{P_{\Box}}$
is filled in black if $[\sigma_{\Box}(s)]_v=0$, and in white if $[\sigma_{\Box}(s)]_v=1$.
So the underlying intuition is that all three $3$-cubes own both black and white nodes, but each of them,
in its own dimension, decomposes into two identically colored $2$-cubes.
\begin{figure}[!htb]
\centering
\begin{tikzpicture}[arrows=-,scale=1.2]
	%%%%%%%%%%% A-CUBE %%%%%%%%%%%%%%%%%%
	\node[node, fill=black!50, scale=.6, label={below left, xshift=1.4ex, yshift=.5ex : $A_{000}$}] (A1) {};
	\node[node, scale=.6, right = of A1, label={below left, xshift=1.4ex, yshift=.5ex: $A_{010}$}]  (A2) {};
	\node[node, fill=black!50, scale=.6, above = of A2, label={below left, xshift=1.4ex, yshift=.9ex : $A_{011}$}] (A3) {};
	\node[node, fill=black!50, scale=.6, above =of A1, label={below left, xshift=1.4ex, yshift=.9ex : $A_{001}$}] (A4) {};
	\node[node, fill=black!50, scale=.6, xshift=5ex, yshift=7ex,
				label={right, xshift=-1.2ex, yshift=-1.1ex : $A_{100}$}] (A5) {};
	\node[node, scale=.6, right = of A5, label={right, xshift=-2ex, yshift=-2ex: $A_{110}$}] (A6) {};
	\node[node, fill=black!50, scale=.6, above = of A6, label={above left, xshift=0ex, yshift=-1ex : $A_{111}$}] (A7) {};
	\node[node, fill=black!50, scale=.6, above =of A5, label={above left, xshift=0ex, yshift=-1ex : $A_{101}$}] (A8) {};
	%%%%%%%%%%%%% arcs %%%%%%%%%%%%%%%%%%%%%%%%%%
	\draw[] (A1) to [] node[] {} (A2);
	\draw[] (A2) to [] node[] {} (A3);
	\draw[] (A3) to [] node[] {} (A4);
	\draw[] (A4) to [] node[] {} (A1);
	\draw[dashed] (A5) to [] node[] {} (A6);
	\draw[] (A6) to [] node[] {} (A7);
	\draw[] (A7) to [] node[] {} (A8);
	\draw[dashed] (A8) to [] node[] {} (A5);
	\draw[dashed] (A1) to [] node[] {} (A5);
	\draw[] (A2) to [] node[] {} (A6);
	\draw[] (A3) to [] node[] {} (A7);
	\draw[] (A8) to [] node[] {} (A4);
	%%%%%%%%%%% B-CUBE %%%%%%%%%%%%%%%%%%
	\node[node, fill=black!50, right=of A2, xshift=1.5ex, scale=.6,
		label={below left, xshift=1.4ex, yshift=.5ex : $B_{000}$}] (B1) {};
	\node[node, fill=black!50, scale=.6, right = of B1, label={below left, xshift=1.4ex, yshift=.5ex: $B_{010}$} ] (B2) {};
	\node[node, fill=black!50, scale=.6, above = of B2, label={below left, xshift=1.4ex, yshift=.9ex : $B_{011}$}] (B3) {};
	\node[node, fill=black!50, scale=.6, above =of B1,  label={below left, xshift=1.4ex, yshift=.9ex : $B_{001}$}] (B4) {};
	\node[node, fill=black!50, scale=.6, right = of A2, xshift=6.5ex,
		yshift=7ex, label={right, xshift=-.8ex, yshift=-1.1ex : $B_{100}$} ] (B5) {};
	\node[node, fill=black!50, scale=.6, right = of B5,  label={right, xshift=-2ex, yshift=-2.2ex : $B_{110}$}  ] (B6) {};
	\node[node, scale=.6, above = of B6, label={above left, xshift=0ex, yshift=-1ex : $B_{111}$} ] (B7) {};
	\node[node, scale=.6, above =of B5, label={above left, xshift=0ex, yshift=-1ex : $B_{101}$} ] (B8) {};
	%%%%%%%%%%% arcs %%%%%%%%%%%%%%%%%%
	\draw[] (B1) to [] node[] {} (B2);
	\draw[] (B2) to [] node[] {} (B3);
	\draw[] (B3) to [] node[] {} (B4);
	\draw[] (B4) to [] node[] {} (B1);
	\draw[dashed] (B5) to [] node[] {} (B6);
	\draw[] (B6) to [] node[] {} (B7);
	\draw[] (B7) to [] node[] {} (B8);
	\draw[dashed] (B8) to [] node[] {} (B5);
	\draw[dashed] (B1) to [] node[] {} (B5);
	\draw[] (B2) to [] node[] {} (B6);
	\draw[] (B3) to [] node[] {} (B7);
	\draw[] (B8) to [] node[] {} (B4);
	%%%%%%%%%%%% C-CUBE %%%%%%%%%%%%%%%%%%
	\node[node, right =of B2, xshift=1.5ex, scale=.6, label={below left, xshift=1.4ex, yshift=.5ex : $C_{000}$} ] (C1) {};
	\node[node, fill=black!50,scale=.6, right = of C1, label={below left, xshift=1.4ex, yshift=.5ex: $C_{010}$} ] (C2) {};
	\node[node, fill=black!50,scale=.6, above = of C2, label={below left, xshift=1.4ex, yshift=.9ex : $C_{011}$} ] (C3) {};
	\node[node, scale=.6, above=of C1, label={below left, xshift=1.4ex, yshift=.9ex : $C_{001}$} ] (C4) {};
	\node[node, fill=black!50,scale=.6, right = of B2, xshift=6.5ex,
		yshift=7ex, label={right, xshift=-.8ex, yshift=-1.1ex : $C_{100}$} ] (C5) {};
	\node[node, fill=black!50,scale=.6, right = of C5, label={right, xshift=-2ex, yshift=-2.2ex: $C_{110}$}  ] (C6) {};
	\node[node, fill=black!50,scale=.6, above = of C6, label={above left, xshift=0ex, yshift=-1ex : $C_{111}$} ] (C7) {};
	\node[node, fill=black!50,scale=.6, above =of C5, label={above left, xshift=0ex, yshift=-1ex : $C_{101}$} ] (C8) {};
	%%%%%%%%%%% arcs %%%%%%%%%%%%%%%%%%
	\draw[] (C1) to [] node[] {} (C2);
	\draw[] (C2) to [] node[] {} (C3);
	\draw[] (C3) to [] node[] {} (C4);
	\draw[] (C4) to [] node[] {} (C1);
	\draw[dashed] (C5) to [] node[] {} (C6);
	\draw[] (C6) to [] node[] {} (C7);
	\draw[] (C7) to [] node[] {} (C8);
	\draw[dashed] (C8) to [] node[] {} (C5);
	\draw[dashed] (C1) to [] node[] {} (C5);
	\draw[] (C2) to [] node[] {} (C6);
	\draw[] (C3) to [] node[] {} (C7);
	\draw[] (C8) to [] node[] {} (C4);
	%%%%%%%%%%%%%%%%%%%%%%%%%%%%%%%%%%%%%
\end{tikzpicture}
\caption{The ES $\sigma_{\Box}$ for the \CSTN $\Gamma_\Box$.}
\label{FIG:ES_cstn_box}
\end{figure}
\figref{FIG:ES_plus_cstn_box} offers another visualization of $\sigma_{\Box}$
in which every component of the depicted graph corresponds to a restriction
\STN ${\Gamma^+_{\Box}}_s$ for some $s\in\Sigma_{P_\Box}$, where two scenarios $s_i, s_j\in \Sigma_{P_\Box}$ are
grouped together whenever ${\Gamma^+_\Box}_{s_i}={\Gamma^+_\Box}_{s_j}$.
Also note that \figref{FIG:ES_plus_cstn_box} represents each scenario as a triplet of truth values
	 $\big(s(a), s(b), s(c)\big)$, \eg 111, 110, etc (instead of triplet of literals $abc, ab\neg c$, etc).
It is easy to see from \figref{FIG:ES_plus_cstn_box} that $\sigma_{\Box}$ is viable for $\Gamma_\Box$.
In order to check that $\sigma_{\Box}$ is $0$-dynamic, look again at \figref{FIG:ES_plus_cstn_box},
and notice that for every $s_i, s_j\in \Sigma_{\Box}$, where $s_i\neq s_j$, there exists a time-point
$X\in\{A,B,C\}$ such that $[\sigma_{\Box}(s_i)]_X=0=[\sigma_{\Box}(s_j)]_X$ and $s_i(X)\neq s_j(X)$.
With this in mind it is easy to check that all of the $H_0$ constraints are thus satisfied by $\sigma_{\Box}$.
Therefore, the \CSTN $\Gamma_\Box$ is $0$-DC.
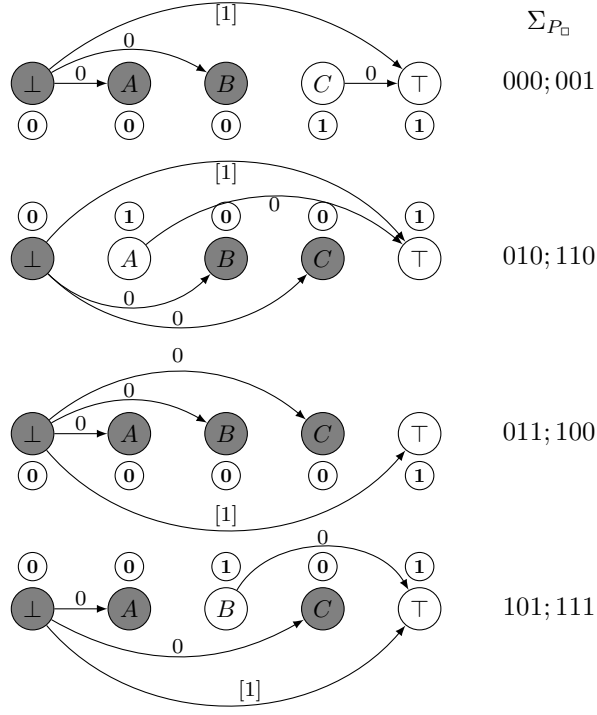
\begin{figure}[!htb]
\centering
\begin{tikzpicture}[arrows=->,scale=0.7,node distance=2.5 and 1]
	\node[node, scale=1.3, label={below:\sizedcircled{.75}{$\bf 1$}}] (one1) {$\top$};
	\node[node, scale=1.3, label={below:\sizedcircled{.75}{$\bf 1$}}, left=of one1] (gamma1) {$C$};
	\node[node, fill=black!50, scale=1.3, label={below:\sizedcircled{.75}{$\bf 0$}}, left=of gamma1] (beta1) {$B$};
	\node[node, fill=black!50, scale=1.3, label={below:\sizedcircled{.75}{$\bf 0$}}, left=of beta1] (alpha1) {$A$};
 	\node[node, fill=black!50, scale=1.3, left=of alpha1, label={below:\sizedcircled{.75}{$\bf 0$}}] (zero1) {$\bot$};
	\node[right=of one1, yshift=5ex] (fake1) {$\Sigma_{P_\Box}$};
	\node[right=of one1, xshift=-2ex] (s1) {$000; 001$};

	\node[node, scale=1.3, below=of one1, label={above:\sizedcircled{.75}{$\bf 1$}}] (one2) {$\top$};
	\node[node, fill=black!50, scale=1.3, label={above:\sizedcircled{.75}{$\bf 0$}}, left=of one2] (gamma2) {$C$};
	\node[node, fill=black!50, scale=1.3, label={above:\sizedcircled{.75}{$\bf 0$}}, left=of gamma2] (beta2) {$B$};
	\node[node, scale=1.3, label={above:\sizedcircled{.75}{$\bf 1$}}, left=of beta2] (alpha2) {$A$};
 	\node[node, fill=black!50, scale=1.3, left=of alpha2, label={above:\sizedcircled{.75}{$\bf 0$}}] (zero2) {$\bot$};
	\node[right=of one2, xshift=-2ex] (s2) {$010; 110$};

	\node[node, scale=1.3, below=of one2, label={below:\sizedcircled{.75}{$\bf 1$}}] (one3) {$\top$};
	\node[node, fill=black!50, scale=1.3, label={below:\sizedcircled{.75}{$\bf 0$}}, left=of one3] (gamma3) {$C$};
	\node[node, fill=black!50, scale=1.3, label={below:\sizedcircled{.75}{$\bf 0$}}, left=of gamma3] (beta3) {$B$};
	\node[node, fill=black!50, scale=1.3, label={below:\sizedcircled{.75}{$\bf 0$}}, left=of beta3] (alpha3) {$A$};
 	\node[node, fill=black!50, scale=1.3, left=of alpha3, label={below:\sizedcircled{.75}{$\bf 0$}}] (zero3) {$\bot$};
	\node[right=of one3, xshift=-2ex] (s3) {$011; 100$};

	\node[node, scale=1.3, below=of one3, label={above:\sizedcircled{.75}{$\bf 1$}}] (one4) {$\top$};
	\node[node, fill=black!50, scale=1.3, label={above:\sizedcircled{.75}{$\bf 0$}}, left=of one4] (gamma4) {$C$};
	\node[node, scale=1.3, label={above:\sizedcircled{.75}{$\bf 1$}}, left=of gamma4] (beta4) {$B$};
	\node[node, fill=black!50, scale=1.3, label={above:\sizedcircled{.75}{$\bf 0$}}, left=of beta4] (alpha4) {$A$};
 	\node[node, fill=black!50, scale=1.3, left=of alpha4, label={above:\sizedcircled{.75}{$\bf 0$}}] (zero4) {$\bot$};
	\node[right=of one4, xshift=-2ex] (s4) {$101; 111$};

  %%%%%%%%%%%%%%%%%%%%%%%%%%%%%%%%%%%%%%%%%%%%%%%%%%%%%%%%%%%%%%%%%%%%%%%%%%%%%%%%%%%%%%%%%%%%%%%%%%%%%%%%%%%%%%%%%%
	\draw[] (zero1) to [bend left=0] node[xshift=0ex, yshift=-.5ex, above] {\footnotesize $0$} (alpha1);
	\draw[] (zero1) to [bend left=30] node[xshift=0ex, yshift=-.5ex, above] {\footnotesize $0$} (beta1);
	\draw[] (zero1) to [bend left=40] node[xshift=0ex, yshift=.7ex, below] {\footnotesize $[1]$} (one1);
	\draw[] (gamma1) to [bend left=0] node[xshift=0ex, yshift=-.5ex, above] {\footnotesize $0$} (one1);

	\draw[] (zero2) to [bend left=50] node[xshift=0ex, yshift=.7ex, below] {\footnotesize $[1]$} (one2);
	\draw[] (zero2) to [bend right=45] node[xshift=0ex, yshift=-.6ex, above] {\footnotesize $0$} (beta2);
	\draw[] (zero2) to [bend right=45] node[xshift=0ex, yshift=-.6ex, above] {\footnotesize $0$} (gamma2);
	\draw[] (alpha2) to [bend left=40] node[xshift=0ex, yshift=.7ex, below] {\footnotesize $0$} (one2);

	\draw[] (zero3) to [bend left=0] node[xshift=0ex, yshift=-.5ex, above] {\footnotesize $0$} (alpha3);
	\draw[] (zero3) to [bend left=30] node[xshift=0ex, yshift=-.5ex, above] {\footnotesize $0$} (beta3);
	\draw[] (zero3) to [bend right=50] node[xshift=0ex, yshift=-.5ex, above] {\footnotesize $[1]$} (one3);
	\draw[] (zero3) to [bend left=40] node[xshift=0ex, yshift=0ex, above] {\footnotesize $0$} (gamma3);

	\draw[] (zero4) to [bend left=0] node[xshift=0ex, yshift=-.5ex, above] {\footnotesize $0$} (alpha4);
	\draw[] (beta4) to [bend left=60] node[xshift=0ex, yshift=-.5ex, above] {\footnotesize $0$} (one4);
	\draw[] (zero4) to [bend right=50] node[xshift=2ex, yshift=-.5ex, above] {\footnotesize $[1]$} (one4);
	\draw[] (zero4) to [bend right=30] node[xshift=0ex, yshift=-.5ex, above] {\footnotesize $0$} (gamma4);
  %%%%%%%%%%%%%%%%%%%%%%%%%%%%%%%%%%%%%%%%%%%%%%%%%%%%%%%%%%%%%%%%%%%%%%%%%%%%%%%%%%%%%%%%%%%%%%%%%%%%%%%%%%%%%%%%
\end{tikzpicture}
\caption{The restrictions ${\Gamma^+_{\Box}}_s$ for $s\in\Sigma_{P_\Box}$,
where the execution times $[\sigma_{\Box}(s)]_v\in \{\bf 0,1\}$ are depicted in a circled bold face.}
\label{FIG:ES_plus_cstn_box}
\end{figure}
\end{proof}

\begin{Prop}\label{prop:gamma_box_not_dc}
The \CSTN $\Gamma_\Box$ is not DC.
\end{Prop}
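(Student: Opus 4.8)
The plan is to argue by contradiction. Suppose $\Gamma_\Box$ admits $\sigma\in\S_{\Gamma_\Box}$ that is both viable and dynamic, and watch $\sigma$ on two scenarios that differ only in the truth value of $c$: namely $s$ with $s(a)=0$, $s(b)=1$, $s(c)=1$, and $s'$ with $s'(a)=0$, $s'(b)=1$, $s'(c)=0$. I will show that $\sigma$ cannot meet the requirements on both at once.

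The crucial first step --- and the one I expect to need the most care --- is to notice that the planner is \emph{blind} at the instant $\bot$ is executed. Since $A_\Box$ contains $(\bot-X\le 0,\,\lambda)$ for each $X\in\{A,B,C\}$, every viable schedule places the three observation events no earlier than $\bot$; hence, for every scenario $r$, the history (Definition~\ref{def:scenario_history}) of the instant $[\sigma(r)]_\bot$ is the empty label $\lambda$. As $\lambda$ is consistent with every scenario, the dynamic condition (Definition~\ref{def:consistency}) applied to $v=\bot$ forces $[\sigma(r)]_\bot$ to be one and the same real for all $r$; translating time uniformly we may take it to be $0$, after which $(\top-\bot\le1,\lambda)$ and $(\bot-\top\le-1,\lambda)$ give $[\sigma(r)]_\top=1$ for all $r$. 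The same empty-history observation yields a stronger fact we shall use: any event that $\sigma$ executes at time $0$ in one scenario must be executed at time $0$ in \emph{every} scenario.

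It remains to locate $A$ (the observation event of $a$) on $s$ and on $s'$. In $s$ we have $s(c)=1$, so the labelled constraint $(A-\bot\le 0,\,c)\in A_\Box$ is required and gives $[\sigma(s)]_A\le 0$, which together with $[\sigma(s)]_A\ge[\sigma(s)]_\bot=0$ forces $[\sigma(s)]_A=0$. In $s'$ we have $s'(b)=1$ and $s'(c)=0$, so $(\top-A\le 0,\,b\wedge\neg c)\in A_\Box$ is required and gives $[\sigma(s')]_A\ge[\sigma(s')]_\top=1$. But by the stronger fact above, since $A$ is executed at time $0$ in $s$ it must also be executed at time $0$ in $s'$, contradicting $[\sigma(s')]_A\ge1$. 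Therefore $\Gamma_\Box$ admits no viable dynamic ES, i.e.\ it is not \DCC. (If one prefers not to normalize time, keep $[\sigma(\cdot)]_\bot$ symbolic and chain $[\sigma(s')]_\bot=[\sigma(s)]_\bot$ and $[\sigma(s')]_A=[\sigma(s)]_A$ --- both from the dynamic condition and the empty history --- with $[\sigma(s)]_A=[\sigma(s)]_\bot$ and $[\sigma(s')]_A\ge[\sigma(s')]_\bot+1$ --- both from viability.)
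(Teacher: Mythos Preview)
Your proof is correct. The argument is clean: viability pins $\bot$ below every observation event, so the history at $[\sigma(r)]_\bot$ is empty, forcing $\bot$ (and hence $\top=\bot+1$) to a common value across scenarios; the same empty-history observation propagates any time-$0$ event to all scenarios; and then the two scenarios $(0,1,1)$ and $(0,1,0)$ pull $A$ to $0$ and to $\geq 1$ respectively, a contradiction.

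The paper's route is organized differently. It first asserts that, up to translation, there is only one viable ES for $\Gamma_\Box$ (namely the explicitly described $\sigma_\Box$), then argues structurally that $\sigma_\Box$ is not dynamic because \emph{no} observation event is scheduled at time $0$ in \emph{every} scenario --- i.e., there is no globally-first observation. Your proof bypasses the uniqueness claim entirely and extracts a contradiction from just one pair of scenarios, making the empty-history mechanism (which the paper leaves implicit) fully explicit. What you lose is the global picture of the cyclic obstruction among $A,B,C$; what you gain is a self-contained argument that does not depend on first cataloguing all viable schedules.
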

\begin{proof}
Let $\sigma$ be a viable ES for $\Gamma_\Box$. Then, $\sigma$ must be the ES $\sigma_\Box$ depicted in \figref{FIG:ES_plus_cstn_box},
there is no other choice here. Let $\hat{s}\in \Sigma_{P_\Box}$. Then, it is easy to check from \figref{FIG:ES_plus_cstn_box} that:
(i) $[\sigma_{\Box}(\hat{s})]_\bot=0$, $[\sigma_{\Box}(\hat{s})]_\top=1$,
	and it holds $[\sigma_{\Box}(\hat{s})]_X \in \{0,1\}$ for every $X\in \{A,B,C\}$;
(ii) there exists at least two observation time-points $X\in \{A,B,C\}$ such that $[\sigma(\hat{s})]_X = 0$; still,
(iii) there is no $X\in \{A,B,C\}$ such that $[\sigma(s)]_X = 0$ for every $s\in \Sigma_{P_{\Box}}$,
	\ie no observation time-point is executed first at all possible scenarios. Therefore, the ES $\sigma_{\Box}$ is not dynamic.
\end{proof}

In order to tackle on the tricky situations just observed above,
	we now introduce a stronger notion of dynamic consistency;
	it is named \emph{ordered-Dynamic-Consistency (\oDCC)},
and it takes explicitly into account an additional ordering between the observation time-points scheduled at the same execution time.
Basically, this will preserve the semantics of DC, but
meanwhile ruling out the possibility of crafty constructions
	like the $\Gamma_\Box$ of Example~\ref{EX:CSTN_box}.

\begin{Def}[$\pi$-Execution-Strategy]\label{def:executionstrategy}
An \emph{ordered-Execution-Strategy ($\pi$-ES)} for $\Gamma$ is a mapping:
\[ \sigma : s\mapsto ([\sigma(s)]^t, [\sigma(s)]^\pi), \]
where $s\in\Sigma_P$, $[\sigma(s)]^t\in\Phi_V$, and finally,
$[\sigma(s)]^{\pi}: \Ord{V}^+_s \rightleftharpoons \{1, \ldots, |\Ord{V}^+_s|\}$ is bijective.
The set of $\pi$-ES of $\Gamma$ is denoted by $\S_{\Gamma}$.
For any $s\in\Sigma_P$, the \emph{execution time} of a time-point $v\in V^+_s$ in the schedule
$[\sigma(s)]^t\in\Phi_{V^+_s}$ is denoted by $[\sigma(s)]^t_v\in\RR$;
the \emph{position} of an observation $\Ord_p\in {\Ord}V^+_s$ in
$\sigma(s)$ is $[\sigma(s)]^{\pi}_{\Ord_p}$.

We require positions to be \emph{coherent} \wrt execution times,
\ie \[\forall (\Ord_p,\Ord_q\in \Ord{V}^+_s)\text{ if }[\sigma(s)]^{t}_{\Ord_p} < [\sigma(s)]^{t}_{\Ord_q},
\text{ then } [\sigma(s)]^{\pi}_{\Ord_p} < [\sigma(s)]^{\pi}_{\Ord_q}.\] In addition,
it is worth adopting the notation:
	\[[\sigma(s)]^\pi_v\triangleq |\Ord{V}|+1, \text{ whenever } v\in V^+_s\setminus \Ord{V}.\]
\end{Def}

\begin{Rem}
The above definition of $\pi$-ES also allows us to relax the \emph{WD2} assumption for \CSTN{s}, as follows.

(\emph{WD2}') For each $p\in P$ and each $u\in V$ such that either $p$ or $\neg p$ appears in $L(u)$,
 we require:
$\Sub(L(u), L(\Ord_p))$, and $( \Ord_p-u\leq 0, L(u)) \in A$;
plus, in case $u\in \Ord V$ is also an observation time-point and it is scheduled at the same instant of time of $\Ord_p$,
we require $u$ to be placed \emph{after} $\Ord_p$ in the inner ordering relation of the corresponding $\pi$-ES.

Intuitively this means that, whenever a time-point label $L(u)$
contains some propositional letter $p\in P$, and $u$ is eventually scheduled,
then $\Ord_p$ must be scheduled before or at the same instant of time of $u$,
 but still at a subsequent \emph{position} in the additional ordering
 that is induced by the $\pi$-DC.
\end{Rem}

Also the notion of history (\ie Definition~\ref{def:scenario_history}) can be extended in a natural way, by considering
both of the two coordinates $([\sigma(s)]^t, [\sigma(s)]^\pi)$.
\begin{Def}[$\pi$-History]\label{def:pi_scenario_history}
Let $\sigma\in\S_{\Gamma}$, $s\in\Sigma_P$, and let $\tau\in\RR$
and $\psi\in \{1, \ldots, |V|\}$. The \emph{ordered-history} $\oHst(\tau, \psi,s,\sigma)$ of
$\tau$ and $\psi$ in the scenario $s$, under the $\pi$-ES $\sigma$, is defined as:
\begin{align*} \oHst(\tau, \psi,s,\sigma)\triangleq \big\{ & \big(p, s(p)\big)\in P\times\{0,1\}
	\mid  \\
	& \Ord_p~\in~\Ord{V}^+_{s}, [\sigma(s)]^{t}_{\Ord_p}\leq \tau, [\sigma(s)]^{\pi}_{\Ord_p} < \psi \big\}.
\end{align*}
\end{Def}
We are finally in the position to define \oDCC.

\begin{Def}[$\pi$-Dynamic-Consistency]\label{def:pi-dc}
Any $\pi$-ES $\sigma\in \S_{\Gamma}$ is called \emph{$\pi$-dynamic} when,
for any two complete scenarios $s_1, s_2\in \Sigma_P$ and any time-point $v\in V^+_{s_1,s_2}$,
if $\tau\triangleq [\sigma(s_1)]^t_v$ and $\psi\triangleq [\sigma(s_1)]^\pi_v$, then:
\[\Con(\oHst(\tau, \psi, s_1, \sigma), s_2) \Rightarrow [\sigma(s_2)]^t_v = \tau, [\sigma(s_2)]^{\pi}_v = \psi.\]
We say that $\Gamma$ is \emph{$\pi$-dynamically-consistent (\oDCC)}
if it admits $\sigma\in\S_{\Gamma}$ which is both viable and $\pi$-dynamic.

The problem of checking whether a given \CSTN is \oDCC is named \emph{\oDCC}-Checking.
\end{Def}

\begin{Rem}\label{rem:tree_structure} Notice that,
due to the strict inequality ``$[\sigma(s)]^\pi_{\Ord_p}<\psi$" in
the definition of $\oHst(\cdot)$ (Definition~\ref{def:pi_scenario_history}), in a $\pi$-dynamic $\pi$-ES,
there must be exactly one $\Ord_{p'}\in\Ord{V}$,
for some $p'\in P$, which is executed at first
(\wrt both execution time and position) under all possible scenarios $s\in\Sigma_P$.
There is always a \emph{root} for a tree-like strategy.

Indeed, if $\tau$ is earliest time at which a strategy $\sigma$ executes some
non-zero time-point, then $\oHst(\tau,0,s,\sigma)=\emptyset$ for each
scenario $s\in\Sigma_P$; so, if $[\sigma(s)]_X = \tau$ is executed first in scenario
$s$, then it must be executed first in every scenario; otherwise, $\sigma$ is not $\pi$-dynamic.
\end{Rem}

\begin{Prop}The \CSTN $\Gamma_\Box$ is not \oDCC.\end{Prop}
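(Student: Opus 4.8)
The plan is to leverage Remark~\ref{rem:tree_structure}, which already extracts the crucial structural consequence of $\pi$-dynamicity: any viable $\pi$-dynamic $\pi$-ES must execute one fixed observation event $\Ord_{p'}$ strictly first (in both time and position) across \emph{all} scenarios. So first I would argue, exactly as in the proof of Proposition~\ref{prop:gamma_box_not_dc}, that the temporal constraints of $\Gamma_\Box$ force $[\sigma(s)]^t_\bot = 0$ and $[\sigma(s)]^t_\top = 1$ for every $s$, and that each of $A,B,C$ must land in $\{0,1\}$; moreover, the $+1/-1$ cycle through $\bot$ and $\top$ together with the scenario-dependent constraints $(\top - A \le 0, b\wedge\neg c)$, $(\top - B \le 0, a\wedge c)$, $(\top - C \le 0, \neg a\wedge\neg b)$ and the various $(X-\bot\le 0,\cdot)$ constraints pin down which of $A,B,C$ are forced to $0$ in each scenario — precisely the pattern displayed in Figure~\ref{FIG:ES_plus_cstn_box}, where in every scenario at least two of the three observation events are executed at time $0$, yet no single observation event is at time $0$ in all scenarios.

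Next I would invoke Remark~\ref{rem:tree_structure}: if $\Gamma_\Box$ were $\pi$-DC via some viable $\pi$-dynamic $\sigma$, there would be a distinguished $p'\in\{a,b,c\}$ with $\Ord_{p'}$ executed first, hence in particular with $[\sigma(s)]^\pi_{\Ord_{p'}} = 1$ for every $s\in\Sigma_{P_\Box}$; coherence of positions with execution times then gives $[\sigma(s)]^t_{\Ord_{p'}} \le [\sigma(s)]^t_X$ for every other observation $X$, and since some observation is always forced to time $0$, we get $[\sigma(s)]^t_{\Ord_{p'}} = 0$ for every $s$. But that contradicts item~(iii) established above, namely that no observation event is executed at time $0$ under all scenarios. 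This is the whole argument, so the statement follows.

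The only real content is confirming that the temporal reasoning of Proposition~\ref{prop:gamma_box_not_dc} (parts (i)--(iii)) goes through unchanged for $\pi$-ESs — which it does, because it concerns only the $t$-component $[\sigma(s)]^t$, and the viability and STN-restriction analysis in Figure~\ref{FIG:ES_plus_cstn_box} is identical. The main (very mild) obstacle is being careful that Remark~\ref{rem:tree_structure} is applied correctly: it asserts a unique first observation, so I must make sure to derive the conclusion ``$[\sigma(s)]^t_{\Ord_{p'}}=0$ for all $s$'' purely from the position/coherence machinery rather than smuggling in the time-based reasoning of the earlier proposition. Beyond that, the proof is a two-line appeal to the remark plus the already-proven structural facts about viable strategies for $\Gamma_\Box$, so I would keep it short and explicitly cite both Proposition~\ref{prop:gamma_box_not_dc} and Remark~\ref{rem:tree_structure}.
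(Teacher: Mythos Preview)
Your proposal is correct and takes essentially the same approach as the paper: the paper's proof is a terse two-line argument noting that it ``goes almost in the same way as that of Proposition~\ref{prop:gamma_box_not_dc}'' and that no observation event is executed first (at time $t=0$ and position $\psi=1$) in all scenarios, which is exactly the contradiction you obtain by combining Remark~\ref{rem:tree_structure} with item~(iii) of Proposition~\ref{prop:gamma_box_not_dc}. Your version is actually more explicit about the coherence step linking position~$1$ to time~$0$, which the paper leaves implicit.
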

\begin{proof}
The proof goes almost in the same way as that of Proposition~\ref{prop:gamma_box_not_dc}.
In particular, no observation time-point is executed first (\ie at time $t=0$ and position $\psi=1$) under all possible scenarios.
Since there is no first-in-time observation time-point, then, the ES $\sigma$ is not $\pi$-dynamic.
\end{proof}

We provide next a rather simple \CSTN $\Gamma_\pi$ which is \oDCC but not DC.
The underlying intuition being, to consider~just~one observation time-point $\Ord_p$,
which needs to be scheduled first (say, at time $0$), plus one more time-point~$X$ which
must be scheduled either at time $0$ (at the same time of the observation $\Ord_p$)
if $p$ is true, or at time $1$ otherwise.

See here below for the formal definition of $\Gamma_\pi$.

\begin{Ex} Define $\Gamma_\pi=(V_\pi, A_\pi, \Ord_\pi, \Ord{V}_\pi, P_\pi)$ as follows.
$V_\pi = \{\Ord_p, X, \top\}$, $A_\pi=\{(\top-\Ord_p\leq 1, \lambda), (\Ord_p-\top\leq -1, \lambda),
(X-\Ord_p\leq 0, p), (\top-X\leq 0, \neg p)\}$, $\Ord_\pi(p)=\Ord_p$, $\Ord{V_\pi}=\{\Ord_p\}$, $P_\pi=\{p\}$.
\figref{FIG:CSTN_pi} depicts the \CSTN $\Gamma_\pi$.
\begin{figure}[!htb]
\centering
\begin{tikzpicture}[arrows=->,scale=1,node distance=1 and 4]
%%%%%%%%%%%%%%%%%%%%%%%%%%%%%%%%%%%%%%%%%%%%%%%%%%%%%%%%%%%%%%%%%%%%%%%%%%%%%%%%%%%%%%%%%%%%%%%%%%%%%%%%%%%%%%%%
	\node[node, scale=1.3, label={above:\footnotesize $p?$}] (P) {$\Ord_p$};
	\node[node, scale=1.3, below = of P, xshift=12ex] (X) {$X$};
	\node[node, scale=1.3, right = of P] (one) {$\top$};
%%%%%%%%%%%%%%%%%%%%%%%%%%%%%%%%%%%%%%%%%%%%%%%%%%%%%%%%%%%%%%%%%%%%%%%%%%%%%%%%%%%%%%%%%%%%%%%%%%%%%%%%%%%%%%%%
	\draw[] (P) to [bend right=20] node[xshift=0ex, yshift=0ex, above] {$1$} (one);
	\draw[] (one) to [bend right=20] node[xshift=0ex, yshift=0ex, above] {$-1$} (P);
	\draw[] (P) to [bend left=0] node[xshift=-2ex, yshift=1ex, below] {$0, p$} (X);
	\draw[] (X) to [bend left=0] node[xshift=3ex, yshift=1ex, below] {$0, \neg p$} (one);
%%%%%%%%%%%%%%%%%%%%%%%%%%%%%%%%%%%%%%%%%%%%%%%%%%%%%%%%%%%%%%%%%%%%%%%%%%%%%%%%%%%%%%%%%%%%%%%%%%%%%%%%%%%%%%%%
\end{tikzpicture}
\caption{The \CSTN $\Gamma_\pi$ that is $\pi$-DC but not DC.}
\label{FIG:CSTN_pi}
\end{figure}
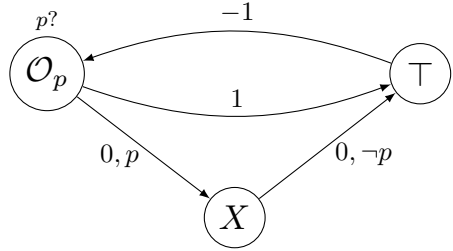
\end{Ex}
\begin{Prop}The \CSTN $\Gamma_\pi$ is \oDCC, but it is not DC.\end{Prop}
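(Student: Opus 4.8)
The plan is to treat the two claims separately, exploiting the fact that on $\Gamma_\pi$ the viable (ordered) strategies are essentially forced.

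For the $\pi$-DC direction I would exhibit one explicit viable $\pi$-ES $\sigma\in\S_{\Gamma}$ and check $\pi$-dynamicity by hand. The rigid pair $(\top-\Ord_p\le 1,\lambda),(\Ord_p-\top\le -1,\lambda)$ forces $[\sigma(s)]^t_\top=[\sigma(s)]^t_{\Ord_p}+1$ in every scenario, so set $[\sigma(s)]^t_{\Ord_p}\triangleq 0$, $[\sigma(s)]^t_\top\triangleq 1$, and branch on $X$: $[\sigma(s)]^t_X\triangleq 0$ if $s(p)=1$ and $[\sigma(s)]^t_X\triangleq 1$ if $s(p)=0$. Positions: $[\sigma(s)]^\pi_{\Ord_p}\triangleq 1$ and $[\sigma(s)]^\pi_X=[\sigma(s)]^\pi_\top\triangleq|\Ord{V}_\pi|+1=2$ (as $X,\top\notin\Ord{V}_\pi$). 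Coherence is immediate because $\Ord{V}_\pi$ is a singleton, and viability is just the two restriction \STN{s}: when $s(p)=1$ the extra constraint $X-\Ord_p\le 0$ reads $0\le 0$, and when $s(p)=0$ the extra constraint $\top-X\le 0$ reads $1\le 1$. For $\pi$-dynamicity the only nontrivial pair is $\{s_1,s_2\}$ with $s_1(p)\ne s_2(p)$. For $v\in\{\Ord_p,\top\}$ the execution time $\tau$ and the position $\psi$ are the same in $s_1$ and $s_2$, so the conclusion of the $\pi$-dynamic implication holds outright. For $v=X$ we have $\psi=[\sigma(s_1)]^\pi_X=2>1=[\sigma(s_1)]^\pi_{\Ord_p}$ and $[\sigma(s_1)]^t_{\Ord_p}=0\le\tau=[\sigma(s_1)]^t_X$, so $\Ord_p\in\Ord{V}^+_{s_1}$ already contributes to $\oHst(\tau,\psi,s_1,\sigma)$; hence that $\pi$-history contains $(p,s_1(p))$ and is inconsistent with $s_2$, making the implication vacuously true at $X$. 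Thus $\sigma$ is viable and $\pi$-dynamic, so $\Gamma_\pi$ is $\pi$-DC.

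For the ``not DC'' part I would show that no viable \emph{dynamic} ES can exist. Let $\sigma\in\S_\Gamma$ be viable; as above $[\sigma(s)]_\top=[\sigma(s)]_{\Ord_p}+1$ in both scenarios, $[\sigma(s_1)]_X\le[\sigma(s_1)]_{\Ord_p}$ when $s_1(p)=1$, and $[\sigma(s_2)]_X\ge[\sigma(s_2)]_\top$ when $s_2(p)=0$. Applying Definition~\ref{def:consistency} to $v=\Ord_p$ and the pair $(s_1,s_2)$: the defining inequality $[\sigma(s_1)]_{\Ord_p}<[\sigma(s_1)]_{\Ord_p}$ is strict, hence false, so $\Hst([\sigma(s_1)]_{\Ord_p},s_1,\sigma)=\emptyset$, $\Con(\emptyset,s_2)$ holds, and therefore $[\sigma(s_2)]_{\Ord_p}=[\sigma(s_1)]_{\Ord_p}$. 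Now apply Definition~\ref{def:consistency} to $v=X$: with $\tau\triangleq[\sigma(s_1)]_X\le[\sigma(s_1)]_{\Ord_p}$ we again get $[\sigma(s_1)]_{\Ord_p}\not<\tau$, so $\Hst(\tau,s_1,\sigma)=\emptyset$ and $[\sigma(s_2)]_X=\tau$. But then $[\sigma(s_2)]_X=[\sigma(s_1)]_X\le[\sigma(s_1)]_{\Ord_p}=[\sigma(s_2)]_{\Ord_p}<[\sigma(s_2)]_{\Ord_p}+1=[\sigma(s_2)]_\top\le[\sigma(s_2)]_X$, a contradiction. Hence $\Gamma_\pi$ admits no viable dynamic ES and is not DC.

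The routine parts are the two viability checks and the scenario bookkeeping; the conceptual crux — and the place to be careful — is the contrast between $\oHst$ and $\Hst$: in $\oHst$ the observation $\Ord_p$ enters the history governing $X$ because only its \emph{position} $1$ must lie strictly below $X$'s position $2$ (the times may tie), whereas in $\Hst$ the observation $\Ord_p$ is \emph{excluded} as soon as $X$ is scheduled at the same instant, which is exactly what breaks dynamicity. I expect the main obstacle to be phrasing this distinction cleanly and making sure the step ``$\Ord_p$ is scheduled at the same time in $s_1$ and $s_2$'' is genuinely forced (it is, precisely by the dynamic constraint at $v=\Ord_p$) rather than tacitly assumed via a normalization.
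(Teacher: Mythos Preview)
Your proof is correct. The $\pi$-DC half is exactly the paper's argument, only spelled out in more detail (the paper simply states the same $\sigma$ and asserts it is viable and $\pi$-dynamic).

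For the ``not DC'' half you take a genuinely different route. The paper argues via $\varepsilon$-DC: it observes that any viable ES must schedule $X$ at $0$ or $1$ depending on the outcome of $\Ord_p$ at time $0$, so no $\varepsilon$-dynamic strategy can exist for any $\varepsilon>0$, and then invokes Theorem~\ref{thm:epsilonconsistency} to conclude that $\Gamma_\pi$ is not DC. You instead work directly from Definition~\ref{def:consistency}: you first force $[\sigma(s_2)]_{\Ord_p}=[\sigma(s_1)]_{\Ord_p}$ by applying dynamicity at $v=\Ord_p$ (empty history), then force $[\sigma(s_2)]_X=[\sigma(s_1)]_X$ by applying it at $v=X$ (history still empty because $[\sigma(s_1)]_X\le[\sigma(s_1)]_{\Ord_p}$), and derive the numerical contradiction. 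Your argument is more elementary and self-contained (it does not need the $\varepsilon$-DC$\Leftrightarrow$DC equivalence), and it makes fully explicit the point you flag at the end: the strict inequality in $\Hst$ is precisely what prevents $\Ord_p$ from entering the history governing $X$. The paper's argument is shorter and leans on the $\varepsilon$-DC machinery already developed. Both are fine; yours would actually serve well as a standalone illustration of why the strict-vs-nonstrict time comparison in the two history notions is the operative distinction.
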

\begin{proof}
Let $s_1,s_2\in\Sigma_{P_\pi}$ be two scenarios such that $s_1(p)=1$ and $s_2(p)=0$.
Consider the $\pi$-ES $\sigma$ defined as follows:
$[\sigma(s_1)]^t_{\Ord_p}=[\sigma(s_1)]^t_{X}=0$, $[\sigma(s_1)]^t_{\top}=1$;
and $[\sigma(s_2)]^t_{\Ord_p}=0$, $[\sigma(s_2)]^t_{X}=[\sigma(s_2)]^t_{\top}=1$;
finally, $[\sigma(s)]^\pi_{\Ord_p}=1$, $[\sigma(s)]^\pi_{X}=[\sigma(s)]^\pi_{\top}=2$, for all $s\in \{s_1,s_2\}$.
Then, $\sigma$ is viable and $\pi$-dynamic for $\Gamma_\pi$. To see that $\Gamma_\pi$ is not DC, pick any $\varepsilon>0$.
Notice that any viable ES must schedule $X$ either at $t=0$ or $t=1$, depending on the outcome of $\Ord_p$,
which in turn happens at $t=0$; however, in any $\varepsilon$-dynamic strategy,
the planner can't react to the outcome of $\Ord_p$ before time $t=\varepsilon>0$.
This implies that $\Gamma_\pi$ is not $\varepsilon$-DC. Since $\varepsilon$ was chosen arbitrarily ($\varepsilon>0$),
then $\Gamma_\pi$ can't be DC by Theorem~\ref{thm:epsilonconsistency}.
\end{proof}
So $\Gamma_\pi$ is $\varepsilon$-DC for $\varepsilon=0$ but for \emph{no} $\varepsilon>0$.
In summary, the following relationships hold among
    the different versions of DC:
\begin{center}
\begin{tabular}{|c|}
\hline
[$\varepsilon$-DC, $\forall\varepsilon\in(0,\hat\varepsilon]$] ${\Leftrightarrow}$ DC $\stackrel{\not\Leftarrow}{\Rightarrow}$
	\oDCC $\overset{\not\Leftarrow}{\Rightarrow}$ [$\varepsilon$-DC, for $\varepsilon=0$] \\
\hline
\end{tabular}
\end{center}
where $\hat\varepsilon\triangleq |\Sigma_P|^{-1}\cdot |V|^{-1}$ as in Theorem~\ref{thm:epsilonconsistency}.

\subsection{The ps-tree: ``skeleton" structure for a $\pi$-dynamic $\pi$-ES}
In this subsection we introduce a labelled tree data structure, named \emph{ps-tree},
to capture the ``skeleton" ordered structure underlying any $\pi$-dynamic $\pi$-ES.
Basically, this is an outward (non-empty) rooted binary tree where nodes are labelled with propositions
and arcs with truth values, paths in the ps-tree represent execution scenarios and the set of letters along any path
represent precisely the set of observations that need to be scheduled along that path.
\begin{Def}[PS-Tree]\label{def:perm-tree} Let $P$ be any set of boolean variables.
	A permutation-scenario tree (ps-tree) $\pi_T$ over $P$ is an outward (non-empty) rooted binary tree such that:
\begin{itemize}
\item Each node $u$ of $\pi_T$ has an associated letter $p_u\in P$;
\item All the nodes that lie along a path leading from the root to a leaf are labelled with distinct letters from $P$.
\item Each arc $(u,v)$ of $\pi_T$ has an associated bit $b_{(u,v)}\in \{0,1\}$;
\item The two arcs $(u,v_l)$ and $(u,v_r)$ exiting
	a node $u$ can not have the same associated bit value, \ie \[b_{(u,v_l)}\neq b_{(u,v_r)}.\]
\end{itemize}
\end{Def}
\figref{fig:ex_ps-tree} depicts an example of a ps-tree, in which $P=\{a,b,c,d\}$. Note $\pi_T$ need not be a complete binary tree.
\begin{figure}[!htb]
\centering
\begin{forest}
for tree={circle, draw, grow=east, l sep=20pt, edge={->}}
[$a$,
    [$b$, edge label={node[midway, xshift=-1ex, yshift=-.6ex] {$0$}}
      [$c$, edge label={node[midway, yshift=-1ex] {$0$}} ]
      [$c$, edge label={node[midway, yshift=1ex] {$1$}}
		[$d$, edge label={node[midway, yshift=-1ex] {$0$}}]
		[$d$, edge label={node[midway, yshift=1ex] {$1$}}]
	]
    ]
    [$c$, edge label={node[midway, xshift=-1ex, yshift=.6ex] {$1$}}
      [$d$, edge label={node[midway, yshift=-1ex] {$0$}} ]
      [$b$, edge label={node[midway, yshift=1ex] {$1$}}
		[$d$, edge label={node[midway, yshift=-1ex] {$0$}}]
		[$d$, edge label={node[midway, yshift=1ex] {$1$}}]
	]
  ]
]
\end{forest}
\caption{An example of a ps-tree over $P=\{a,b,c,d\}$.}\label{fig:ex_ps-tree}
\end{figure}
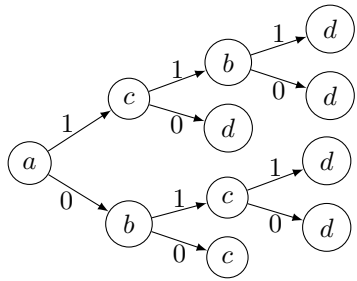
Next, the notion of ps-tree is refined to make it coherent \wrt a given $\pi$-dynamic \CSTN $\Gamma$.
\begin{Def}[$\pi_s$, $s_i$, Coherent-PS-Tree]\label{rem:ps-tree}
Let $\pi_T$ be a ps-tree over $P$, let $r$ be the root of $\pi_T$ and let $\gamma$ be any leaf.
Let $(r, v_2, \ldots, \gamma)$ be the sequence of the nodes encountered along the path going from $r$ down to $\gamma$ in $\pi_T$.

Note,
\begin{itemize}
\item The sequence of labels $\pi_\gamma=(p_r, p_{v_2}, \ldots, p_\gamma)$ is a
	permutation of the subset of letters $\{p_r, p_{v_2}, \ldots, p_\gamma\}\subseteq P$.
\item Each prefix sequence of bits $(b_{(r, v_2)}, \ldots, b_{(v_{i}, v_{i+1})})$,
	for each $i\in \{1,2,\ldots, k_{\gamma}-1\}$ (where $v_1\triangleq r$ and $v_{k_{\gamma}}\triangleq \gamma$),
	can be seen as a partial scenario $s_i$ over $P$;
	\ie $s_i(v_j)\triangleq b_{(v_j, v_{j+1})}$, for every $j\in\{1, \ldots, i\}$.
\end{itemize}
Then,
\begin{itemize}
\item $\pi_T$ is a \emph{coherent (c-ps-tree)} for \CSTN $\Gamma$ if,
	for every leaf $\gamma$ of $\pi_T$,
	\[\{\Ord_{p_r}, \Ord_{p_{v_2}}, \ldots, \Ord_{p_{\gamma}}\}=\Ord{V}^+_{s'}\] holds
	for every complete scenario $s'\in\Sigma_P$ such that $\Sub(s', s_{k_{\gamma}-1})$ is true.
\end{itemize}
\end{Def}

It is not difficult to see that a $\pi$-dynamic $\pi$-ES induces one and only one c-ps-tree $\pi_T$.
So, the existence of a suitable c-ps-tree is a necessary condition for a $\pi$-ES to be $\pi$-dynamic.
One may ask whether a $\pi$-dynamic $\pi$-ES can be reconstructed from its c-ps-tree;
the following subsection answers affirmatively.

\subsection{Verifying a c-ps-tree.}
This subsection builds on the notion of c-ps-tree to work out the details of the relationship between \oDCC and HyTN-Consistency.
Once this picture is in place, it will be easy to reduce to HyTN-Consistency the problem of deciding whether a given CSTN admits
a valid $\pi$-dynamic $\pi$-ES with a given c-ps-tree. This easy result already provides a first combinatorial algorithm for \oDCC,
though of doubly exponential complexity in $|P|$; a bound to be improved in later subsections,
but that can help sizing the sheer dimensionality and depth of the problem.

Firstly, the notion of \emph{Expansion} of \CSTN{s} is recalled from~\citet{CR15}.
Basically, this is a family of disjoint restriction \STN{s} $(V_s, A_s)$, one for each complete scenario $s\in\Sigma_P$;
it is obtained simply by projecting the initial \CSTN, according to a scenario $s$,
	on the corresponding restriction \STN,
and then by renaming (\ie renumbering)
	the time-points of the latter so that the whole family of \STN{s} will be vertex-disjoint.

\begin{Def}[Expansion $( V^{\text{Ex}}_{\Gamma}, \Lambda^{\text{Ex}}_{\Gamma})$]\label{def:expansion}
Consider a \CSTN $\Gamma=(V, A, L, \Ord, {\Ord}V, P)$. Consider the family of all (distinct) \STN{s} $(V_s, A_s)$,
one for each scenario $s\in\Sigma_P$, defined as follows:
\[V_s\triangleq\{v_s \mid v\in V^+_s\} \text{ and } A_s\triangleq\{(u_s,v_s,w) \mid (u,v,w) \in A^+_s\}.\]
The \emph{expansion} $( V^{\text{Ex}}_{\Gamma}, \Lambda^{\text{Ex}}_{\Gamma})$
of the \CSTN $\Gamma$ is defined as follows:
\[
( V^{\text{Ex}}_{\Gamma}, \Lambda^{\text{Ex}}_{\Gamma}
	) \triangleq \Big( \bigcup_{s\in\Sigma_P}V_s, \bigcup_{s\in\Sigma_P} A_s\Big ).
\]
\end{Def}

Notice, $( V^{\text{Ex}}_{\Gamma}, \Lambda^{\text{Ex}}_{\Gamma} )$ is an \STN with at most
$|V^{\text{Ex}}_{\Gamma}|\leq |\Sigma_P|\cdot |V|$ nodes and at most
$|\Lambda^{\text{Ex}}_{\Gamma}|\leq |\Sigma_P|\cdot |A|$ standard arcs.

We now show that the expansion of a \CSTN can be enriched with some standard arcs and
some hyperarcs in order to model \oDCC, by means of an \HTN denoted by $\H^{\pi_T}_{0}(\Gamma)$.
\begin{Def}[\HTN $\H^{\pi_T}_{0}(\Gamma)$]\label{def:Hepsilonzero}
Let $\Gamma=(V, A, L, \Ord, {\Ord}V, P)$ be a given \CSTN. Let $\pi_T$ be a given c-ps-tree over $P$.

Then, the \HTN $\H^{\pi_T}_{0}(\Gamma)$ is defined as follows:
\begin{itemize}
\item For every scenarios $s_1, s_2\in\Sigma_P$ and $u\in V^+_{s_1,s_2}\setminus \Ord{V}$,
define a hyperarc $\alpha=\alpha_{0}(s_1; s_2; u)$ as follows
(with the intention to model the constraint $H_{0}(s_1; s_2; u)$ as in Definition~\ref{def:epsilonconsistency}):
\[\alpha=\alpha_{0}(s_1; s_2; u)\triangleq ( t_\alpha, H_\alpha, w_\alpha), \]
where:
\begin{itemize}
\item $t_\alpha\triangleq u_{s_1}$ is the tail of the hyperarc $\alpha$;
\item $H_\alpha\triangleq \{u_{s_2}\}\cup \Delta(s_1; s_2)$ is the set of the heads;
\item $w_\alpha(u_{s_2})\triangleq 0$; $\forall (v\in\Delta(s_1; s_2))\; w_\alpha(v)\triangleq 0$.
\end{itemize}
\item[] Now, consider the expansion of the \CSTN $\Gamma$ $( V^{\text{Ex}}_{\Gamma}, \Lambda^{\text{Ex}}_{\Gamma})
= \big( \bigcup_{s\in\Sigma_P}V_s, \bigcup_{s\in\Sigma_P} A_s \big )$ (as in Definition~\ref{def:expansion}). Then:
\item For each internal node $x$ of $\pi_T$, $A'_x$ is a set of (additional) standard arcs defined as follows.

Let $\pi_x=(r, \ldots, x')$ be the sequence of all and only
the nodes along the path going from the root $r$ to the parent $x'$ of $x$ in $\pi_T$ (where we can assume $r'=r$).
Let $P_x\subseteq P$ be the corresponding letters, $p_x$ excluded,
\ie \[ P_x\triangleq \{p_z\in P \mid z \text{ appears in } \pi_x \text{ and } p_z \text{ is the letter associated with } z \text{ in } \pi_T\}\setminus\{p_x\}.\]
Let $s_x$ be the partial scenario defined as follows:
\[
	s_x : P_x \rightarrow \{0,1\} : \left\{\begin{array}{ll}
					\lambda\;, & \text{ if } x=r; \\
					p_z\mapsto b_{(z, z')}, & \text{ if } x\neq r.
				  \end{array}\right.
\]
where $z'$ is the unique child of $z$ in $\pi_T$ lying on $\pi_x$.
Let $x_0$ (resp., $x_1$) be the unique child of $x$ in $\pi_T$
	such that $b_{x,x_0}=0$ (resp., $b_{x,x_1}=1$).
For every complete $s'_x\in\Sigma_P$ such that $\Sub(s'_x,s_x)$, we define the following set of arcs:
\[ B'_{s'_x}\triangleq
\left\{	\begin{array}{ll}
		\big\{\big( (\Ord_{p_{x_0}})_{s'_x} , (\Ord_{p_x})_{s'_x}, 0 \big)\big\},  & \text{ if } s'_x(x)=0;\\
		\big\{\big( (\Ord_{p_{x_1}})_{s'_x}, (\Ord_{p_x})_{s'_x}, 0 \big)\big\},  & \text{ if } s'_x(x)=1.
	\end{array}\right.
\]
Also, for every complete $s'_x,s''_x\in\Sigma_P$ such that $\Sub(s'_x,s_x)$ and $\Sub(s''_x,s_x)$, where $s'_x\neq s''_x$, we define:
\[ C'_{s'_x, s''_x}\triangleq \big\{\big( (\Ord_{p_x})_{s'_x} , (\Ord_{p_x})_{s''_x}, 0 \big)\big\}.\]

Finally,
\[
	A'_x \triangleq \bigcup_{\footnotesize s'_x\in\Sigma_P \text{ : } \normalsize \Sub(s'_x, s_x)} B'_{s'_x}
		\cup \bigcup_{\shortstack{\footnotesize$s'_x, s''_x \in\Sigma_P$ \text{:}
			\footnotesize $s'_x\neq s''_x$, \\
			\footnotesize $\Sub(s'_x,s_x), \Sub(s''_x,s_x) $ }} C'_{s'_x, s''_x}.
\]

\item Then, $\H^\pi_{0}(\Gamma)$ is defined as
	$\H^\pi_{0}(\Gamma)\triangleq ( V^{\text{Ex}}_{\Gamma}, \A_{\H^\pi_{0}(\Gamma)})$,
	where, \[\A_{\H^\pi_{0}(\Gamma)}\triangleq \Lambda^{\text{Ex}}_{\Gamma} \cup
		\bigcup_{\substack{s_1,s_2\in\Sigma_P \\ u\in V^+_{s_1,s_2}}}
			\alpha_{\varepsilon}(s_1;s_2;u) \cup
		\bigcup_{\shortstack{$x$ \footnotesize : internal \\ \footnotesize node of \normalsize $\pi_T$}} A'_x .\]
\end{itemize}
\end{Def}

Notice that the following holds:
each $\alpha_{\varepsilon}(s_1; s_2; u)$ has size:
 \[|\alpha_{\varepsilon}(s_1; s_2; u)| = |\Delta(s_1;s_2)|+1\leq |P|+1.\]

The following theorem establishes the connection between the \oDCC of \CSTN{s} and the consistency of \HTN{s}.

\begin{Thm}\label{thm:mainreduction}
Given any \CSTN $\Gamma=( V, A, L, \Ord, {\Ord}V, P )$, it holds that the \CSTN
$\Gamma$ is \oDCC if and only if there exists a c-ps-tree $\pi_T$
such that the \HTN $\H^{\pi_T}_{0}(\Gamma)$ is consistent.

Moreover, $\H^{\pi_T}_{0}(\Gamma)$ has at most so many nodes: \[|V_{\H^{\pi_T}_{0}(\Gamma)}|\leq |\Sigma_P|\cdot |V|,\]
so many hyperarcs: \[|\A_{\H^{\pi_T}_{0}(\Gamma)}|=O(|\Sigma_P|\cdot |A| + |\Sigma_P|^2|V|),\]
and it has size at most: \[ m_{\A_{\H^{\pi_T}_{0}(\Gamma)}}=O(|\Sigma_P|\cdot |A| + |\Sigma_P|^2|V|\cdot |P|).\]
\end{Thm}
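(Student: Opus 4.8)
The plan is to prove the biconditional by constructing, in each direction, an explicit object and verifying it meets the required properties; the size bounds then follow by a direct count over the components of the construction in Definition~\ref{def:Hepsilonzero}.

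\textbf{From $\pi$-DC to a consistent \HTN.} Suppose $\Gamma$ admits a viable and $\pi$-dynamic $\pi$-ES $\sigma$. By the discussion preceding this theorem, $\sigma$ induces a unique c-ps-tree $\pi_T$; I would first make this induction precise, reading off from $\sigma$ which observation is executed first-in-position under each partial scenario (using Remark~\ref{rem:tree_structure} at the root, then recursing on the two sub-families of scenarios split by that observation's outcome). Coherence of $\pi_T$ with $\Gamma$ follows because, along any root-to-leaf path, the set of observations that get scheduled is exactly $\Ord V^+_{s'}$ for every completion $s'$ consistent with that path. Then I would define a schedule $\phi$ on $V^{\text{Ex}}_\Gamma$ by $\phi(v_s)\triangleq [\sigma(s)]^t_v$ and check feasibility against every arc and hyperarc of $\H^{\pi_T}_0(\Gamma)$: the expansion arcs $\Lambda^{\text{Ex}}_\Gamma$ hold because $\sigma$ is viable for each $\Gamma^+_s$; each hyperarc $\alpha_0(s_1;s_2;u)$ holds because the $H_0(s_1;s_2;u)$-constraint is exactly the $\pi$-dynamicity requirement rewritten in min-form (if $\Con(\oHst,s_2)$ fails then some $v\in\Delta(s_1;s_2)$ satisfies $[\sigma(s_1)]^t_v\le[\sigma(s_1)]^t_u$ with earlier position, driving the $\min$ down; otherwise $[\sigma(s_2)]^t_u=[\sigma(s_1)]^t_u$); and the additional arcs $A'_x$ hold because coherence of positions with execution times, together with the fixed branching order recorded in $\pi_T$, forces $(\Ord_{p_{x_b}})_{s'}$ to be scheduled no later than $(\Ord_{p_x})_{s'}$, and forces all copies of a fixed $\Ord_{p_x}$ across scenarios below $x$ to share a common execution time.

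\textbf{From a consistent \HTN to $\pi$-DC.} Conversely, given a c-ps-tree $\pi_T$ and a feasible schedule $\phi$ of $\H^{\pi_T}_0(\Gamma)$, I would define $[\sigma(s)]^t_v\triangleq \phi(v_s)$ and let $[\sigma(s)]^\pi$ be the position order induced first by the fixed order of the observation labels along the $\pi_T$-path selected by $s$ (this is well-defined since $\pi_T$ is coherent, so that path visits exactly the observations in $\Ord V^+_s$), breaking any remaining ties using execution times to guarantee coherence of positions with times. Viability is immediate from the $\Lambda^{\text{Ex}}_\Gamma$ arcs. For $\pi$-dynamicity, fix $s_1,s_2$ and $v\in V^+_{s_1,s_2}$ with $\Con(\oHst(\tau,\psi,s_1,\sigma),s_2)$; the $C'$ arcs force all observations appearing before the split point of $s_1$ and $s_2$ in $\pi_T$ to agree in execution time and position across the two scenarios, the $B'$ arcs propagate the ordering, and the hyperarc $\alpha_0(s_1;s_2;v)$ (for non-observation $v$; observations are handled directly by the $\pi_T$ structure) then forces $[\sigma(s_2)]^t_v=[\sigma(s_1)]^t_v$, with equality of positions following from the shared prefix. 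The subtle point here is matching the combinatorial ``first-in-position'' bookkeeping of Definition~\ref{def:pi-dc} with the purely metric constraints of the \HTN; this is exactly what the $A'_x$ arcs are engineered to bridge, and making that correspondence airtight is the main obstacle in the proof.

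\textbf{Size bounds.} Finally I would count: $V^{\text{Ex}}_\Gamma$ has at most $|\Sigma_P|\,|V|$ nodes by the remark after Definition~\ref{def:expansion}; the expansion contributes $|\Lambda^{\text{Ex}}_\Gamma|\le|\Sigma_P|\,|A|$ standard arcs; there is one hyperarc $\alpha_0(s_1;s_2;u)$ per triple, but for fixed $u$ the relevant pairs $(s_1,s_2)$ contributing distinct constraints number $O(|\Sigma_P|)$ after the usual observation that only differing observations in $\Delta(s_1;s_2)$ matter, giving $O(|\Sigma_P|^2|V|)$ hyperarcs, each of size $|\Delta(s_1;s_2)|+1\le|P|+1$; and summing $|A'_x|$ over the internal nodes of $\pi_T$ (each contributing $B'$ and $C'$ arcs over completions of $s_x$) also falls within $O(|\Sigma_P|^2|V|)$ standard arcs. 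Adding these, $|\A_{\H^{\pi_T}_0(\Gamma)}|=O(|\Sigma_P|\,|A|+|\Sigma_P|^2|V|)$ and $m_{\A_{\H^{\pi_T}_0(\Gamma)}}=O(|\Sigma_P|\,|A|+|\Sigma_P|^2|V|\,|P|)$, as claimed.
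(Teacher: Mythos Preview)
Your proposal is correct and follows essentially the same route as the paper: extract the c-ps-tree from a $\pi$-dynamic $\pi$-ES via Remark~\ref{rem:tree_structure}, push the execution times to a schedule $\phi(v_s)=[\sigma(s)]^t_v$ and verify each class of (hyper)arcs, and conversely read off execution times from $\phi$ and positions from the root-to-leaf path in $\pi_T$; the size bounds are then a direct count from Definition~\ref{def:Hepsilonzero}. One small slip to clean up: in your size paragraph you write that for fixed $u$ only $O(|\Sigma_P|)$ pairs $(s_1,s_2)$ contribute, yet then conclude $O(|\Sigma_P|^2|V|)$ hyperarcs---the latter is the correct bound (there are $O(|\Sigma_P|^2)$ ordered pairs per $u$), so just drop the spurious $O(|\Sigma_P|)$ remark.
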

\begin{proof}
(1) Firstly, we prove that the \CSTN $\Gamma$ is \oDCC if and only if there exists
a c-ps-tree $\pi_T$ such that the \HTN $\H^{\pi_T}_{0}(\Gamma)$ is consistent.

($\Rightarrow$) Let $\sigma\in\S_{\Gamma}$ be a given viable and $\pi$-dynamic execution strategy for the \CSTN $\Gamma$.
Since $\sigma$ is $\pi$-dynamic, then for any two $s_1, s_2\in \Sigma_P$ and any $v\in V^+_{s_1,s_2}$
the following holds on $\tau\triangleq [\sigma(s_1)]^t_v$ and $\psi\triangleq [\sigma(s_1)]^{\pi}_v$:
\[ \Con(\oHst(\tau, \psi, s_1, \sigma), s_2) \Rightarrow [\sigma(s_2)]^t_v = \tau, [\sigma(s_2)]^{\pi}_v = \psi.\]
It is easy to see that this induces one and only one c-ps-tree $\pi_T$: indeed,
due to Remark~\ref{rem:tree_structure},
there must be exactly one $\Ord_{p'}\in\Ord{V}$, for some $p'\in P$, which is executed at first
(\wrt to both execution time and position) under all possible scenarios; then, depending on the boolean result of $p'$,
a second observation $p''$ can be differentiated, and it can occur at the same or at a subsequent time instant,
but still at a subsequent position; again, by Remark~\ref{rem:tree_structure},
there is exactly one $\Ord_{p''}\in\Ord{V}$ which comes first under all possible scenarios that agree on $p'$;
and so on and so forth, thus forming an arborescence over $P$, rooted at $p'$, which is captured uniquely by a c-ps-tree.
Then, let $\phi_{\sigma}:V^{\text{Ex}}_{\Gamma}\rightarrow\RR$ be the schedule of $\H^{\pi_T}_{0}(\Gamma)$ defined as:
$\phi_{\sigma}(v_s)\triangleq [\sigma(s)]^t_v$ for every $v_s\in V^{\text{Ex}}_{\Gamma}$, where $s\in\Sigma_P$ and $v\in V^+_s$.
It is not difficult to check from the definitions, at this point, that all of the standard arc
and hyperarc constraints of $\H^{\pi_T}_{0}(\Gamma)$ are satisfied by $\phi_{\sigma}$,
that is to say that $\phi_{\sigma}$ must be feasible for $\H^{\pi_T}_{0}(\Gamma)$.
Hence, $\H^{\pi_T}_{0}(\Gamma)$ is consistent.

($\Leftarrow$)
Assume that there exists a c-ps-tree $\pi_T$ such that the \HTN $\H^{\pi_T}_{0}(\Gamma)$ is consistent,
and let $\phi:V_\Gamma^{\text{Ex}}\rightarrow\RR$ be a feasible schedule for $\H^{\pi_T}_{0}(\Gamma)$.
Then, let $\sigma_{\phi, \pi_T}(s)\in\S_{\Gamma}$ be the execution strategy defined as follows:
\begin{itemize}
\item $[\sigma_{\phi, \pi_T}(s)]^t_v\triangleq \phi(v_s)$,
	$\forall$ $v_s\in V_{\Gamma}^{\text{Ex}}$, $s\in\Sigma_P$, $v\in V^+_s$;
\item Let $s'\in\Sigma_P$ be any complete scenario.
Then, $s'$ induces exactly one path in $\pi_T$, in a natural way,
\ie by going from the root $r$ down to some leaf $\gamma$.
Notice that the sequence of labels $(p_r, p_{v_2}, \ldots, p_{\gamma})$ can be seen as a bijection,
\ie $\pi_\gamma : \Ord{V}^+_{s'} \rightleftharpoons \{1, \ldots, |\Ord{V}^+_{s'}|\}$.
Then, for any $s'\in \Sigma_P$ and $v\in \Ord{V}^+_{s'}$,
let $[\sigma_{\phi}^{\pi_T}(s')]^\pi_v\triangleq \pi_\gamma(v)$.
\end{itemize}
It is not difficult to check from the definitions, at this point, that since $\phi$ is feasible for $\H^{\pi_T}_{0}(\Gamma)$,
then $\sigma^{\pi_T}_{\phi}$ must be viable and $\pi$-dynamic for the \CSTN $\Gamma$. Hence, the \CSTN $\Gamma$ is \oDCC.

(2) The size bounds for $\H^{\pi_T}_{0}(\Gamma)$ follow from Definition~\ref{def:Hepsilonzero}.
\end{proof}

Algorithm~\ref{FIG:PseudocodeReduction-cstn-htn}
contains the pseudocode for constructing the \HTN $\H^{\pi_T}_{0}(\Gamma)$,
as prescribed by Definition~\ref{def:Hepsilonzero}.
\begin{algorithmenv}[!htb]
\removelatexerror
\begin{algorithm}[H]\label{algo:pseudocode_construct_H}
\caption{$\texttt{construct\_}\H$$(\Gamma, \pi_T)$}
\KwIn{a \CSTN $\Gamma\triangleq( V, A, L, \Ord, {\Ord}V, P )$, a c-ps-tree $\pi_T$ coherent with $\Gamma$.}
\ForEach{($s\in\Sigma_P$)}{
$V_s\leftarrow\{v_s \mid v\in V^+_{s}\}$\;
$A_s\leftarrow\{a_s\mid a\in A^+_s\}$\;

}
$\displaystyle V^{\text{Ex}}_{\Gamma}\leftarrow \cup_{\substack{s\in\Sigma_P}}V_s$\;
$\displaystyle \Lambda^{\text{Ex}}_{\Gamma}\leftarrow \cup_{\substack{s\in\Sigma_P}} A_s$\;
\ForEach{($s_1, s_2\in\Sigma_P$, $s_1\neq s_2$)}{
	\ForEach{($u\in V^+_{s_1, s_2}\setminus \Ord{V}$)}{
		$t_\alpha\leftarrow u_{s_1}$\;
		$H_\alpha\leftarrow \{u_{s_2}\}\cup (\Delta(s_1; s_2))$\;
		$w_\alpha(u_{s_2})\leftarrow 0$\;
		\ForEach{$v\in\Delta(s_1; s_2)$}{
			$w_\alpha(v_{s_1})\leftarrow 0$\;
		}
		$\alpha_{0}(s_1; s_2; u)\leftarrow ( t_\alpha, H_\alpha, w_\alpha)$\;
	}
}
\ForEach{($x$ : internal node of $\pi_T$)}{
$A'_x\leftarrow $ as defined in Definition~\ref{def:Hepsilonzero};
}
$\displaystyle\A_{\H^{\pi_T}_{0}(\Gamma)}\leftarrow \Lambda^{\text{Ex}}_{\Gamma} \cup
\bigcup_{\substack{s_1,s_2\in\Sigma_P \\ u\in V^+_{s_1, s_2}}}\alpha_{0}(s_1;s_2;u)
	\cup  \bigcup_{\shortstack{$x$ : \footnotesize internal \\ \footnotesize node of \normalsize $\pi_T$}} A'_x $\;
$\H^{\pi_T}_{0}(\Gamma)\leftarrow ( V^{\text{Ex}}_{\Gamma}, \A_{H^{\pi_T}_{0}(\Gamma)})$\;
\Return{$\H^{\pi_T}_{0}(\Gamma)$;}
\end{algorithm}
\caption{Constructing the \HTN $\H^{\pi_T}_{0}(\Gamma)$.}
\end{algorithmenv}
If $\Gamma$ is \oDCC, there is an integer weighted $\pi$-dynamic $\pi$-ES, as below.
\begin{Prop}\label{prop:integral}
Assume $\Gamma=( V, A, L, \Ord, {\Ord}V, P )$ to be \oDCC.
Then, there is some $\pi$-ES $\sigma\in \S_\Gamma$ which is viable, $\pi$-dynamic, and \emph{integer weighted},
namely, for every $s\in\Sigma_{P}$ and every $v\in V^+_s$, the following property holds:
\[ [\sigma(s)]^t_v\in \big\{0, 1, 2, \ldots, \mathcal{M}_\Gamma\big\}\subseteq \N, \]
where $\mathcal{M}_\Gamma\triangleq \big(|\Sigma_P||V| + |\Sigma_P||A| + |\Sigma_P|^2|V|\big)W$.
\end{Prop}
\begin{proof} By Theorem~\ref{thm:mainreduction}, since $\Gamma$ is \oDCC,
there exists some c-ps-tree $\pi_T$ such that the \HTN $\H^{\pi_T}_{0}(\Gamma)$ is consistent;
moreover, by Theorem~\ref{thm:mainreduction} again, $\H^{\pi_T}_{0}(\Gamma)$
has $|V_{\H^{\pi_T}_{0}(\Gamma)}|\leq |\Sigma_P|\, |V|$ nodes and
$|\A_{\H^{\pi_T}_{0}(\Gamma)}|\leq |\Sigma_P|\,|A| + |\Sigma_P|^2|V|$ hyperarcs.
Since $\H^{\pi_T}_{0}(\Gamma)$ is consistent, it follows from
Theorem~\ref{thm:mainreduction} (also see Lemma~1 and Theorem~8 in \citet{CPR2016})
that $\H^{\pi_T}_{0}(\Gamma)$ admits an integer weighted and feasible schedule $\phi$ such that:
\[\phi: V_{\H^{\pi_T}_{0}(\Gamma)}\rightarrow \big\{0, 1, 2, \ldots, \mathcal{M}_\Gamma \big\},\]
where $\mathcal{M}_\Gamma\leq (|V_{\H^{\pi_T}_{0}(\Gamma)}|+|\A_{\H^{\pi_T}_{0}(\Gamma)}|)W$.

Therefore, it holds that $\mathcal{M}_\Gamma\leq (|\Sigma_P|\, |V| + |\Sigma_P|\,|A| + |\Sigma_P|^2|V|)W$.
\end{proof}

Given a \CSTN $\Gamma$ and some c-ps-tree $\pi_T$, it is thus easy to check whether there exists some $\pi$-ES for $\Gamma$
whose ordering relations are exactly the same as those prescribed by $\pi_T$. Indeed,
it is sufficient to construct $\H^{\pi_T}_{0}(\Gamma)$ with Algorithm~\ref{algo:pseudocode_construct_H},
then checking the consistency of $\H^{\pi_T}_{0}(\Gamma)$ with the algorithm mentioned in Theorem~\ref{Teo:MainAlgorithms}.
This results into Algorithm~\ref{algo:solve_piDCC}.
The corresponding time complexity is also that of Theorem~\ref{Teo:MainAlgorithms}.
\begin{algorithmenv}[!htb]
\removelatexerror
\begin{algorithm}[H]\label{algo:solve_piDCC}
\caption{\texttt{check\_\oDCC\_on\_c-ps-tree}$(\Gamma, \pi_T)$}
\KwIn{a \CSTN $\Gamma\triangleq( V, A, L, \Ord, {\Ord}V, P )$,
	a c-ps-tree $\pi_T$ coherent with $\Gamma$.}
$\H^{\pi_T}_{0}(\Gamma)\leftarrow\texttt{construct\_}\H(\Gamma, \pi_T)$;
			\tcp{\footnotesize ref. Algorithm~\ref{algo:pseudocode_construct_H}}
$\phi\leftarrow \texttt{check\_\HTNC}(\H^{\pi_T}_{0}(\Gamma))$; \tcp{\footnotesize ref. Thm~\ref{Teo:MainAlgorithms}}
\If{($\phi$ is a feasible schedule of $\H^{\pi_T}_{0}(\Gamma)$)}{
	\Return{$(\texttt{YES}, \phi, \pi_T)$;}
}
\Return{$\texttt{NO}$;}
\end{algorithm}
\caption{Checking \oDCC given a c-ps-tree $\pi_T$, by reduction to \HTNC.}
\label{FIG:PseudocodeReduction-cstn-htn}
\end{algorithmenv}

Notice that, in principle, one could generate all of the possible c-ps-trees $\pi_T$ given $P$, one by one, meanwhile checking
for the consistency state of $\H^{\pi_T}_{0}(\Gamma)$ with Algorithm~\ref{algo:solve_piDCC}.
However, it is not difficult to see that, in general,
the total number $f_{|P|}$ of possible c-ps-trees over $P$ is not singly-exponential in $|P|$.
Indeed, a moment's reflection revelas that for every $n>1$ it holds that $f_n=n\cdot f_{n-1}^2$,
and $f_1=1$. So, any algorithm based on the exhaustive exploration of the whole space comprising all of
the possible c-ps-trees over $P$ would not have a (pseudo) singly-exponential time complexity in $|P|$.
Nevertheless, we have identified another solution, that allows us to provide a sound-and-complete (pseudo)
singly-exponential time \oDCC-Checking procedure: it is a simple and self-contained
reduction from \oDCC-Checking to \DCC-Checking. This allows us to provide the first
sound-and-complete (pseudo) singly-exponential time \oDCC-Checking algorithm which employs our previous
\DCC-Checking algorithm (\ie that underlying Theorem~\ref{thm:mainresult}) in a direct manner,
as a black box, thus avoiding a more fundamental restructuring of it.

\subsection{A Singly-Exponential Time \oDCC-Checking Algorithm}\label{subsect:Algo}
This section presents a sound-and-complete (pseudo) singly-exponential time algorithm for solving \oDCC,
also producing a viable and $\pi$-dynamic $\pi$-ES whenever the input \CSTN is really \oDCC.

The main result of this section goes as follows.
\begin{Thm}\label{thm:mainresult_pi-DC}
There exists an algorithm for checking \oDCC on any input given \CSTN $\Gamma = (V, A, L, \Ord, \Ord{V}, P)$
with the following (pseudo) singly-exponential time complexity:
\[ O\Big(|\Sigma_P|^{3}|A|^2|V|^2 +|\Sigma_P|^4|A||V|^3|P| + |\Sigma_P|^5|V|^4|P|\Big)W. \]
Moreover, when $\Gamma$ is \oDCC, the algorithm also returns a viable and $\pi$-dynamic $\pi$-ES for $\Gamma$.
Here, $W\triangleq \max_{a\in A} |w_a|$.
\end{Thm}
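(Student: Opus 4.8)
The plan is to establish the theorem by exhibiting a polynomial-time reduction from \oDCC-Checking to \DCC-Checking that only polynomially inflates the parameters, and then to invoke the \DCC-Checking algorithm of Theorem~\ref{thm:mainresult} as a black box. Given $\Gamma=(V,A,L,\Ord,\Ord{V},P)$, I would build a \CSTN $\Gamma'=(V,A',L,\Ord,\Ord{V},P)$ on the \emph{same} events, observation events, labels and propositional letters, obtained from $\Gamma$ by fixing a suitably large integer $K=\Theta(|\Sigma_P|\,|V|^2)$ and replacing each labelled constraint $\langle v-u\le w,\ell\rangle$ of $\Gamma$ by $\langle v-u\le Kw+|\Ord{V}|,\ell\rangle$: dilating time by $K$ makes a strictly positive reaction delay available at an arbitrarily small ``cost'', while the additive $|\Ord{V}|$ supplies exactly the ``reaction window'' needed to spread out, within a single original instant, the up to $|\Ord{V}|$ observations that a $\pi$-ES may fire at the same time. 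The algorithm then is: build $\Gamma'$ (clearly in time polynomial in $|\Gamma|$ and $W$), run the \DCC-Checking procedure of Theorem~\ref{thm:mainresult} on $\Gamma'$, return its verdict, and --- in the accepting case --- convert the viable dynamic ES it outputs for $\Gamma'$ into a viable $\pi$-dynamic $\pi$-ES for $\Gamma$ (see below). Correctness reduces to the equivalence ``$\Gamma$ is $\pi$-DC $\iff$ $\Gamma'$ is DC'', proved in two directions.

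For ``only if'' I would take an \emph{integral} viable $\pi$-dynamic $\pi$-ES $\sigma$ for $\Gamma$ (Proposition~\ref{prop:integrality}) and define an ES $\sigma'$ for $\Gamma'$ by $[\sigma'(s)]_v\triangleq K\cdot[\sigma(s)]^t_v+[\sigma(s)]^\pi_v$ for all $s\in\Sigma_P$, $v\in V^+_s$, \ie the integral time of $v$ dilated by $K$ with the position of $v$ appended as a micro-offset in $\{1,\dots,|\Ord{V}|+1\}$. Because $K>|\Ord{V}|$ and times are integral, $(t,\psi)\mapsto Kt+\psi$ is strictly increasing for the lexicographic order on $(t,\psi)$; consequently the ordinary history selected by $\sigma'$ at an event coincides with the ordered history selected by $\sigma$ at that event, viability of $\sigma'$ for $\Gamma'$ follows from $[\sigma'(s)]_v-[\sigma'(s)]_u=K([\sigma(s)]^t_v-[\sigma(s)]^t_u)+([\sigma(s)]^\pi_v-[\sigma(s)]^\pi_u)\le Kw+|\Ord{V}|$, and dynamicity of $\sigma'$ is just the $\pi$-dynamicity of $\sigma$ rewritten through that identification of histories. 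Hence $\Gamma'$ is DC.

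The ``if'' direction is the crux, and I expect the technical heart of the proof to lie here. Starting from a viable dynamic ES for $\Gamma'$ --- which by Theorem~\ref{thm:epsilonconsistency} may be taken $\hat\varepsilon'$-dynamic with $\hat\varepsilon'=|\Sigma_P|^{-1}|V|^{-1}$, and integral and of bounded range by the \HTN machinery underlying Theorem~\ref{thm:mainresult} --- I would recover a $\pi$-ES $\sigma$ for $\Gamma$ by (a) reading each position $[\sigma(s)]^\pi_v$ off the relative order in which $\sigma'(s)$ fires the observations of scenario $s$, breaking ties by the index of the letter, so that coherence with execution times is immediate and, by Remark~\ref{rem:tree_structure} and the dynamicity of the $\Gamma'$-strategy, the resulting orderings assemble into a genuine c-ps-tree $\pi_T$; and (b) supplying the execution times for $\Gamma$ \emph{not} by the naive rescaling $[\sigma'(s)]_v/K$ (which may still carry up to $|\Ord{V}|/K$ of spurious slack) but by re-solving the original, un-relaxed constraints along the skeleton $\pi_T$ --- \ie via Theorem~\ref{thm:mainreduction}, after showing that consistency of $\H^{\pi_T}_{0}(\Gamma')$ forces consistency of $\H^{\pi_T}_{0}(\Gamma)$. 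This last implication is the delicate step: the $+|\Ord{V}|$ relaxation does not vanish under dilation, so one must argue, via a negative-cycle plus integrality argument, that consistency of $\H^{\pi_T}_{0}(\Gamma)$ can fail only on account of some ``negative'' combinatorial structure whose normalized value, on an integer-weighted instance, is bounded away from $0$ by the reciprocal of the largest relevant length, whereas the relaxation shifts that value by at most $|\Ord{V}|/K$ --- and the calibration $K=\Theta(|\Sigma_P|\,|V|^2)$ is chosen precisely so that the slack falls strictly below that threshold and is therefore harmless (this calibration is also what pins down the final time bound). Once this is in place, viability of $\sigma$ for $\Gamma$ follows from step (b); and $\pi$-dynamicity follows because the position assignment only \emph{enlarges} every event's ordered history relative to the $\Gamma'$-history, so the required implication is read off from the dynamicity of the $\Gamma'$-strategy by an induction on the position $\psi$ that peels observations off the common prefix one at a time.

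Finally, the complexity bound is bookkeeping. The instance $\Gamma'$ has $|V'|=|V|$, $|A'|=|A|$, $|P'|=|P|$ (hence $|\Sigma_{P'}|=|\Sigma_P|$) and $W'\triangleq\max_{a'\in A'}|w_{a'}|=\Theta(|\Sigma_P|\,|V|^2\,W)$; substituting these into the running time $O\big(|\Sigma_P|^3|A|^2|V|+|\Sigma_P|^4|A||V|^2|P|+|\Sigma_P|^5|V|^3|P|\big)W'$ of Theorem~\ref{thm:mainresult} gives exactly $O\big(|\Sigma_P|^4|A|^2|V|^3+|\Sigma_P|^5|A||V|^4|P|+|\Sigma_P|^6|V|^5|P|\big)W$, and the polynomial time needed to construct $\Gamma'$ and to back-translate the strategy is absorbed into this expression. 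By the ``if''-direction construction, the object returned in the $\pi$-DC case is a viable $\pi$-dynamic $\pi$-ES for $\Gamma$, which completes the plan.
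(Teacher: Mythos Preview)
Your reduction is essentially the paper's, up to an integer rescaling: the paper sets $\gamma\triangleq(|\Sigma_P|\,|V|^2+1)^{-1}$ and relaxes every bound $\delta$ to $\delta+|V|\gamma$, which after multiplying through by $K=1/\gamma$ is exactly your $K\delta+|V|$ (you use $|\Ord V|\le|V|$, a harmless tightening). Your ``only if'' direction---encode an integral $\pi$-ES as $K\cdot[\sigma(s)]^t_v+[\sigma(s)]^\pi_v$---is literally the argument of Lemma~\ref{lem:reduction_easy} written multiplicatively rather than additively. The complexity bookkeeping is identical.

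Where you diverge is the ``if'' direction. The paper never touches a c-ps-tree or $\H^{\pi_T}_0$ here: it takes the viable dynamic ES $\sigma'_\gamma$ for $\Gamma'_\gamma$, finds by a measure argument some shift $\eta\in[0,1)$ so that every fractional part $[\sigma'_\gamma(s)]_v-\lfloor[\sigma'_\gamma(s)]_v\rfloor$ lands at distance $\ge|V|\gamma$ from $0$, and then simply sets $[\sigma(s)]^t_v\triangleq\lfloor[\sigma'_\gamma(s)]_v\rfloor$ with $[\sigma(s)]^\pi$ the order induced by $\sigma'_\gamma(s)$. A two-line computation shows each original integer constraint is met, and $\pi$-dynamicity is immediate because flooring is monotone. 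This is a direct, self-contained construction that needs no \HTN re-solve and no gap/negative-cycle analysis.

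Your route---extract $\pi_T$, argue consistency of $\H^{\pi_T}_0(\Gamma')$, then push it to $\H^{\pi_T}_0(\Gamma)$ by an integrality-gap argument, then re-solve---can in principle be made to work (the \MPG value of an integer-weighted game on $N$ nodes is a rational of denominator $\le N$, here $N=|\Sigma_P|\,|V|$, and your slack per arc is $\le|V|/K$, so $K>|\Sigma_P|\,|V|^2$ indeed suffices), but it is considerably heavier: you must verify that $\sigma'$ really witnesses feasibility of $\H^{\pi_T}_0(\Gamma')$ for the $\pi_T$ obtained by tie-breaking (the $B'$ and $C'$ arcs need checking), you must invoke the \MPG value-denominator bound rather than merely a ``negative cycle'' (hyperarcs are not cycles), and you incur an extra \HTN solve. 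The paper's shift-and-floor buys you all of this for free; I would recommend replacing your (b) with it.
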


The algorithm mentioned in Theorem~\ref{thm:mainresult_pi-DC} consits of a simple
reduction from \oDCC to $\epsilon$-DC for $\epsilon=1$, \ie $1$-DC.

Firstly, we illustrate the technique on which the algorithm is based on by showing a reduction from
\oDCC to (classical) DC; this same reduction already appeared in~\citet{CCR16},
	but here the presentation is improved.

Secondly, we will show in Subsection~\emph{``Improving Algorithm~\ref{algo:check_pi-DC}"} how to leverage the technique in order to reduce \oDCC to $1$-DC,
thus obtaining the time complexity bound mentioned in Theorem~\ref{thm:mainresult_pi-DC}; this is a novel contribution \wrt~\citet{CCR16}.

Basically, the idea underlying the proof technique is to give a (sufficiently small, rational) margin $\gamma$ so that the planner can actually do before,
in the sense of the time value $[\sigma(s)]_v$, what he did ``before" in the ordering $\pi$.
Given any ES in the relaxed network, the planner would then turn it into a $\pi$-ES for the original network
(which has some more stringent constraints), by rounding-down each time value $[\sigma(s)]_v$ to the largest integer
less than or equal to it, \ie $\big\lfloor [\sigma(s)]_v \big\rfloor$.
The problem is that one may (possibly) violate some constraints when there
is a ``leap" in the rounding (\ie a difference of one unit, in the rounded value,
\wrt what one would have wanted). Anyhow, we have identified a technique
that allows us to get around this subtle case, provided that $\gamma$ is exponentially small.

\begin{Def}{Relaxed \CSTN $\Gamma'$.}\label{def:relaxed_cstn}
Let $\Gamma=( V, A, L, \Ord, \Ord{V}, P )$ be any \CSTN with integer constraints. Let $\gamma \in (0,1)$ be a rational.
Define $\Gamma'_\gamma \triangleq ( V, A'_\gamma, L, \Ord, \Ord{V}, P )$
to be a \CSTN that differs from $\Gamma$ only in the numbers
appearing in the constraints. Specifically, each constraint $( u-v \leq \delta, \ell )\in A$
is replaced in $\Gamma'_\gamma$ by a slightly relaxed constraint, $( u-v\leq \delta'_\gamma, \ell ) \in A'_\gamma$, where:
\[\delta'_\gamma\triangleq \delta+|V|\cdot\gamma. %, \text{ and } \gamma\triangleq \frac{1}{|\Sigma_P|\cdot |V|^2+1}.
\]
\end{Def}
The following two lemmata hold for any \CSTN $\Gamma$.

\begin{Lem}\label{lem:reduction_easy}
Let $\gamma$ be any rational in $(0, |V|^{-1})$. If $\Gamma$ is \oDCC, then $\Gamma'_\gamma$ is DC.
\end{Lem}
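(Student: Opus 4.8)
The plan is to convert an \emph{integral} witness of $\pi$-DC into a witness of (classical) DC for $\Gamma'_\gamma$ by a tiny, position-aware dilation of the time axis. Since $\Gamma$ is $\pi$-DC, Proposition~\ref{prop:integrality} supplies a $\pi$-ES $\sigma\in\S_\Gamma$ that is viable, $\pi$-dynamic and integral, \ie $[\sigma(s)]^t_v\in\N$ for every $s\in\Sigma_P$ and $v\in V^+_s$. From $\sigma$ I would define an execution strategy $\sigma'$ for $\Gamma'_\gamma$ by
\[ [\sigma'(s)]_v \triangleq [\sigma(s)]^t_v + [\sigma(s)]^\pi_v\cdot\gamma, \quad s\in\Sigma_P,\ v\in V^+_s, \]
recalling that $[\sigma(s)]^\pi_v=|\Ord{V}|+1$ by convention when $v\in V^+_s\setminus\Ord{V}$. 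The one quantitative fact needed is that every position value occurring here --- the conventional one included, as $|\Ord{V}|+1\le|V|$ whenever $V$ contains a non-observation event --- lies in $\{1,\dots,|V|\}$, so that $[\sigma(s)]^\pi_v\cdot\gamma\in(0,1)$ under the hypothesis $\gamma\in(0,|V|^{-1})$; both endpoints of that interval will be used. It then remains to verify that $\sigma'$ is viable and dynamic for $\Gamma'_\gamma$.

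Viability is the routine half: one fixes $s\in\Sigma_P$ and a labelled constraint of $\Gamma$, which in $\Gamma'_\gamma$ reads $\langle v-u\le\delta+|V|\gamma,\ell\rangle$ with $s(\ell)=1$; by (\emph{WD1}) both $u,v\in V^+_s$, and viability of $\sigma$ for $\Gamma$ gives $[\sigma(s)]^t_v-[\sigma(s)]^t_u\le\delta$. Since $[\sigma(s)]^\pi_v-[\sigma(s)]^\pi_u<|V|$ (both positions lying in $\{1,\dots,|V|\}$), adding the two dilation terms yields $[\sigma'(s)]_v-[\sigma'(s)]_u<\delta+|V|\gamma$. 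Hence $\sigma'(s)$ is feasible for $(\Gamma'_\gamma)^+_s$ for every $s$, so $\sigma'$ is viable.

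Dynamicity is the crux. Its heart is the history identity $\Hst([\sigma'(s)]_v,s,\sigma')=\oHst([\sigma(s)]^t_v,[\sigma(s)]^\pi_v,s,\sigma)$, to be proved for all $s\in\Sigma_P$ and $v\in V^+_s$. To prove it one fixes $\Ord_p\in\Ord{V}^+_s$ and writes $a\triangleq[\sigma(s)]^t_{\Ord_p}$, $b\triangleq[\sigma(s)]^t_v$ (both integers), $i\triangleq[\sigma(s)]^\pi_{\Ord_p}$, $j\triangleq[\sigma(s)]^\pi_v$: membership of $\Ord_p$ in the left-hand history is the inequality $a+i\gamma<b+j\gamma$, while membership in the right-hand ordered-history is the conjunction $a\le b$ and $i<j$. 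Since $a,b$ are integers whereas $|i-j|\gamma\le(|V|-1)\gamma<1$, the inequality $a+i\gamma<b+j\gamma$ holds iff $a<b$, or $a=b$ and $i<j$; and this is equivalent to ``$a\le b$ and $i<j$'' precisely because $a<b$ already forces $i<j$: when $v\in\Ord{V}$ this is the coherence of positions with execution times required in Definition~\ref{def:executionstrategy}, and when $v\notin\Ord{V}$ it is automatic, since then $j=|\Ord{V}|+1$ dominates every genuine position $i$; bijectivity of $[\sigma(s)]^\pi$ also rules out $i=j$ in the observation-vs-observation case. Granting the identity, take $s_1,s_2\in\Sigma_P$ and $v\in V^+_{s_1,s_2}$, put $\tau\triangleq[\sigma'(s_1)]_v$, and assume $\Con(\Hst(\tau,s_1,\sigma'),s_2)$; by the identity this equals $\Con(\oHst([\sigma(s_1)]^t_v,[\sigma(s_1)]^\pi_v,s_1,\sigma),s_2)$, so $\pi$-dynamicity of $\sigma$ (Definition~\ref{def:pi-dc}) gives $[\sigma(s_2)]^t_v=[\sigma(s_1)]^t_v$ and $[\sigma(s_2)]^\pi_v=[\sigma(s_1)]^\pi_v$, whence $[\sigma'(s_2)]_v=[\sigma(s_2)]^t_v+[\sigma(s_2)]^\pi_v\gamma=\tau$. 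Thus $\sigma'$ is dynamic, and $\Gamma'_\gamma$ is DC.

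The part I expect to be the main obstacle is making the history identity watertight, \ie carrying the case analysis on the pair $(\Ord_p,v)$ through cleanly: one has to separate the observation-vs-observation case --- where the coherence requirement of Definition~\ref{def:executionstrategy} is exactly what saves the equivalence and where bijectivity of the position map forbids spurious ties --- from the observation-vs-non-observation case, absorbed by the conventional position $|\Ord{V}|+1$; and one has to keep straight the two opposite roles of $\gamma$, namely being strictly positive, so that the strict order demanded by positions at a common integral time is realized on the real line, and being below $|V|^{-1}$, so that the dilation never reorders two events sitting at distinct integral times. Everything else --- viability, and deriving dynamicity of $\sigma'$ from the identity together with $\pi$-dynamicity of $\sigma$ --- is then routine.
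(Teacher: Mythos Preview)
Your proposal is correct and follows essentially the same approach as the paper: the identical construction $[\sigma'(s)]_v=[\sigma(s)]^t_v+[\sigma(s)]^\pi_v\cdot\gamma$ from an integral $\pi$-ES supplied by Proposition~\ref{prop:integrality}, with viability checked by the same $|V|\gamma$ bound. The only difference is that where the paper asserts in one line that dynamicity ``follows directly from the $\pi$-dynamicity of $\sigma$'' (after observing that the real-valued ordering coincides with $[\sigma(s)]^\pi$), you make this rigorous by proving the explicit history identity $\Hst([\sigma'(s)]_v,s,\sigma')=\oHst([\sigma(s)]^t_v,[\sigma(s)]^\pi_v,s,\sigma)$ via the integer/fractional case analysis---which is exactly the content the paper is glossing over.
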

\begin{proof}
Since $\Gamma$ is \oDCC, by Proposition~\ref{prop:integral}, there exists an integer weighted,
viable and $\pi$-dynamic, $\pi$-ES $\sigma$ for $\Gamma$.
Let us fix some rational $\gamma\in (0,|V|^{-1})$.
Define the ES $\sigma'_\gamma\in\S_{\Gamma'_\gamma}$ as follows, for every $s\in\Sigma_P$ and $v\in V^+_s$:
\[ [\sigma'_\gamma(s)]_v \triangleq [\sigma(s)]^t_v +[\sigma(s)]^\pi_v\cdot \gamma.\]
Since $[\sigma(s)]^\pi_v\leq |V|$, then: \[ [\sigma(s)]^\pi_v\cdot \gamma < |V|\cdot|V|^{-1} = 1,\]
and so the total ordering of the values $[\sigma'_\gamma(s)]_v$, for a given $s\in \Sigma_P$, coincides with $[\sigma(s)]^\pi$.
Hence, the fact that $\sigma'_\gamma$ is dynamic follows directly from the $\pi$-dynamicity of $\sigma$.
Moreover, no LTC $(u-v\leq \delta'_\gamma, \ell)$ of $\Gamma'_\gamma$ is violated in any scenario $s\in\Sigma_{P}$ since,
if $\Delta'_{\gamma, u,v} \triangleq [\sigma'_\gamma(s)]_u - [\sigma'_\gamma(s)]_v$ then:
\begin{align*}
\Delta'_{\gamma, u,v} & = \big([\sigma(s)]^t_u + [\sigma(s)]^\pi_u\cdot\gamma \big)
								- \big([\sigma(s)]^t_v + [\sigma(s)]^\pi_v \cdot\gamma \big)  \\
				& \leq [\sigma(s)]^t_u - [\sigma(s)]^t_v + |V|\cdot \gamma     \\
				& \leq \delta + |V|\cdot \gamma = \delta'.
\end{align*}
So, $\sigma'_\gamma$ is viable. Since $\sigma'_\gamma$ is also dynamic, then $\Gamma'_\gamma$ is DC.
\end{proof}

The next lemma shows that the converse direction holds as well, but for (exponentially) smaller values of $\gamma$.
\begin{Lem}\label{lem:reduction_hard} Let $\gamma$ be any rational in $(0, |\Sigma_P|^{-1}\cdot |V|^{-2})$.
If $\Gamma'_\gamma$ is DC, then $\Gamma$ is \oDCC. \end{Lem}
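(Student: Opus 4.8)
The plan is to route everything through the HyTN reduction of Theorem~\ref{thm:mainreduction}, together with the observation that the consistency of a HyTN with \emph{integer} weights is robust under an exponentially small uniform relaxation of its (finitely many) standard arcs. Concretely, I would argue as follows. Since $\Gamma'_\gamma$ is DC it is, in particular, $\pi$-DC: a viable dynamic ES of $\Gamma'_\gamma$, equipped with the positions induced by its own timing (ties among simultaneous observations resolved in the history-determined way dictated by the dynamic property and the reasonability assumptions), is viable and $\pi$-dynamic. Hence, by Theorem~\ref{thm:mainreduction} applied to $\Gamma'_\gamma$, there is a c-ps-tree $\pi_T$ such that the HyTN $\H^{\pi_T}_{0}(\Gamma'_\gamma)$ is consistent. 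Now $\pi_T$ is coherent with $\Gamma$ as well (coherence depends only on the labels, which $\Gamma'_\gamma$ inherits from $\Gamma$), and comparing Definition~\ref{def:Hepsilonzero} for $\Gamma$ and for $\Gamma'_\gamma$ one sees that the two HyTNs share \emph{exactly} the same node set $V^{\text{Ex}}_{\Gamma}$, the same weight-$0$ hyperarcs $\alpha_0(s_1;s_2;u)$ and the same weight-$0$ ordering arcs $A'_x$; the only difference is that each standard expansion arc $\langle u_s,v_s,w\rangle\in\Lambda^{\text{Ex}}_{\Gamma}$ occurs in $\H^{\pi_T}_{0}(\Gamma'_\gamma)$ with weight $w+|V|\gamma$. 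In other words, $\H^{\pi_T}_{0}(\Gamma)$ is obtained from the consistent HyTN $\H^{\pi_T}_{0}(\Gamma'_\gamma)$ by decreasing, by exactly $|V|\gamma$, the weight of each of its at most $|\Sigma_P|\,|A|$ standard expansion arcs.

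The remaining step is to transfer consistency from $\H^{\pi_T}_{0}(\Gamma'_\gamma)$ back down to $\H^{\pi_T}_{0}(\Gamma)$. Fix a feasible schedule $\phi$ of $\H^{\pi_T}_{0}(\Gamma'_\gamma)$ and, for every hyperarc $A$, select a head $\xi(A)\in H_A$ attaining the minimum, so that $\phi(t_A)\ge\phi(\xi(A))-w_A(\xi(A))$. Replacing every hyperarc $A$ by the single standard arc $\langle t_A,\xi(A),w_A(\xi(A))\rangle$ turns $\H^{\pi_T}_{0}(\Gamma'_\gamma)$ into an ordinary \STN $S_\xi$ of which $\phi$ is still a feasible schedule; hence $S_\xi$ has no negative cycle. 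Let $S'_\xi$ be the \STN obtained from $S_\xi$ by restoring on the expansion arcs their original integer weights $w$. Every arc of $S'_\xi$ has integer weight (the expansion arcs carry weights of $\Gamma$; the ordering arcs and the hyperarc selections carry weight $0$). A negative cycle of $S'_\xi$, were there one, could be taken simple, hence with at most $|V_{\H^{\pi_T}_{0}(\Gamma)}|\le|\Sigma_P|\,|V|$ arcs; its weight in $S'_\xi$ is at least (its nonnegative weight in $S_\xi$) minus $|\Sigma_P|\,|V|\cdot|V|\gamma=|\Sigma_P|\,|V|^2\gamma$, which is strictly greater than $-1$ because $\gamma<|\Sigma_P|^{-1}|V|^{-2}$. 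An integer strictly greater than $-1$ is nonnegative, a contradiction. Therefore $S'_\xi$ has no negative cycle, so it is consistent, and a fortiori $\H^{\pi_T}_{0}(\Gamma)$ is consistent. Invoking the other direction of Theorem~\ref{thm:mainreduction} then yields that $\Gamma$ is $\pi$-DC.

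I expect the delicate part to be the counting in the last paragraph: one has to be confident that a minimal inconsistency certificate of the integer-weighted HyTN $\H^{\pi_T}_{0}(\Gamma)$ can always be presented as a \emph{simple} cycle living inside a single-head selection of it, so that no more than $|V_{\H^{\pi_T}_{0}(\Gamma)}|\le|\Sigma_P|\,|V|$ of the relaxed arcs are ever traversed; this is precisely what forces the relaxation budget $|V|\gamma$ down to $O(|\Sigma_P|^{-1}|V|^{-2})$ instead of the merely polynomial $|V|^{-1}$ that was enough for Lemma~\ref{lem:reduction_easy}. A secondary point to pin down carefully is the very first step — that a viable dynamic ES of $\Gamma'_\gamma$ genuinely induces a \emph{coherent} ps-tree and can be read as a viable $\pi$-dynamic $\pi$-ES — which rests on the dynamic property together with the reasonability assumptions (WD1)--(WD3), exactly as in the $(\Rightarrow)$ direction of the proof of Theorem~\ref{thm:mainreduction}.
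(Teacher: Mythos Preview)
Your argument is correct, but it takes a genuinely different route from the paper's. The paper proves the lemma \emph{constructively}: starting from a viable dynamic ES $\sigma'_\gamma$ of $\Gamma'_\gamma$, it first shifts all values by a carefully chosen $\eta\in[0,1)$ so that every fractional part $[\sigma'_\gamma(s)]_v-\lfloor[\sigma'_\gamma(s)]_v\rfloor$ is at least $|V|\gamma$ (the measure bound $|\Sigma_P|\,|V|\cdot|V|\gamma<1$ is exactly where the hypothesis on $\gamma$ enters), then sets $[\sigma(s)]^t_v\triangleq\lfloor[\sigma'_\gamma(s)]_v\rfloor$ and lets $[\sigma(s)]^\pi$ be the order induced by $\sigma'_\gamma$, and finally checks viability by a short chain of inequalities that uses the fractional-part lower bound to absorb the $|V|\gamma$ slack. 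By contrast, you route everything through Theorem~\ref{thm:mainreduction}: pass from DC of $\Gamma'_\gamma$ to $\pi$-DC of $\Gamma'_\gamma$, extract a c-ps-tree $\pi_T$ and a consistent \HTN $\H^{\pi_T}_0(\Gamma'_\gamma)$, observe that $\H^{\pi_T}_0(\Gamma)$ differs from it only by the uniform $|V|\gamma$ relaxation on the (at most $|\Sigma_P|\,|V|$) standard expansion arcs, and transfer consistency by a head-selection plus simple-cycle integrality argument. Both arguments hit the same threshold $|\Sigma_P|^{-1}|V|^{-2}$ for the same underlying reason: the relevant certificate has at most $|V^{\text{Ex}}_\Gamma|\le|\Sigma_P|\,|V|$ arcs, each relaxed by $|V|\gamma$.

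What each approach buys: the paper's shift-and-floor proof is entirely self-contained (it never invokes Theorem~\ref{thm:mainreduction}, and it does not rely on the implication DC $\Rightarrow$ $\pi$-DC), and it is \emph{algorithmic}---the very construction $\eta\mapsto\lfloor\,\cdot\,-\eta\rfloor$ is what Algorithm~\ref{algo:check_pi-DC} executes at lines~\ref{algo:check_pi-DC:l6}--\ref{algo:check_pi-DC:l13}, so the lemma's proof \emph{is} the conversion procedure. Your argument is more structural and reuses the \HTN machinery cleanly, but it leans on the step ``DC $\Rightarrow$ $\pi$-DC for $\Gamma'_\gamma$'', which the paper asserts in the implication chain but never spells out; you correctly flag this as the point to pin down, and your sketch (history-determinacy of simultaneous observations plus a fixed tie-breaking rule) is the right idea, though it does require some care with labels and $V^+_s$. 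A second, minor cost of your route is that it is non-constructive as stated: it certifies $\pi$-DC of $\Gamma$ via consistency of $\H^{\pi_T}_0(\Gamma)$, so to recover an actual $\pi$-ES one must re-solve that \HTN rather than read the strategy off $\sigma'_\gamma$ directly.
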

\begin{proof}
Let $\sigma'_\gamma\in\S_{\Gamma'_\gamma}$ be some viable and dynamic ES for $\Gamma'_\gamma$.

Firstly, we aim at showing that, w.l.o.g., the following lower bound holds:
\[ [\sigma'_\gamma(s)]_v - \big\lfloor [\sigma'_\gamma(s)]_v \big\rfloor \geq |V|\cdot \gamma,
\text{ for \emph{all} } s\in \Sigma_{P} \text{ and } v\in V^+_s. \hspace*{.75cm} \text{(LB)} \]

This will allow us to simplify the rest of the proof.
In order to prove it, let us pick any $\eta\in [0,1)$ such that:
\[ [\sigma'_\gamma(s)]_v-\eta-k \in [0, |V|\cdot \gamma),\; \text{ for \emph{no} } v\in V,\; s\in\Sigma_P,\; k\in \Z. \]
Observe that such a value $\eta$ exists. Indeed, there are only $|\Sigma_P|\cdot |V|$
choices of pairs $(s,v)\in\Sigma_P\times V$ and each pair rules out a (circular)
semi-open interval of length $|V|\cdot \gamma$ in $[0,1)$, so the total measure of invalid values for $\eta$ in the
semi-open real interval $[0,1)$ is at most $|\Sigma_P|\cdot |V|\cdot |V|\cdot \gamma < 1$. So $\eta$ exists.

See \figref{fig:reduction_illustration} for an intuitive illustration of this fact,
where three semi-open intervals are depicted in orange, green, and yellow color (respectively);
note that the green interval and the yellow one partially overlap at the end (beginning) of the former (latter),
and that the yellow interval includes both extremes of the unit interval $[0,1)$ (\ie it is circular).

\begin{figure}[h!]
\begin{center}
\begin{tikzpicture}[scale=1.2, xscale=1.2]
\node[scale=.3, label={below left, xshift=1ex, yshift=-1ex:$[0$} ] (A) at (0,.1) {};
\node[scale=.3] (B) at (1,.1) {};
\node[scale=.3] (C) at (2,.1) {};
\node[scale=.3] (D) at (3,.1) {};
\node[scale=.3] (E) at (4,.1) {};
\node[scale=.3] (F) at (5,.1) {};
\node[-, label={below right, xshift=6ex, xshift=-1ex, yshift=-.6ex:$1)$}] (G) at (6,.1) {};

\draw[very thick, draw=black!35!blue](3.75,-.225)--(3.5,-.225)--(3.5,.225)--(3.75,.225);
	\node[node, fill=blue, scale=.4] (ETA) at (3.5,0) {};
\draw[very thick, draw=black!35!blue, scale=.2](19.925,0) arc (0:45:1.8);
\draw[very thick, draw=black!35!blue, scale=.2](19.925,0) arc (0:-45:1.8);

\shade[top color=yellow, bottom color=black!50] (0,.15) rectangle (.5,-.15);

\draw[very thick, draw=black!35!orange](1.25,-.225)--(1,-.225)--(1,.225)--(1.25,.225);
	\shade[top color=orange, bottom color=black!50](1,.15) rectangle (3,-.15);
\draw[very thick, draw=black!35!orange, scale=.2](15.1,0) arc (0:45:1.8);
\draw[very thick, draw=black!35!orange, scale=.2](15.1,0) arc (0:-45:1.8);

\draw[very thick, draw=black!35!green](4.25,-.225)--(4,-.225)--(4,.225)--(4.25,.225);
	\shade[top color=green, bottom color=black!50] (4,.15) rectangle (5.5,-.15);

\shade[top color=yellow, bottom color=black!40!green] (5.5,.15) rectangle (6,-.15);
\shade[top color=yellow, bottom color=black!50] (6,.15) rectangle (7,-.15);

\draw[very thick, draw=black!35!green, scale=.2](30.1,0) arc (0:45:1.8);
\draw[very thick, draw=black!35!green, scale=.2](30.1,0) arc (0:-45:1.8);

\draw[very thick, draw=black!35!yellow](5.75,-.225)--(5.5,-.225)--(5.5,.225)--(5.75,.225);

\draw[very thick, draw=black!35!yellow, scale=.2](2.6,0) arc (0:45:1.8);
\draw[very thick, draw=black!35!yellow, scale=.2](2.6,0) arc (0:-45:1.8);

\draw[-][thick] (0,0) -- (4,0);
\draw[-][thick] (4,0) -- (5,0);
\draw[-][thick]  (5,0) -- (7,0);

\draw [thick] (0,-.1) node[below]{} -- (0,0.1);
\draw [thick] (.5,-.1) node[below, yshift=-.25ex]{\footnotesize$\gamma$} -- (.5,0.1);
\draw [thick] (1,-.1) node[below, yshift=-.25ex]{\footnotesize$2\gamma$} -- (1,0.1);
\draw [thick] (1.5,-.1) node[below, yshift=-.25ex]{\footnotesize$3\gamma$} -- (1.5,0.1);
\draw [thick] (2,-.1) node[below]{} -- (2,0.1);
\draw [thick] (2.5,-.1) node[below, yshift=-.25ex]{$\cdots$} -- (2.5,0.1);
\draw [thick] (3,-.1) node[below, xshift=2.5ex, yshift=-.25ex]{\footnotesize $j\gamma$} -- (3,0.1);
\draw [thick] (3.5,-.1) node[above, xshift=-1ex, yshift=1ex]{\large$\eta$} -- (3.5,0.1);
\draw [thick] (4,-.1) node[below, xshift=1.75ex, yshift=-.25ex]{\footnotesize$(j+1)\gamma$} -- (4,0.1);
\draw [thick] (4.5,-.1) node[below]{} -- (4.5,0.1);
\draw [thick] (5,-.1) node[below]{} -- (5,0.1);
\draw [thick] (5.5,-.1) node[below, yshift=-.25ex]{$\cdots$} -- (5.5,0.1);
\draw [thick] (6,-.1) node[below]{} -- (6,0.1);
\draw [thick] (6.5,-.1) node[below, yshift=-.25ex]{\footnotesize$1-\gamma$} -- (6.5,0.1);
\draw [thick] (7,-.1) node[below]{} -- (7,0.1);

%\draw[thick, decoration={brace,raise=5pt},decorate]  (1,.2) -- node[above=2ex] {$|V|\cdot\gamma$} (3,.2);
\draw[thick, decoration={brace,raise=5pt},decorate]  (4,.2) -- node[above=2ex] {$|V|\cdot\gamma$} (6,.2);
\end{tikzpicture}
\end{center}
\caption{An illustration of the proof of Lemma~\ref{lem:reduction_hard}.}\label{fig:reduction_illustration}
\end{figure}
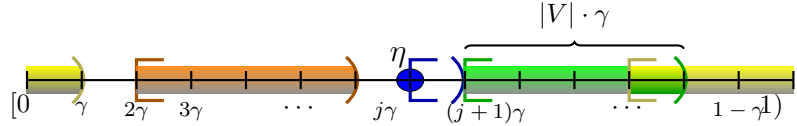

By adding some translation parameter $\eta$ to all time values $\{[\sigma'_\gamma(s)]_v\}_{v\in V, s\in\Sigma_P}$
	we can assume w.l.o.g. that $\eta=0$ holds for the rest of this proof; and thus, that (LB) holds.
Now, define $[\sigma(s)]^t_v \triangleq \big\lfloor [\sigma'_\gamma(s)]_v \big\rfloor$,
and let $[\sigma(s)]^\pi$ be the ordering induced by $\sigma'_\gamma(s)$.
Observe that $\sigma$ is a well-defined $\pi$-ES (\ie that $[\sigma(s)]^\pi$ is coherent \wrt $[\sigma(s)]^t$),
thanks to the fact that $\lfloor \cdot \rfloor$ is a monotone operator.
Since the ordering $[\sigma(s)]^\pi$ is the same as that of $\sigma'_\gamma(s)$, then $\sigma$ is $\pi$-dynamic.

It remains to prove that $\sigma$ is viable. For this, take any constraint $(u-v\leq \delta, \ell)\in A$ in $\Gamma$,
and suppose that: \[ [\sigma'_\gamma(s)]_u-[\sigma'_\gamma(s)]_v\leq \delta'_\gamma
	= \delta + |V|\cdot\gamma. \hspace*{3ex} \text{(A)}\]
If $[\sigma'_\gamma(s)]_u-[\sigma'_\gamma(s)]_v\leq \delta$,
and $\delta\in \Z$,	then clearly $[\sigma(s)]^t_u-[\sigma(s)]^t_v\leq \delta$.
So, the interesting case is when:
	\[ 0 < [\sigma'_\gamma(s)]_u - [\sigma'_\gamma(s)]_v-\delta\leq |V|\cdot\gamma.\]
For this, we observe that the following ($\star$) holds by (LB):
	\[ \big\lfloor [\sigma'_\gamma(s)]_u \big\rfloor \leq [\sigma'_\gamma(s)]_u -|V|\cdot\gamma. \hspace*{2ex} \text{($\star$)}\]
Also, it is clear that:
	\[ \big\lfloor [\sigma'_\gamma(s)]_v \big\rfloor > [\sigma'_\gamma(s)]_v - 1. \hspace*{8ex} \text{($\star\star$)}\]
Then,
\begin{align*}\hspace*{-2ex}
[\sigma(s)]^t_u - [\sigma(s)]^t_v & =
				\big\lfloor [\sigma'_\gamma(s)]_u \big\rfloor - \big\lfloor [\sigma'_\gamma(s)]_v \big\rfloor  & \text{(by def. of $[\sigma(s)]^t_x$,  $x\in\{u,v\}$)}\\
				  & < ([\sigma'_\gamma(s)]_u-|V|\cdot\gamma) - ([\sigma'_\gamma(s)]_v - 1) & \text{(by ($\star$) and ($\star\star$))}\\
					  & = ([\sigma'_\gamma(s)]_u - [\sigma'_\gamma(s)]_v) - |V|\cdot\gamma + 1 & \text{(by rewriting)} \\
						& \leq \delta'_\gamma-|V|\cdot\gamma+1 & \text{(by (A))} \\
				  & = \delta + 1. & \text{(by $\delta'_\gamma=\delta+|V|\cdot\gamma$)}
\end{align*}
Now, since we have the strict inequality $[\sigma(s)]^t_u - [\sigma(s)]^t_v < \delta + 1$,
and since $[\sigma(s)]^t_u - [\sigma(s)]^t_v\in\Z$, then $[\sigma(s)]^t_u - [\sigma(s)]^t_v \leq \delta$, as desired.
So, $\sigma$ is viable. Since $\sigma$ is both viable and $\pi$-dynamic, then $\Gamma$ is \oDCC.
\end{proof}

\figref{fig:reduction_illustration} illustrates the proof of Lemma~\ref{lem:reduction_hard},
in which a family of (circular) semi-open intervals of length $|V|\cdot\gamma$ are depicted as shaded rectangles.
Lemma~\ref{lem:reduction_hard} ensures that at least one chunk of length $l_\gamma \geq 1-|\Sigma_P|\cdot|V|^2\cdot\gamma$
is not covered by the union of those (circular) semi-open intervals, and it is therefore free to host $\eta$;
in \figref{fig:reduction_illustration}, this is represented by the blue interval,
and $\eta=j\cdot \gamma$ for some $j\in [0, \gamma^{-1})$. Also notice that $\gamma$ can be fixed as follows:
\[ \gamma \triangleq \frac{1}{|\Sigma_P|\cdot|V|^2+1}; \]
then, $l_\gamma \geq 1-|\Sigma_P|\cdot|V|^2\cdot\gamma = \gamma$.

In summary, Lemma~\ref{lem:reduction_easy} and Lemma~\ref{lem:reduction_hard} imply Theorem~\ref{thm:reduction}.
\begin{Thm}\label{thm:reduction}
Let $\Gamma$ be a \CSTN and let $\gamma\in (0, |\Sigma_P|^{-1}\cdot|V|^{-2})$.
Then, $\Gamma$ is \oDCC if and only if $\Gamma'_\gamma$ is DC.
\end{Thm}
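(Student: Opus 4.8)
The plan is to obtain Theorem~\ref{thm:reduction} as an immediate corollary of the two lemmas just established, Lemma~\ref{lem:reduction_easy} and Lemma~\ref{lem:reduction_hard}, after checking that the single hypothesis $\gamma\in(0,\,|\Sigma_P|^{-1}|V|^{-2})$ is strong enough to invoke both of them at once.

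First I would record the (trivial) numerical inclusion of intervals: since $|\Sigma_P|\geq 1$ and $|V|\geq 1$, we have $|\Sigma_P|^{-1}|V|^{-2}\leq |V|^{-1}$, hence any $\gamma\in(0,\,|\Sigma_P|^{-1}|V|^{-2})$ also lies in $(0,\,|V|^{-1})$. Consequently such a $\gamma$ meets the hypothesis of Lemma~\ref{lem:reduction_easy} (which only requires $\gamma<|V|^{-1}$) as well as that of Lemma~\ref{lem:reduction_hard} (which requires the smaller bound $\gamma<|\Sigma_P|^{-1}|V|^{-2}$). Then the forward implication, ``$\Gamma$ is $\pi$-DC $\Rightarrow$ $\Gamma'_\gamma$ is DC'', is precisely the conclusion of Lemma~\ref{lem:reduction_easy} applied with this $\gamma$, and the converse, ``$\Gamma'_\gamma$ is DC $\Rightarrow$ $\Gamma$ is $\pi$-DC'', is precisely Lemma~\ref{lem:reduction_hard}; chaining the two gives the asserted biconditional.

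There is essentially no obstacle at the level of the theorem itself: all the content lives in the two lemmas, whose proofs carry the real work. The easy direction turns an integral $\pi$-dynamic $\pi$-ES into a dynamic ES for $\Gamma'_\gamma$ by perturbing each time value by $[\sigma(s)]^\pi_v\cdot\gamma$, so that the positional order $[\sigma(s)]^\pi$ becomes a genuine strict time order while the relaxation $\delta'_\gamma=\delta+|V|\cdot\gamma$ still absorbs the total perturbation; the hard direction rounds a dynamic ES of $\Gamma'_\gamma$ down to integer times, first translating all time values by a suitable $\eta\in[0,1)$ chosen (by a measure/counting argument, possible exactly because $\gamma$ is exponentially small) so that no value lies within $|V|\cdot\gamma$ above an integer, which is what prevents a rounding ``leap'' from violating an original integer constraint. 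In writing up Theorem~\ref{thm:reduction} I would therefore only make explicit the interval containment above, so the reader sees why one hypothesis on $\gamma$ suffices to feed both lemmas, and I would add the remark already present in the surrounding text that one may concretely take $\gamma\triangleq 1/(|\Sigma_P|\cdot|V|^2+1)$, a rational lying in the required range.
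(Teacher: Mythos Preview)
Your proposal is correct and follows the paper's own approach exactly: the paper simply states that Lemma~\ref{lem:reduction_easy} and Lemma~\ref{lem:reduction_hard} together imply Theorem~\ref{thm:reduction}. Your added observation that $(0,\,|\Sigma_P|^{-1}|V|^{-2})\subseteq(0,\,|V|^{-1})$, so that a single bound on $\gamma$ triggers both lemmas, is a welcome clarification but does not depart from the paper's argument.
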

This allows us to design a simple algorithm for solving \oDCC-Checking, by reduction to \DCC-Checking,
which is named $\texttt{Check-\oDCC}()$ (Algorithm~\ref{algo:check_pi-DC}). Its pseudocode follows below.

\begin{algorithmenv}[!h]
\removelatexerror
\begin{algorithm}[H]\label{algo:check_pi-DC}
\caption{\texttt{Check-\oDCC}$(\Gamma)$}
\KwIn{a \CSTN $\Gamma\triangleq ( V, A, L, \Ord, {\Ord}V, P )$}
$\gamma\leftarrow \frac{1}{|\Sigma_P|\cdot |V|^2+1}$\label{algo:check_pi-DC:l1}\;
$A'_\gamma\leftarrow \big\{( u-v\leq \delta+|V|\cdot\gamma, \ell )
	\mid ( u-v\leq \delta, \ell )\in A \big\}$\label{algo:check_pi-DC:l2}\;
$\Gamma'_\gamma\leftarrow ( V, A'_\gamma, L, \Ord, {\Ord}V, P )$\label{algo:check_pi-DC:l3}\;
$\sigma'_\gamma\leftarrow \texttt{check\_\DCC}(\Gamma'_\gamma)$;\label{algo:check_pi-DC:l4}
	\tcp{\footnotesize see Theorem~\ref{thm:mainresult}}
\If{$\sigma'_\gamma$ is a viable and dynamic ES for $\Gamma'_\gamma$\label{algo:check_pi-DC:l5}}{
$\eta\leftarrow $ pick $\eta\in [0,1)$ as in the proof of Lemma~\ref{lem:reduction_hard}\label{algo:check_pi-DC:l6}\;
\ForEach{$(s,v)\in \Sigma_P\times V^+_s$\label{algo:check_pi-DC:l7}}{
   $[\sigma'_\gamma(s)]_v \leftarrow [\sigma'_\gamma(s)]_v - \eta$;\label{algo:check_pi-DC:l8} \tcp{\footnotesize shift by $\eta$;}
}
let $\sigma\in\Sigma_\Gamma$ be constructed as follows\label{algo:check_pi-DC:l9}\;
\ForEach{$s\in \Sigma_P$\label{algo:check_pi-DC:l10}}{
	\ForEach{$v\in V^+_s$\label{algo:check_pi-DC:l11}}{
		$[\sigma(s)]^t_v\leftarrow \Big\lfloor [\sigma'_\gamma(s)]_v \Big\rfloor$\label{algo:check_pi-DC:l12}\;
	}
	$[\sigma(s)]^\pi\leftarrow$ the ordering on $P$ induced by $\sigma'_\gamma(s)$\label{algo:check_pi-DC:l13}\;
}
\Return{$(\texttt{YES}, \sigma)$;}\label{algo:check_pi-DC:l14}
}
\Return{\texttt{NO}\label{algo:check_pi-DC:l15}}\;
\end{algorithm}
\caption{Checking \oDCC by reduction to \DCC-Checking.}
\end{algorithmenv}

\paragraph*{Description of Algorithm~\ref{algo:check_pi-DC}}
It takes in input a \CSTN~$\Gamma$. When $\Gamma$ is \oDCC, it aims at returning $(\texttt{YES}, \sigma)$,
where $\sigma\in\S_\Gamma$ is a viable and $\pi$-dynamic $\pi$-ES for $\Gamma$.
Otherwise, if $\Gamma$ is not \oDCC, then $\texttt{Check-\oDCC}()$ (Algorithm~\ref{algo:check_pi-DC}) returns \texttt{NO}.
Of course the algorithm implements the reduction described in Definition~\ref{def:relaxed_cstn},
whereas the $\pi$-ES is computed as prescribed by Lemma~\ref{lem:reduction_hard}.
At line~\ref{algo:check_pi-DC:l1}, we set $\gamma\leftarrow \frac{1}{|\Sigma_P|\cdot |V|^2+1}$.
Then, at lines~\ref{algo:check_pi-DC:l2}-\ref{algo:check_pi-DC:l3},
$\Gamma'_\gamma$ is constructed as in Definition~\ref{def:relaxed_cstn},
\ie $\Gamma'_\gamma\leftarrow ( V, A'_\gamma, L, \Ord, {\Ord}V, P )$,
where $A'_\gamma\leftarrow
\big\{( u-v\leq \delta+|V|\cdot\gamma, \ell ) \mid ( u-v\leq \delta, \ell )\in A \big\}$.
At this point, at line~\ref{algo:check_pi-DC:l5}, the
\DCC-Checking algorithm of Theorem~\ref{thm:mainresult} is invoked on input $\Gamma'_\gamma$.
Let $\sigma'_\gamma$ be its output. If $\Gamma'_\gamma$ is not DC,
then $\texttt{Check-\oDCC}()$ (Algorithm~\ref{algo:check_pi-DC}) returns \texttt{NO} at line~\ref{algo:check_pi-DC:l15}.
When $\sigma'_\gamma$ is a viable and dynamic ES for $\Gamma'_\gamma$ at line~\ref{algo:check_pi-DC:l5},
then $\texttt{Check-\oDCC}()$ (Algorithm~\ref{algo:check_pi-DC})  proceeds as follows. At line~\ref{algo:check_pi-DC:l6},
some $\eta\in [0,1)$ is computed as in the proof of Lemma~\ref{lem:reduction_hard},
\ie such that $[\sigma'_\gamma(s)]_v-\eta-k \in [0, |V|\cdot \gamma) \text{ holds for \emph{no} } v\in V, s\in\Sigma_P, k\in \Z$.
Notice that it is easy to find such $\eta$ in practice. Indeed,
one may view the real semi-open interval $[0,1)$ as if it was partitioned
into chunks (\ie smaller semi-open intervals) of length $\gamma$;
as observed in the proof of Lemma~\ref{lem:reduction_hard}, there are only $|\Sigma_P|\cdot |V|$
choices of pairs $(s,v)\in\Sigma_P\times V$, and each pair rules out a (circular)
semi-open interval of length $|V|\cdot \gamma$; therefore,
there is at least one chunk of length $l_\gamma \geq \gamma$,
within $[0,1)$, where $\eta$ can be placed, and we can easily find it just by
inspecting (exhaustively) the pairs $(s,v)\in\Sigma_P\times V$.
In fact, the algorithm underlying Theorem~\ref{thm:mainresult} always deliver an \emph{earliest} ES
(\ie one in which the time values are the smallest possible, in the space of all consistent ES{s}),
so that for each interval of length $|V|\cdot\gamma$,
the only time values that we really need to check and rule out are $|V|$ multiples of $\gamma$.
Therefore, at line~\ref{algo:check_pi-DC:l6}, $\eta$ exists and it can be easily found in time $O(|\Sigma_P|\cdot |V|^2)$.
So, at line~\ref{algo:check_pi-DC:l7}, for each $s\in \Sigma_P$ and $v\in V^+_s$,
the value $[\sigma'_\gamma(s)]_v$ is shifted to the left by setting $[\sigma'_\gamma(s)]_v \leftarrow [\sigma'_\gamma(s)]_v - \eta$.
Then, the following $\pi$-ES $\sigma\in\S_\Gamma$ is constructed at lines~\ref{algo:check_pi-DC:l9}-\ref{algo:check_pi-DC:l13}:
for each $s\in\Sigma_P$ and $v\in V^+_s$, the execution-time is set
$[\sigma(s)]^t_v\leftarrow \big\lfloor [\sigma'_\gamma(s)]_v \big\rfloor$,
and the ordering $[\sigma(s)]^\pi$ follows the ordering on $P$ that is induced by $\sigma'_\gamma(s)$.
Finally, $(\texttt{YES}, \sigma)$ is returned to output at line~\ref{algo:check_pi-DC:l14}.

To conclude, we can prove the following.
\begin{proof}[Correctess and Complexity of Algorithm~\ref{algo:check_pi-DC}]
The correctness of Algorithm~\ref{algo:check_pi-DC} follows directly from Theorem~\ref{thm:reduction}~and~Theorem~\ref{thm:mainresult}~(Item~2),
plus the fact that $\eta\in[0,1)$ can be computed easily, at line~\ref{algo:check_pi-DC:l6},
as we have already mentioned above. The (pseudo) singly-exponential time complexity of
Algorithm~\ref{algo:check_pi-DC} follows from that of Theorem~\ref{thm:mainresult} (Item~2)
plus the fact that all the integer weights in $\Gamma$ are scaled-up by a factor $1/\gamma=|\Sigma_P|\cdot |V|^2+1$ in $\Gamma'_\gamma$;
also notice that $\eta\in[0,1)$ can be computed in time $O(|\Sigma_P|\cdot|V|^2)$, as we have already mentioned.
Therefore, all in, the time complexity stated in Theorem~\ref{thm:mainresult} (Item~2)
increases by a factor $1/\gamma=|\Sigma_P|\cdot |V|^2+1$; thus, it is
$O\big(|\Sigma_P|^{4}|A|^2|V|^3 +	|\Sigma_P|^5|A||V|^4|P| + |\Sigma_P|^6|V|^5|P|\big)W$.
In the next paragraph we shall improve this time complexity by a factor $|\Sigma_P|\cdot |V|$.
\end{proof}

\subsubsection{Improving Algorithm~\ref{algo:check_pi-DC}: Reducing From \oDCC to $1$-DC.}
Here below, we present a reduction from \oDCC to $1$-DC, which basically implies Theorem~\ref{thm:mainresult_pi-DC}.
The main idea is to scale-up all the weights of the input \CSTN $\Gamma$
by a (sufficiently large, integer) factor $\zeta$, then to relax all of the corresponding LTC{s} by an additive amount $|V|$,
and finally to observe that $1$-dynamicity can be inferred from $\pi$-dynamicity (see the proof of Lemma~\ref{lem:reduction_easy1dc}).
A similar argument as that of Lemma~\ref{lem:reduction_hard} establishes the converse direction (see the proof of Lemma~\ref{lem:reduction_hard1dc}).
The resulting (scaled-up) relaxed \CSTN will be denoted by $\Gamma^*_\zeta$ and it is defined next.

\begin{Def}{Relaxed \CSTN $\Gamma^*_\zeta$.}\label{def:relaxed_cstn1dc}
Let $\Gamma=( V, A, L, \Ord, \Ord{V}, P )$ be any \CSTN with integer constraints. Let $\zeta \in \N$.
Define $\Gamma^*_\zeta \triangleq ( V, A^*_\zeta, L, \Ord, \Ord{V}, P )$
to be a \CSTN that differs from $\Gamma$ only in the numbers appearing in the constraints.
Specifically, each constraint $( u-v \leq \delta, \ell )\in A$
is replaced in $\Gamma^*_\zeta$ by a scaled-up and slightly relaxed constraint,
$( u-v\leq \delta^*_\zeta, \ell ) \in A^*_\zeta$, where:
\[
	\delta^*_\zeta\triangleq \delta\cdot\zeta+|V|. %, \text{ and } \zeta\triangleq \frac{1}{|\Sigma_P|\cdot |V|^2+1}.
\]
\end{Def}
The following two lemmata hold for any \CSTN $\Gamma$.

\begin{Lem}\label{lem:reduction_easy1dc}
Let $\zeta \geq |V|$ be any integer. If $\Gamma$ is \oDCC, then $\Gamma^*_\zeta$ is $1$-DC.
\end{Lem}
\begin{proof}
Since $\Gamma$ is \oDCC, by Proposition~\ref{prop:integral}, there exists an integer valued,
viable and $\pi$-dynamic $\pi$-ES $\sigma$ for $\Gamma$.

Let us fix some integer $\zeta\geq |V|$.
Define the ES $\sigma^*_\zeta\in\S_{\Gamma^*_\zeta}$ as follows, for every $s\in\Sigma_P$ and $v\in V^+_s$:
\[ [\sigma^*_\zeta(s)]_v \triangleq [\sigma(s)]^t_v\cdot \zeta +[\sigma(s)]^\pi_v.\]
%Since $[\sigma(s)]^\pi_v\leq |V|$ and $\zeta > |V|$, % then: \[ [\sigma(s)]^\pi_v\cdot \zeta < |V|\cdot|V|^{-1} = 1,\]
We claim that the total ordering of the values $\{[\sigma^*_\zeta(s)]_v\}_{v\in V^+_s}$, for a given $s\in \Sigma_P$,
	coincides with that of $([\sigma(s)]^t, [\sigma(s)]^\pi)$. Indeed, for any $s\in \Sigma_P$ and $u,v\in V^+_s$,
\begin{align*}
	[\sigma^*_\zeta(s)]_u < [\sigma^*_\zeta(s)]_v & \iff
		[\sigma(s)]^t_u\cdot \zeta + [\sigma(s)]^\pi_u < [\sigma(s)]^t_v\cdot \zeta + [\sigma(s)]^\pi_v \\
	 & \iff [\sigma(s)]^t_u < [\sigma(s)]^t_v \text{ or } (\, [\sigma(s)]^t_u = [\sigma(s)]^t_v \text{ and } [\sigma(s)]^\pi_u < [\sigma(s)]^\pi_v \, ),
\end{align*}
because $1\leq [\sigma(s)]^\pi \leq |V|$ and $\zeta \geq |V|$, and they are all integers as well as $[\sigma(s)]^t$, so the claim holds.

Since they are all integers, for any $s\in \Sigma_P$ and $u,v\in V^+_s$, if $[\sigma^*_\zeta(s)]_u < [\sigma^*_\zeta(s)]_v$,
	then $[\sigma^*_\zeta(s)]_v - [\sigma^*_\zeta(s)]_u \geq 1$.

At this point, the fact that $\sigma^*_\zeta$ is $1$-dynamic follows directly from the $\pi$-dynamicity of $\sigma$.

Moreover, no LTC $(u-v\leq \delta^*_\zeta, \ell)$ of $\Gamma^*_\zeta$ is violated in any scenario $s\in\Sigma_{P}$, since,
if $\Delta^*_{\zeta, u,v} \triangleq [\sigma^*_\zeta(s)]_u - [\sigma^*_\zeta(s)]_v$,
\begin{align*}
\text{then: }\;\;\;\; \Delta^*_{\zeta, u,v} & = \big([\sigma(s)]^t_u\cdot\zeta + [\sigma(s)]^\pi_u \big)
								- \big([\sigma(s)]^t_v \cdot\zeta + [\sigma(s)]^\pi_v \big) \\
				& \leq ([\sigma(s)]^t_u - [\sigma(s)]^t_v)\cdot \zeta + |V| \\
				& \leq \delta\cdot \zeta + |V| = \delta^*_\zeta.
\end{align*}
So, $\sigma^*_\zeta$ is viable. Since $\sigma^*_\zeta$ is also $1$-dynamic, then $\Gamma^*_\zeta$ is $1$-DC.
\end{proof}

The next lemma shows that the converse direction holds as well, but for (exponentially) greater values of $\zeta$.
\begin{Lem}\label{lem:reduction_hard1dc} Let $\zeta > |\Sigma_P| \cdot |V|^2$ be any integer.
If $\Gamma^*_\zeta$ is $1$-DC, then $\Gamma$ is \oDCC. \end{Lem}
\begin{proof}
Let $\sigma^*_\zeta\in\S_{\Gamma^*_\zeta}$ be some viable and $1$-dynamic ES for $\Gamma^*_\zeta$.

Firstly, we aim at showing that, w.l.o.g., the following lower bound holds:
\[ \frac{[\sigma^*_\zeta(s)]_v}{\zeta} - \left\lfloor \frac{ [\sigma^*_\zeta(s)]_v}{\zeta} \right\rfloor \geq \frac{|V|}{\zeta},
\text{ for \emph{all} } s\in \Sigma_{P} \text{ and } v\in V^+_s. \hspace*{.75cm} \text{(LB*)} \]

This will allow us to simplify the rest of the proof.
In order to prove (LB*), let us pick any $\eta\in[0,1)$ such that:
\[ \frac{[\sigma^*_\zeta(s)]_v}{\zeta}-\eta-k \in [0, \frac{|V|}{\zeta} ),\; \text{ for \emph{no} } v\in V,\; s\in\Sigma_P,\; k\in \Z. \]
Observe that such a value $\eta$ exists. Indeed, there are only $|\Sigma_P|\cdot |V|$
choices of pairs $(s,v)\in\Sigma_P\times V$ and each pair rules out a (circular)
semi-open interval of length $\frac{|V|}{\zeta}$ in $[0,1)$, so the total measure of invalid values for $\eta$ in the
semi-open real interval $[0,1)$ is at most $|\Sigma_P|\cdot |V|\cdot \frac{|V|}{\zeta} < 1$ (for $\zeta > |\Sigma_P| \cdot |V|^2$).
So $\eta$ exists.

By subtracting $\eta$ to all time values $\{\frac{[\sigma^*_\zeta(s)]_v}{\zeta}\}_{v\in V, s\in\Sigma_P}$,
	we can assume w.l.o.g. that $\eta=0$ holds for the rest of this proof; and thus, that (LB*) holds.
Now, let us define:
 \[ [\sigma(s)]^t_v \triangleq \left\lfloor \frac{[\sigma^*_\zeta(s)]_v}{\zeta} \right\rfloor \;\; \forall\, s\in\Sigma_P, v\in V^*_s; \]
and let $[\sigma(s)]^\pi$ be the total ordering induced by $\sigma^*_\zeta(s)$.
Observe that $\sigma$ is a well-defined $\pi$-ES (\ie that $[\sigma(s)]^\pi$ is coherent \wrt $[\sigma(s)]^t$),
thanks to the fact that $\lfloor \cdot \rfloor$ is a monotone operator.
Since for every $s\in\Sigma_P$ the total ordering $[\sigma(s)]^\pi$ is
	the same as that of $\sigma^*_\zeta(s)$ and $\sigma^*_\zeta$ is $1$-dynamic, then $\sigma$ is $\pi$-dynamic.

It remains to prove that $\sigma$ is viable. For this, take any constraint $(u-v\leq \delta, \ell)\in A$ in $\Gamma$,
and suppose that: \[ [\sigma^*_\zeta(s)]_u-[\sigma^*_\zeta(s)]_v\leq \delta^*_\zeta
	= \delta\cdot\zeta + |V|. \hspace*{3ex} \text{(A*)} \]
If $[\sigma^*_\zeta(s)]_u-[\sigma^*_\zeta(s)]_v\leq \delta\cdot \zeta$,
and $\delta\in \Z$,	then clearly $[\sigma(s)]^t_u - [\sigma(s)]^t_v \leq \delta$.
So, the interesting case is when:
	\[ 0 < [\sigma^*_\zeta(s)]_u - [\sigma^*_\zeta(s)]_v-\delta\cdot\zeta\leq |V|.\]
For this, we observe that the following ($*$) holds by (LB*):
	\[ \left\lfloor \frac{[\sigma^*_\zeta(s)]_u}{\zeta} \right\rfloor \leq \frac{[\sigma^*_\zeta(s)]_u}{\zeta} - \frac{|V|}{\zeta}. \hspace*{2ex} \text{($*$)}\]
Also, it is clear that:
	\[ \left\lfloor \frac{[\sigma^*_\zeta(s)]_v}{\zeta} \right\rfloor > \frac{[\sigma^*_\zeta(s)]_v}{\zeta} - 1. \hspace*{8ex} \text{($**$)}\]

Then,
\begin{align*} \hspace*{-2ex}
[\sigma(s)]^t_u - [\sigma(s)]^t_v & =
				\left\lfloor \frac{[\sigma^*_\zeta(s)]_u}{\zeta} \right\rfloor - \left\lfloor \frac{[\sigma^*_\zeta(s)]_v}{\zeta} \right\rfloor  & \text{(by def. of $[\sigma(s)]^t_{x\in\{u,v\}}$)}\\
				  & < \left( \frac{[\sigma^*_\zeta(s)]_u}{\zeta}-\frac{|V|}{\zeta}\right) - \left( \frac{[\sigma^*_\zeta(s)]_v}{\zeta} - 1 \right) & \text{(by ($*$) and ($**$))}\\
					  & = \frac{1}{\zeta}\left([\sigma^*_\zeta(s)]_u - [\sigma^*_\zeta(s)]_v\right) - \frac{|V|}{\zeta} + 1 & \text{(by rewriting)} \\
						& \leq \frac{1}{\zeta}\left( \delta\cdot\zeta+|V|\right) - \frac{|V|}{\zeta} + 1 & \text{(by (A*))} \\
				  & = \delta + 1. &
\end{align*}
Now, since we have the strict inequality $[\sigma(s)]^t_u - [\sigma(s)]^t_v < \delta + 1$,
and since $[\sigma(s)]^t_u - [\sigma(s)]^t_v\in\Z$, then $[\sigma(s)]^t_u - [\sigma(s)]^t_v \leq \delta$, as desired.
So, $\sigma$ is viable. Since $\sigma$ is both viable and $\pi$-dynamic, then $\Gamma$ is \oDCC.
\end{proof}

In summary, Lemma~\ref{lem:reduction_easy1dc} and Lemma~\ref{lem:reduction_hard1dc} imply Theorem~\ref{thm:reduction1dc}.
\begin{Thm}\label{thm:reduction1dc}
Let $\Gamma$ be a \CSTN and let $\zeta > |\Sigma_P|\cdot|V|^2$ be any integer.
Then, $\Gamma$ is \oDCC if and only if $\Gamma^*_\zeta$ is $1$-DC.
\end{Thm}

Notice that, in Theorem~\ref{thm:reduction1dc},	it is fine to pick $\zeta \triangleq |\Sigma_P|\cdot|V|^2+1$.

Concerning the shift parameter $\eta$ in the proof of Lemma~\ref{lem:reduction_hard1dc},
it is not difficult to compute a correct $\eta$ in practice. Indeed, in this case,
one may view the real semi-open interval $[0,1)$ as if it was partitioned into chunks of length $1/\zeta$.
Since there are only $|\Sigma_P|\cdot |V|$ choices of pairs $(s,v)\in\Sigma_P\times V$, and each pair rules out a (circular)
semi-open interval of length $\frac{|V|}{\zeta}$, there is at least one chunk within $[0,1)$ of length,
	\[ l_\zeta \geq 1 - |\Sigma_P|\cdot |V|\cdot\frac{|V|}{\zeta} = \frac{1}{\zeta}, \]
where $\eta$ can be placed, and we can easily find it just by inspecting (exhaustively) the pairs $(s,v)\in\Sigma_P\times V$.
For each interval of length $\frac{|V|}{\zeta}$, the only time values that we really
	need to check and rule out are $|V|$ multiples of $\frac{1}{\zeta}$.

Therefore, $\eta$ can be found in time $O(|\Sigma_P|\cdot |V|^2)$.

In summary, in order to check \oDCC by reduction to $1$-\DCC, Algorithm~\ref{algo:check_pi-DC} can be modified as follows:

\paragraph{Checking \oDCC by Reduction to $1$-\DCC (Modifying Algorithm~\ref{algo:check_pi-DC})}
At line~\ref{algo:check_pi-DC:l1}, set $\zeta\leftarrow |\Sigma_P|\cdot |V|^2+1$.
Then, at lines~\ref{algo:check_pi-DC:l2}-\ref{algo:check_pi-DC:l3},
$\Gamma^*_\zeta$ is constructed as in Definition~\ref{def:relaxed_cstn1dc},
\ie $\Gamma^*_\zeta\leftarrow ( V, A^*_\zeta, L, \Ord, {\Ord}V, P )$,
where $A^*_\zeta\leftarrow \{( u-v\leq \delta\cdot\zeta+|V|, \ell ) \mid ( u-v\leq \delta, \ell )\in A \}$.
At this point, at line~\ref{algo:check_pi-DC:l5}, the
$1$-\DCC-Checking algorithm of Theorem~\ref{thm:mainresult} (Item~1) is invoked on input $( \Gamma^*_\zeta,1)$.
Let $\sigma^*_\zeta$ be its output. If $\Gamma^*_\zeta$ is not DC,
then $\texttt{Check-\oDCC}()$ (Algorithm~\ref{algo:check_pi-DC}) returns \texttt{NO} at line~\ref{algo:check_pi-DC:l15}.
Instead, when $\sigma^*_\zeta$ is a viable and dynamic ES for $\Gamma^*_\zeta$ at line~\ref{algo:check_pi-DC:l5},
then $\texttt{Check-\oDCC}()$ (Algorithm~\ref{algo:check_pi-DC}) proceeds as follows. At line~\ref{algo:check_pi-DC:l6},
some $\eta\in [0,1)$ is computed as in the proof of Lemma~\ref{lem:reduction_hard1dc},
\ie such that $[\sigma^*_\zeta(s)]_v/\zeta-\eta-k \in [0, |V|/\zeta ) \text{ holds for \emph{no} } v\in V, s\in\Sigma_P, k\in \Z$.
So, at line~\ref{algo:check_pi-DC:l7}, for each $s\in \Sigma_P$ and $v\in V^+_s$,
the value $[\sigma^*_\zeta(s)]_v/\zeta$ is shifted by $-\eta$.
Then, the following $\pi$-ES $\sigma\in\S_\Gamma$ is constructed at lines~\ref{algo:check_pi-DC:l9}-\ref{algo:check_pi-DC:l13}:
for each $s\in\Sigma_P$ and $v\in V^+_s$, the execution-time is set
$ [\sigma(s)]^t_v \triangleq \big\lfloor [\sigma^*_\zeta(s)]_v/\zeta \big\rfloor$,
and the ordering $[\sigma(s)]^\pi$ follows the ordering on $P$ induced by $\sigma^*_\zeta(s)$.
Finally, $(\texttt{YES}, \sigma)$ is returned to output at line~\ref{algo:check_pi-DC:l14}.

\begin{proof}[Proof of Theorem~\ref{thm:mainresult_pi-DC}]
	The correctness of the algorithm above follows directly from Theorem~\ref{thm:reduction1dc}~and~Theorem~\ref{thm:mainresult}~(Item~1),
	The (pseudo) singly-exponential time complexity follows from that of Theorem~\ref{thm:mainresult} (Item~1)
	plus the fact that all the integer weights in $\Gamma$ are scaled-up by a factor $\zeta=|\Sigma_P|\cdot |V|^2+1$ in $\Gamma^*_\zeta$;
	also recall that $\eta\in[0,1)$ can be computed in time $O(|\Sigma_P|\cdot|V|^2)$, as we have already mentioned.
	In summary, the time complexity stated in Theorem~\ref{thm:mainresult} (Item~1)
	increases by a factor $|\Sigma_P|\cdot |V|^2$.
	Note this improves the time complexity of Algorithm~\ref{algo:check_pi-DC} by a factor $|\Sigma_P|\cdot |V|$.
\end{proof}

\section{Conclusion}\label{sect:conclusion}
The notion of $\varepsilon$-DC has been introduced and analysed in~\citet{CR15} where an algorithm was also given to check
whether a \CSTN is $\varepsilon$-DC. By the interplay between $\varepsilon$-DC and the standard notion of DC,
also disclosed in~\citet{CR15}, this delivered the first (pseudo) singly-exponential time algorithm
checking whether a \CSTN is DC (essentially, DC-Checking reduces to $\varepsilon$-DC-Checking for a suitable value of $\varepsilon$).
In this paper, we proposed and formally defined \oDCC, a natural and sound notion of DC for \CSTN{s} in which the planner is
allowed to react instantaneously to the observations that are made during the execution.
A neat counter-example shows that \oDCC with instantaneous reaction-time is not just the special
case of $\varepsilon$-DC with $\varepsilon = 0$. Therefore, in this work, we offered the first sound-and-complete \oDCC-Checking algorithm for \CSTN{s}.
The time complexity of the procedure is still (pseudo) singly-exponential in the number $|P|$ of propositional letters.
The solution is based on a simple reduction from \oDCC-Checking to DC-Checking of \CSTN{s};
finally, we point out a reduction from \oDCC to $1$-\DCC, this allows us to further improve
the time complexity of the proposed algorithm by a factor $|\Sigma_P|\cdot |V|$.

\paragraph{Acknowledgments}
This work was supported by the \emph{Department of Computer Science, University of Verona, Italy}.

\section*{References}
\bibliographystyle{elsarticle-num-names-alpha}
\bibliography{biblio}
\end{document}